\newtheorem{theorem}{Theorem}[section]
\newtheorem{corollary}[theorem]{Corollary}
\newtheorem{lemma}[theorem]{Lemma}
\newtheorem{fact}[theorem]{Fact}
\newtheorem{invariant}[theorem]{Invariant}
\theoremstyle{definition}
\newcommand{\mcO}{\mathcal{O}}
\newcommand{\OPT}{\mathrm{OPT}}
\newcommand{\OPTS}{\ensuremath{\operatorname{OPT_S}}}
\newcommand{\OPTD}{\ensuremath{\operatorname{OPT}}}
\newcommand{\ALG}{\ensuremath{\operatorname{ALG}}}
\newcommand{\SODA}{\ensuremath{\operatorname{\operatorname{\texttt{LEARN}}}}}
\newcommand{\FLOW}{\operatorname{\operatorname{\texttt{FLOW}}}}
\def\cost{\operatorname{cost}}
\def\favg{\operatorname{free-avg}}
\def\savg{\operatorname{size-avg}}
\def\ffv{\operatorname{ffv}}
\def\charge{\operatorname{charge}}
\def\avg{\operatorname{charge-avg}}
\long\def\ruslan #1{{\color{blue}\bfseries\mathversion{bold}[#1 -- Ruslan]\mathversion{normal}}}
\title{Tight Bounds for Online Balanced Partitioning in the Generalized Learning Model\footnote{Research supported by the German Research Foundation (DFG), grants 47002938 (FlexNet) and SPP 2378 (ReNO).}}
\date{}
\author[1]{Harald R\"acke}
\author[2]{Stefan Schmid}
\author[1]{Ruslan Zabrodin}
\affil[1]{Technical University of Munich}
\affil[2]{Technical University of Berlin}
\begin{document}
\maketitle

\begin{abstract}
Resource allocation in distributed and networked systems such as the Cloud is 
becoming increasingly flexible, allowing these systems to dynamically
adjust toward the workloads they serve, in a demand-aware manner.

Online balanced partitioning is a fundamental optimization problem
underlying such self-adjusting systems.
We are given a set of $\ell$ servers. On each server we can schedule up to $k$ processes simultaneously (the \emph{capacity}); overall there are $n=k\cdot\ell$ processes. 
The demand is described as a sequence of requests 
$\sigma_t=\{p_i, p_{j}\}$, which means that the two processes  $p_i,p_{j} \in P$ communicate.
Whenever an algorithm learns about a new request, it is
allowed to move processes from one server to another, which however costs
1 unit per process move. If the processes (the endpoints of the request) are on different
servers, it further incurs a communication cost of 1 unit for this request.
The objective is to minimize the competitive ratio: the cost of serving such a request sequence compared to the cost incurred by an optimal offline algorithm. 

Henzinger et al. (at SIGMETRICS’2019) introduced a   learning variant of this problem where the cost of an online algorithm is compared to the cost of a static offline algorithm that does not perform any communication, but which simply learns the communication graph and keeps the discovered connected components together. This problem variant was recently also studied at SODA'2021.

In this paper, we consider a more general learning model (i.e., stronger adversary), where the offline algorithm is not restricted 
to keep connected components together.   
Our main contribution are tight bounds for this problem.
In particular, we present two deterministic online algorithms: (1)~an online algorithm with competitive ratio $\mcO(\max(\sqrt{k\ell \log k}, \ell \log k))$ and augmentation $1+\epsilon$; (2)~an online algorithm with competitive ratio $\mcO(\sqrt{k})$ and augmentation $2+\epsilon$.
We further present lower bounds showing optimality of these bounds.


\end{abstract}

\section{Introduction}
\label{sec:intro}

The performance of many distributed applications (e.g., deep learning models such as GPT-4) critically
depends on the performance of the underlying communication networks (e.g., during distributed training) \cite{arzani2023rethinking,mogul2012we}.
Especially large flows (also known as elephant flows) may consume significant network resources if communicated across multiple hops;  resources which would otherwise be available for additional flows~\cite{mellette2017rotornet,griner2021cerberus}.
A particularly innovative approach to improve the communication efficiency is to adjust the network resources dynamically (e.g., using virtualization), in a demand-aware manner: by moving frequently communicating vertices (e.g., processes or virtual machines) topologically closer (e.g., collocating them on the same server), transmissions over the network can be reduced.

When and how to collocate vertices however is an algorithmically challenging problem, as it introduces a tradeoff: as moving a vertex to a different server comes with overheads, it should not be performed too frequently and only when the moving cost can be amortized by the more efficient communication later. Devising good migration strategies is particularly difficult in the realm of online algorithms and competitive analysis, where the demand is not known ahead of time. 

A fundamental algorithmic problem underlying such self-adjusting infrastructures is known as 
online balanced (graph) partitioning, which has recently been studied intensively \cite{soda21,apocs21repartition,sidma19,infocom21repartitioning,disc16,computing18,sigmetrics19learn,netys17learn}, see also the recent SIGACT News article on the problem \cite{sigact}. 
In a nutshell (details will follow), in this model, we are given a set of $\ell$ servers, each of capacity $k$, which means that we can schedule up to $k$ processes simultaneously on each server. There are $n=k\cdot\ell$ processes in total. 
We need to serve a sequence of communication requests 
$\sigma_t=\{p_i, p_{j}\}$ with $p_i,p_{j} \in P$ between processes. The sequence is revealed one-by-one to an online algorithm. Upon a new request, the  algorithm is allowed to move processes from one server to another, which costs
1 unit per process move. It then needs to serve the request: if the processes (the endpoints of the request) are on different
servers, it further incurs a communication cost of 1 unit for this request.
The objective is to minimize the competitive ratio: the cost of serving such a request sequence compared to the cost incurred by an optimal offline algorithm. 

In this paper we are interested in the learning variant of this problem, initially introduced by Henzinger et al. \cite{sigmetrics19learn} 
at SIGMETRICS’2019 and later also studied at SODA'2021 \cite{soda21}. 
In this variant, the cost of an online algorithm is compared to the cost of a static offline algorithm that is not allowed to perform any communication. Rather, requests must induce a communication graph which can be perfectly partitioned among the servers:
if we define a demand graph $G_\sigma$ over the set of processes and edges $(p_i,p_j)$ for every request between $p_i$ and $p_j$
within $\sigma$, then the connected components of $G_\sigma$ can be mapped to servers so that there is no inter-server edge. 

For this setting, it is known that no online algorithm achieving a low competitive ratio can exist without resource augmentation:
there is a lower bound of $\Omega(n)$ for any deterministic online algorithm~\cite{disc16,sidma19}.
In fact, currently the best known upper bound for this problem variant is nearly quadratic, namely  $O(n^{23/12})$ ~\cite{stacs24}.


\subsection{Our Contributions and Novelty}

We revisit the online graph partitioning problem in the learning model. Compared to prior work, we consider a more powerful adversary which is not restricted to keep processes on the same server once they communicated; rather, algorithms are allowed to communicate across servers.
This model is not only more practical but also motivated as a next step toward solving the general online balanced partitioning problem. 
Since the lower bound of $\Omega(n)$ for algorithms without resource augmentation of course also holds in this more general adversarial model, we in this paper study models with augmentation, as usual in the related work. 

Our main contribution are intriguing tight bounds for this problem. 
We first show that the more general model is significantly more challenging as the lower bounds increase substantially. In particular, we show that even under arbitrary resource augmentation and even when compared to a static offline algorithm, any deterministic online algorithm has a competitive ratio of at least  $\Omega(\sqrt{k})$. 

\begin{restatable}{theorem}{LowerBoundOneEps}\label{theorem:lower_bound_unlimited_aug}
For any deterministic online algorithm $\ALG$ with unlimited resource
augmentation, there exists a request sequence $\sigma$, such that
$\ALG(\sigma)\ge\Omega(\sqrt{k})\cdot\OPTS(\sigma)$, where $\OPTS$ is the optimum static offline algorithm. 
\end{restatable}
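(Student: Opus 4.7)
The plan is an adaptive adversarial construction showing that, even with unlimited per-server capacity, no deterministic online algorithm can beat the $\Omega(\sqrt{k})$ threshold against the best static partition. I would proceed in phases, making each phase force $\Omega(\sqrt{k})$ cost on $\ALG$ while contributing only $O(1)$ to the cost of some fixed static partition $\pi$ revealed in hindsight; summing over $T$ phases then yields the claimed ratio $\ALG(\sigma)/\OPTS(\sigma) \ge \Omega(\sqrt{k})$.

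In each phase, the adversary inspects $\ALG$'s current configuration and picks a set $S$ of $\Theta(\sqrt{k})$ processes that are spread across at least two of $\ALG$'s servers. It then issues $\Theta(\sqrt{k})$ requests among members of $S$. The crucial observation is that augmentation only raises per-server capacity; it does not change the fact that each request whose endpoints lie on different servers costs one unit (either in communication or by forcing at least one move). Thus $\ALG$ pays $\Omega(\sqrt{k})$ per phase no matter what strategy it uses. For the $\OPTS$ side, I would design $\sigma$ so that every phase's $S$-set fits within one part of a single fixed balanced partition $\pi$: since $|S|=\Theta(\sqrt{k})\ll k$, roughly $\sqrt{k}$ phases' worth of $S$-sets pack into a single $\pi$-server of capacity $k$. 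Intra-$S$ requests then cost $\pi$ nothing, and one may add a controlled number of ``bridge'' requests between parts of $\pi$ so that $\OPTS$ pays $\Theta(T)$ in total, preserving a ratio of $\Omega(\sqrt{k})$.

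The main obstacle I anticipate is coordinating the two bounds. Because the adversary chooses $S$ adaptively based on $\ALG$'s state, the union of all $S$-sets could a priori be arbitrary, which would defeat a naive pool-packing argument for $\pi$. I would handle this either by restricting the adversary to draw each $S$ from a fixed working pool whose structure is compatible with $\pi$ (a direct combinatorial construction), or by a Yao-style averaging argument over randomly chosen balanced partitions $\pi$, showing that for the worst-case $\ALG$ trajectory the expected cut of a random $\pi$ is $O(T)$ so that some fixed $\pi$ achieves this deterministically. I expect this simultaneity of the per-phase $\ALG$ lower bound and the hindsight $\OPTS$ upper bound to be the principal technical difficulty in the full proof, and the careful selection of $|S|=\Theta(\sqrt{k})$ (so that $S$ is both small enough to fit alongside many siblings on one $\pi$-server and large enough to force $\sqrt{k}$ unavoidable unit costs on $\ALG$) is what pins the ratio at exactly $\sqrt{k}$.
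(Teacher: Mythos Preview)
Your proposal has a genuine gap. The core problem is on the $\OPTS$ side: you budget $O(1)$ per phase for the static partition $\pi$, but $\OPTS$ must also pay \emph{migration} to reach $\pi$ from the initial configuration. Since each $S$ is, by your own choice, spread across at least two servers in $\ALG$'s current configuration---and $\ALG$ starts in the initial configuration---collocating that $S$ inside one $\pi$-part costs $\Theta(\sqrt{k})$ migration for $\OPTS$. Over $T\approx\sqrt{k}$ phases this is $\Theta(k)$ migration, which matches the $\Theta(T\sqrt{k})=\Theta(k)$ you charge $\ALG$ and collapses the ratio to $O(1)$. Restricting the adversary to a fixed pool does not help: if the pool already sits on one server initially (so $\OPTS$'s migration is cheap), then $\ALG$ also starts with it collocated and no spread-out $S$ exists to launch phase~1. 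The Yao-averaging idea addresses only communication cost across $\pi$, not this migration term. There is also a second obstacle specific to unlimited augmentation: after the very first request, $\ALG$ may move \emph{all} processes to one server at cost $O(n)$, after which every subsequent request is free and no spread-out $S$ can ever be found again---so your per-phase lower bound cannot be iterated for arbitrarily many phases.

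The paper's proof avoids both issues with a single structured sequence rather than generic phases. It fixes a \emph{line} $p_1,\dots,p_k$ with $p_1,\dots,p_{k/2}$ initially on Server~1 and the rest on Server~2, and only ever requests consecutive pairs $\{p_i,p_{i+1}\}$, adaptively choosing an $i$ for which $p_i$ and $p_{i+1}$ are currently separated. Then $\ALG(\sigma)\ge\max\{|\sigma|,k/2\}$, since stopping the sequence requires collocating the whole line. The line structure is exactly what makes $\OPTS$ cheap: a static solution needs to cut only \emph{one} edge, and among the $2s+1$ edges in the range $i\in\{k/2-s,\dots,k/2+s\}$ some edge carries at most $|\sigma|/(2s)$ requests; $\OPTS$ shifts at most $2s$ processes to place the cut there and pays $O(s+|\sigma|/s)$ total. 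Taking $s=\Theta(\sqrt{k})$ gives the $\Omega(\sqrt{k})$ ratio. The missing idea in your plan is precisely this path structure, which lets $\OPTS$'s cost scale with the load on a single least-used edge plus an $O(\sqrt{k})$ shift, rather than with the $\Theta(k)$ migration needed to collocate many disjoint $\sqrt{k}$-sized sets.
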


The proof is based on a key observation regarding a scenario with two sets
of processes $A$ and $B$ that initially are located on distinct servers. 
Given any deterministic algorithm $\ALG$ we construct a request sequence such
that $\ALG$
performs poorly on requests between processes from $A\cup B$. 

We then consider scenarios with only small resource augmentation and show an even higher lower bound, which again even holds when compared to a static offline algorithm:

\begin{restatable}{theorem}{LowerBoundsOneEps}\label{theorem:lower_bounds_one_eps}
For any $k \ge 10$, any constant $\epsilon \le \frac{1}{10}$, and any deterministic algorithm $\ALG$ with augmentation $1+\epsilon$, there is a request sequence $\sigma$, such that $\ALG(\sigma) 
\ge\Omega(\max(\sqrt{k\ell \log k}, \ell \log k))\cdot\OPTS(\sigma)$, where $\OPTS$ is the optimum static offline algorithm. 
\end{restatable}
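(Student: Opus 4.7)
I would prove the lower bound by constructing two adversarial sequences, one yielding $\Omega(\ell \log k)$ and the other $\Omega(\sqrt{k\ell \log k})$; the claim follows by choosing whichever is worse for the given $k,\ell$. In both constructions the adversary commits in advance to a target partition $C_1,\dots,C_\ell$ of the $n=k\ell$ processes into $\ell$ groups of exactly $k$, one group per server. The static offline $\OPTS$ is defined by this partition and incurs only an $O(1)$ setup cost (or none, depending on the accounting), so the competitive ratio reduces essentially to $\ALG(\sigma)$.

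For the $\Omega(\ell \log k)$ construction, the adversary proceeds in $\Theta(\log k)$ phases. At the start of each phase, since every ALG-server has capacity only $(1+\epsilon)k\le 1.1k$ while each target class $C_j$ has size $k$, the ALG configuration cannot simultaneously hold each $C_j$ on server $j$; consequently there must exist at least $\Omega(\ell)$ \emph{bad pairs} $(u,v)$ with $u,v\in C_j$ for some $j$ but on distinct ALG servers. Requesting these pairs forces ALG to pay $\Omega(\ell)$ per phase, either as cross-server communication or as corrective moves. The $\log k$ factor arises from a halving argument that confines the bad-pair supply in successive phases to progressively smaller nested sub-classes; since $\epsilon$ is bounded away from $1$, roughly $\Theta(\log k)$ halvings can be sustained before the supply is exhausted.

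For the $\Omega(\sqrt{k\ell \log k})$ construction, I would combine the two-server $\Omega(\sqrt{k})$ argument behind Theorem~\ref{theorem:lower_bound_unlimited_aug} with the $\log k$-phase halving above. Within each pair of servers the two-server construction forces $\Omega(\sqrt{k})$ cost per unit of $\OPTS$; running the arguments concurrently across the $\ell$ servers while nesting them inside the $\log k$-level halving then yields total ALG cost $\Omega(\sqrt{k\ell \log k})$ against $\OPTS=O(1)$. The $\sqrt{k}$ factor enters exactly as in the two-server case: only about $\sqrt{k}$ bad pairs between two misaligned $k$-classes can be extracted before ALG's repositioning budget would neutralize the adversary's supply within that pair.

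The main obstacle is consistency: the target partition is chosen before the sequence begins, but the adversary's bad-pair selections must react adaptively to ALG's moves. The crux is an invariant stating that after any sequence of ALG moves, within each class $C_j$ at most a bounded fraction of the processes can sit on server $j$ of ALG, so that $\Omega(k)$ cross-server pairs remain available inside that class. The $(1+\epsilon)$-augmentation constraint guarantees this invariant; a substantially larger augmentation would let ALG dissolve it, which is why the hypothesis $\epsilon\le \tfrac{1}{10}$ appears. Pinning down the $\log k$ factor precisely will likely require a potential argument that tracks the ALG/OPTS alignment and shows that each phase at most halves the potential, so that $\Theta(\log k)$ phases are forced before ALG can possibly align with the target partition.
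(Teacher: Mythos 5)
There is a genuine gap at the heart of your construction: the source of $\ALG$'s forced cost versus $\OPTS$'s cheapness. You commit to a target partition $C_1,\dots,C_\ell$ in advance and assert both that $\OPTS$ pays $O(1)$ and that $\ALG$ ``cannot simultaneously hold each $C_j$ on server $j$'' because of the $(1+\epsilon)k$ capacity. These two claims cannot both hold: if the $C_j$ coincide with the initial server contents, $\ALG$ holds them all trivially (it is the initial configuration) and never pays anything; if they do not coincide, $\OPTS$ must migrate up to $\Theta(k)$ processes to realize the partition and is far from $O(1)$. The paper resolves exactly this tension with a dedicated gadget, the \emph{special component}: a path-structured component of $6\epsilon k$ processes straddling two servers, built from consecutive pairs as in Theorem~\ref{theorem:lower_bound_unlimited_aug}, together with $\epsilon k$ \emph{padding} processes per server. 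The special component is what prevents $\ALG$ from sitting still (it occupies $6\epsilon k$ slots on whichever server hosts it, so some color is always ``spread out'' and can be attacked), while $\OPTS$ handles it at cost only $12\min(\sqrt{R},\epsilon k)$, where $R$ is the number of special-component requests, by cutting it at a rarely-requested consecutive edge and rebalancing with padding. Your proposal has no analogue of this mechanism, so the invariant you call the crux (``$\Omega(k)$ cross-server pairs remain available'') is unjustified.

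Relatedly, your route to $\Omega(\sqrt{k\ell\log k})$ does not produce that form. Running $\sqrt{k}$-type two-server arguments ``concurrently across $\ell$ servers nested inside $\log k$ phases'' would multiply factors, not take a square root of their product. In the paper the bound is \emph{not} obtained from a second construction: there is a single sequence in which $\ALG(\sigma)\ge R+\epsilon k\ell\log(\epsilon k)$ (the second term coming from iteratively \emph{doubling} colored components via matchings of same-colored components on different servers, each iteration costing $\ALG$ at least $\epsilon k$) and $\OPTS(\sigma)\le 12\min(\sqrt{R},\epsilon k)$. The $\sqrt{k\ell\log k}$ term then falls out of a case analysis on the $\ALG$-determined quantity $R$: when $R\le\epsilon k\ell\log(\epsilon k)$ the ratio is at least $\epsilon k\ell\log(\epsilon k)/(12\sqrt{R})\ge\frac{1}{12}\sqrt{\epsilon k\ell\log(\epsilon k)}$, i.e., it is a geometric mean between $\ALG$'s component-merging cost and $\OPTS$'s square-root cost for the special component. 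Note also that $\OPTS$ is permitted to be as large as $\Theta(\sqrt{k\ell\log k})$ or $\Theta(\epsilon k)$ here; insisting on $\OPTS=O(1)$ is not only unachievable but unnecessary.
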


Also this lower bound shows that the more general learning model is challenging. 
In \cite{soda21}, it was shown that in the standard learning model
deterministic algorithms with  resource augmentation $(1+\epsilon)$ have a
competitive ratio of at least $\Omega(\ell \log k)$, but also an upper bound of
$\mcO(\ell \log k)$ has been derived.

The proof of this lower bound is fairly technical and builds upon ideas of our simpler lower bound above
as well as techniques from the lower bound in \cite{soda21}.


We then present almost optimal online algorithms which match our lower bounds.
In our approach, we formulate the generalized learning problem as a linear program. 
We maintain a dual solution to the LP and bound the cost of our algorithm in terms of the dual profit.
For this we reformulate the generalized learning problem as a graph problem: for a given request sequence 
we define a time-expanded graph. 
A key novelty of our approach is that this time-expanded graph is used in an online setting, where dual solutions
are computed on the fly. 
Our online algorithms then consist of two parts, a clustering part and a scheduling part. 
The clustering part maintains a partition of the process set into disjoint pieces (clusters). 
The scheduling part of an algorithm then decides which
cluster is mapped to which server, however, without splitting clusters across different servers. 

Based on these algorithmic insights, we first show that very little augmentation (namely $1+\epsilon$) is sufficient to achieve a competitive ratio of $\mcO(\max(\sqrt{k\ell \log k}, \ell \log k))$. 
In particular, we prove the following theorem:
\begin{restatable}{theorem}{UpperBoundsOneEps}
    There exists a deterministic online algorithm for the generalized learning
    problem with competitive ratio $\mcO(\max(\sqrt{k\ell \log k}, \ell \log
    k))$ and augmentation $1+\epsilon$.
\end{restatable}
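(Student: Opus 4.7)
The plan is to use a primal-dual framework based on the LP formulation of the generalized learning problem sketched in the introduction. I would design an online algorithm consisting of a clustering component and a scheduling component, and maintain on the fly a dual solution whose profit simultaneously lower bounds $\OPTS$ and pays for every action of the algorithm up to a factor $\mcO(\max(\sqrt{k\ell\log k},\ell\log k))$. To set things up, I would first construct the time-expanded graph: each process $p$ has a vertex $(p,t)$ at every time step $t$, with horizontal edges $(p,t)\to(p,t+1)$ representing the process staying on its server, vertical edges representing migrations (at cost $1$), and request edges connecting the endpoints of $\sigma_t$. A feasible schedule corresponds to an assignment of vertices to servers of capacity $(1+\epsilon)k$ at each time step, and the dual assigns profits to requests in a way that witnesses lower bounds on $\OPTS$.

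The clustering component maintains a partition of the $n$ processes into clusters of sizes stratified into $\mcO(\log k)$ levels, where a level-$i$ cluster has size roughly $2^i$. Whenever a request $\sigma_t=\{p_i,p_j\}$ crosses two clusters $C_1,C_2$, the algorithm raises the dual variable of that request and, once the accumulated dual profit on cross-cluster requests between $C_1$ and $C_2$ reaches a threshold proportional to $\min(|C_1|,|C_2|)$, merges them. Since each process rises through at most $\mcO(\log k)$ levels before being absorbed into a cluster of maximum size $k$, the total merging cost can be charged to the dual profit up to a factor of $\mcO(\log k)$.

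The scheduling component maps clusters to servers of augmented capacity $(1+\epsilon)k$ without splitting any cluster across servers. With this augmentation, a first-fit-style packing argument guarantees that at most $\ell$ servers suffice at every moment; when merges or growth force a re-assignment, a bounded number of cluster migrations restores feasibility, and the cost of each such migration equals the size of the moved cluster. I would amortize each migration against a share of the dual profit accumulated by requests internal to the moved cluster since its creation. The $\sqrt{k\ell\log k}$ bound should arise from balancing cluster size against the accumulated dual potential that justifies a migration, whereas the $\ell\log k$ term dominates in the regime $k<\ell\log k$, where the per-server scheduling overhead becomes the bottleneck.

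The hard part will be proving dual feasibility, i.e., that the dual solution constructed online never violates any dual constraint, which is equivalent to showing that no single primal action (a migration along a time-expanded edge or a cross-server communication for a request) is charged profit exceeding its unit cost. This requires carefully coordinated merge and migration thresholds across the $\mcO(\log k)$ cluster levels. A secondary challenge is the two-regime analysis that combines the cluster-level amortization with the server-level packing argument: the augmentation slack $\epsilon$ must be large enough to absorb the packing overhead but small enough that the dual profit is not unduly inflated. Once these ingredients are in place, dividing the algorithm's total cost by the dual profit and taking the worst case over the two regimes yields the claimed competitive ratio.
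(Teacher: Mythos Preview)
Your proposal diverges substantially from the paper's proof, and the central mechanism you sketch does not obviously produce the claimed bound.

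The paper's algorithm is a two-phase construction. Phase~1 runs the $\FLOW$ procedure on the whole process set $V$ with a stopping parameter $Z=\min(\sqrt{k\ell\log k},\epsilon k)$. This procedure maintains edge-disjoint paths $\mathcal{P}$ in the time-expanded graph (each connecting two distinct server nodes), and its cost satisfies $\FLOW(\sigma)\le |F|^2\le 2Z\cdot|\mathcal{P}|\le 2Z\cdot\OPT$. During Phase~1 only free processes leave their home server, and there are at most $Z\le\epsilon k$ of them, so the $(1+\epsilon)$ augmentation bound is immediate with no packing argument at all. Phase~2 is triggered once $|F|\ge Z$: the algorithm resets to the initial configuration and re-runs the entire sequence with the $\SODA$ algorithm of \cite{soda21} as a black box. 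The crucial observation is that $\SODA$ has an \emph{absolute} cost bound $\mcO(k\ell\log k)$, while at the moment Phase~2 starts we have the certificate $\OPT\ge|\mathcal{P}|\ge Z/2$. The $\sqrt{k\ell\log k}$ ratio then comes out of a simple case split on whether $Z=\sqrt{k\ell\log k}$ or $Z=\epsilon k$.

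Your plan, by contrast, is a single-phase hierarchical-merging scheme with $\mcO(\log k)$ size levels and first-fit scheduling. That is essentially the architecture of $\SODA$ itself, and in the standard learning model it yields $\mcO(\ell\log k)$. The gap is that you give no concrete mechanism by which the ratio $\sqrt{k\ell\log k}$ would emerge: saying it ``should arise from balancing cluster size against accumulated dual potential'' is not an argument. In the generalized model the adversary may never let the dual potential on any cluster pair reach your merge threshold---it can spread communication thinly so that $\OPT$ stays tiny while your algorithm either pays communication indefinitely or merges prematurely. The paper sidesteps this precisely via the quadratic-vs-linear tension in $\FLOW$ (cost $|F|^2$ against dual witness $|F|/2$), followed by a phase whose cost is bounded absolutely rather than competitively. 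Your proposal is missing both the $\FLOW$ free-process mechanism and the two-phase trick, and without them the square-root term is unaccounted for. Your feasibility argument is also off: with $(1+\epsilon)$ augmentation and clusters of size up to $k$, first-fit packing does not work; the paper avoids any packing in Phase~1 because at most $\epsilon k$ processes ever move, and delegates Phase~2 feasibility entirely to $\SODA$.
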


This algorithm uses our flow clustering procedure and uses the algorithm from the standard learning model
presented in \cite{soda21} as a subroutine in a second phase. In the second phase, hence entire connected components 
are assigned to a single server. However, we will show that the cost incurred during this phase is not
significantly higher than that of the first phase.

We then present a second online algorithm which uses slightly more augmentation of $2+\epsilon$ to achieve a significantly better
competitive ratio of $\mathcal{O}(\sqrt{k})$, hence removing the linear factor $\ell$,
which also matches our corresponding lower bound. Also this algorithm relies on our
flow clustering procedure, however, since our goal is to achieve a $\mcO(\sqrt{k})$ competitive ratio, we cannot simply apply the procedure as before. 
Concretely, we derive the following theorem:
\begin{restatable}{theorem}{UpperBoundsTwoEps}
    There exists a deterministic online algorithm for the general learning problem with competitive ratio $\mcO(\sqrt{k})$ and augmentation $2+\epsilon$.
\end{restatable}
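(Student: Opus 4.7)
The plan is to follow the primal--dual template introduced for the $1+\epsilon$ result, but to redesign both the clustering and the scheduling phases so that every unit of algorithmic cost is amortized against $\Omega(1/\sqrt{k})$ units of dual profit from the LP on the time-expanded request graph, rather than the $\Omega(1/\sqrt{k\ell\log k})$ rate obtained before. This rate is best possible in view of the matching lower bound of \Cref{theorem:lower_bound_unlimited_aug}.

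The first step is to reuse the LP on the time-expanded graph and its dual, which yields $\OPTS(\sigma)$ as a lower bound through any feasible dual. Second, rerun the flow clustering subroutine, but with its capacity parameter tuned so that no cluster ever exceeds size roughly $\sqrt{k}$. When the flow in the time-expanded graph would force a cluster to grow past this threshold, the accompanying min-cut witness is used to split the cluster and credit $\sqrt{k}$ units of dual profit against the move and merge costs triggered by the split. This is the point at which the procedure departs from its first-theorem cousin, which let clusters grow to full size and outsourced the residual problem to the $\SODA$ subroutine.

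The scheduling phase is then carried out without invoking $\SODA$ at all; eliminating this subroutine is precisely what removes the $\ell\log k$ factor. With augmentation $2+\epsilon$, each server has effective capacity $(2+\epsilon)k$, so on top of the $\sqrt{k}$ clusters that form its \emph{home} set we can temporarily host a working set of another $k$ processes. A new cross-server request is first served by importing the relevant cluster into the working set of the other server; only when the working set saturates do we migrate permanently, and by that point the dual has accumulated enough profit against the preceding communication costs to absorb the migration at rate $\sqrt{k}$. The slack $\epsilon$ is spent on standard amortization to rule out thrashing between two almost-full servers.

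The main obstacle will be the unified charging argument. One must show that every request---whether served intra-cluster (free), intra-server but inter-cluster, or inter-server---can be paired with a specific increment of dual profit coming either from a size-cap cut triggered by the flow procedure or from the min-cut separating the two relevant servers in the time-expanded graph, and that these increments are never double-counted. The cleanest route appears to be a potential-function proof in which the clusters' sizes and their current server placement encode an implicit dual certificate, with the $+1$ slack per server guaranteeing that every scheduling update preserves non-negativity of the potential. Once this per-request charging lemma is in place, weak duality yields $\ALG(\sigma)\le\mcO(\sqrt{k})\cdot\OPTS(\sigma)$ and completes the proof.
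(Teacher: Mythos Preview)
Your proposal is a plan rather than a proof, and the plan has a concrete mismatch that would prevent the argument from closing.

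The central issue is the claim that capping cluster sizes at roughly $\sqrt{k}$ and splitting at that threshold yields ``$\sqrt{k}$ units of dual profit.'' It does not. A cluster can reach size $\sqrt{k}$ while consisting entirely of processes from a single home server, in which case no flow-augmenting path between distinct server nodes has been created and the dual profit is zero. The quantity that actually tracks dual profit is the number of \emph{free} processes generated by the $\FLOW$ procedure: each pair of new free processes corresponds to exactly one new edge-disjoint path in $\calP$, hence one unit of dual. The paper therefore sets the $\FLOW$ cost parameter $Z=\sqrt{k}$ on the free-process count, not on cluster size, and when that count is reached it does the opposite of splitting: it collapses the entire connected component (size at most $k$ by the learning restriction) into a single \emph{large} cluster, paying $\mcO(k)$ against $\Omega(\sqrt{k})$ dual. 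Your split-at-$\sqrt{k}$ scheme has no mechanism to recover this amortization.

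The scheduling story is also off. The paper does not use a ``working set'' buffer. Instead it runs one $\FLOW$ instance per connected component, and assigns each colored piece to a per-server \emph{colored cluster} only while the piece is sufficiently monochromatic (with $\tfrac{1}{2}/\tfrac{3}{4}$ hysteresis). Because every piece in a $c$-colored cluster is at least half color $c$, that cluster has size at most $2k$, which is exactly where the factor $2$ in the augmentation comes from; the remaining $\epsilon k$ is spent on the rebalancing routine from~\cite{racke2023polylog}. The delicate part of the analysis is bounding the cost of instance merges and of pieces toggling between colored and singleton status ($\CSmall$ and $\CMono$), which requires a separate charging scheme on the free processes. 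None of this machinery is present in your outline, and the potential-function idea you allude to is not specified enough to stand in for it.
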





\subsection{Further Related Work}
\label{onl:sec:related}

The dynamic balanced (re-)partitioning problem was introduced by Avin et
al.~\cite{disc16,sidma19}. For a general online setting where requests can be
arbitrary over time, the paper showed a linear in $\Omega(k)$ lower bound (even if the online algorithm
is allowed significant resource augmentation), and also presented 
a deterministic online
algorithm which achieves a competitive ratio of $O(k \log k)$ (with constant augmentation). That algorithm however relies on expensive repartitioning
operations and has a super-polynomial runtime. Forner et al.~\cite{apocs21repartition} later showed that a  competitive ratio of $O(k \log k)$
can also be achieved with a polynomial-time online algorithm which monitors the
connectivity of communication requests over time, rather than the density.
Pacut et al.~\cite{infocom21repartitioning} presented an
$O(\ell)$-competitive online algorithm for a scenario without resource
augmentation and the case where $k = 3$.
The dynamic graph partitioning problem has also been studied in scenarios with less resource augmentation or
where augmentation is even strictly forbidden.
Regarding the former, Rajaraman and Wasim~\cite{rajaraman2022improved} presented a
$O(k\ell \log k)$-competitive algorithm for scenarios with only slight resource augmentation.
Regarding scenarios without augmentation, Avin et al. \cite{disc16} already presented a simple algorithm achieving a competitive ratio of 
$O(k^2 \cdot \ell^2)$, i.e., $O(n^2)$. Only recently a first online algorithm achieving a subquadratic competitive ratio of $O(n^{23/12})$ (ignoring polylog factors) has been presented~\cite{stacs24}.
An improved analysis of the original algorithm of Avin et al.~was presented by Bienkowski et al.~\cite{bienkowski2021improved}, by translating the problem to a system of linear integer equations and using Graver bases.
The problem has further also been studied from an offline perspective by Räcke et al.~who presented a polynomial-time $O(\log n)$-approximation algorithm \cite{racke2022approximate}, using LP relaxation and Bartal’s clustering algorithm to round~it.

Deterministic online algorithms have also been studied for specific communication patterns, namely the ring. In~\cite{obr-ring,netys17learn}, the adversary generates the communication sequence from an arbitrary (adversarially chosen) random distribution in
an \emph{i.i.d.} manner~\cite{obr-ring,netys17learn} from the ring. In this scenario, it has been shown that even deterministic algorithms can achieve a polylogarithmic competitive ratio.  In~\cite{racke2023polylog}, a more general setting is considered where the adversary can choose edges from the ring arbitrarily, in a worst-case manner. It is shown that a polylogarithmic competitive ratio of $O(\log^3 n)$ can be achieved in this case by randomized algorithms; against a static solution, the ratio can be improved to  $O(\log^2 n)$ (this ratio is strict, i.e., without any additional additive terms).
The problem on the ring is however very different from ours and the corresponding algorithms and techniques are not applicable in our setting. 

The learning variant considered in this paper was introduced by Henzinger et al.~\cite{sigmetrics19learn,soda21}. The authors presented a deterministic exponential-time algorithm with competitive ratio $O(\ell \log \ell \log k)$ as well as a lower bound of
$\Omega(\log k)$ on the competitive ratio of any deterministic online algorithm.
While their derived bounds are tight for $\ell=O(1)$ servers, 
there remains a gap of factor $O(\ell \log \ell)$ between upper and lower bound
for the scenario of $\ell=\omega(1)$.
In~\cite{soda21}, Henzinger et al.~present deterministic and randomized algorithms which achieve (almost)
tight bounds for the learning variant. In particular, a
polynomial-time randomized algorithm is described which achieves a polylogarithmic competitive
ratio of $O(\log \ell + \log k)$; it is proved that no randomized online
algorithm can achieve a lower competitive ratio. Their approach establishes and
exploits a connection to generalized online scheduling, in particular, the
works by Hochbaum and Shmoys~\cite{hochbaum87using} and Sanders et
al.~\cite{sanders09online}. 
In our paper, we consider a more general variant of the learning model, where the algorithms are not restricted to keep process pairs, once they communicated, on the same server forever. 

More generally, dynamic balanced partitioning is related to 
dynamic bin packing problems which allow for limited \emph{repacking}~\cite{FeldkordFGGKRW18}:
this model can be seen as a variant of our problem where pieces (resp.~items)
can both be dynamically inserted and deleted, and it is also possible to open new
servers (i.e., bins). The goal is to use only an (almost) minimal number of servers, and
to minimize the number of piece (resp.~item) moves.
However, the techniques of~\cite{FeldkordFGGKRW18} do not extend to our problem.
Another closely related (but technically different) problem is \emph{online vertex recoloring (disengagement)}~\cite{azar2022competitive,rajaraman2024competitive}. 
This problem is the flip side of our model: in disengagement, the processes in requests should be 
are \emph{separated} and put on different servers (rather than collocated). 

\subsection{Future Work}

We see our work as a stepping stone toward solving the general online balanced partitioning problem. While we remove the restriction that the algorithm needs to keep communication pairs on the same server, we still require that the graph induced by the communication requests can be perfectly partitioned across the servers (i.e., without inter-server edges). In other words, while our paper removes the restriction on the adversary, it remains to remove this restriction on the input as well. In particular, it remains an open questions whether a polylogarithmic competitive ratio can be achieved in general setting by randomized algorithms.

%

\section{Model and Preliminaries}

We now present our formal model in details and also introduce the necessary preliminaries.

Let $\ell$ denote the number of servers and $k$ the \emph{capacity} of a
server, i.e., the maximum number of processes that can be scheduled on a single
machine. We use $V=\{p_1, p_2, \dots, p_{n}\}$ with $n\le\ell k$ to denote the
set of processes. In each time step $t$ we receive a request
$\sigma_t=\{p_i, p_{j}\}$ with $p_i,p_{j} \in V$, which means that these two
processes communicate.
For a request sequence $\sigma$ we define the \emph{demand graph} $G_\sigma$
as a (multi-)graph that contains a vertex for every
process and an edge $(p_i,p_j)$ for every request between $p_i$ and $p_j$
within $\sigma$.

Serving a communication request costs exactly $1$ if both requested processes
are located on different servers, otherwise~0. We call this the
\emph{communication cost} of the request.
Before the communication, an online algorithm may (additionally) decide to
perform an arbitrary number of migrations. Each migration of a process to
another server induces a cost of $1$ and contributes to the
\emph{migration cost} of the request. 

An assignment of processes to servers is referred to as a (current)
\emph{configuration} or \emph{scheduling} of the algorithm. We start in the
initial configuration $\mathcal{I}$ where each server contains exactly $k$ processes. The
initial server of a process $p$ is called the \emph{home server} of $p$ and
denoted with $h(p)$. 

In the end, after performing all migrations, each server should obey its capacity
constraint, i.e., it should have at most $k$ processes assigned to it. The goal
is to find an online scheduling of processes to servers for each time step that
minimizes the sum of migration and communication costs and obeys the capacity
constraints.

\paragraph*{Resource augmentation.}
We usually consider online algorithms that do not have to fulfill the capacity
constraints exactly but are allowed to have a little bit of slack. Whereas
the optimum offline algorithm may schedule at most $k$ processes on any server
at any given time, the online algorithm may schedule $\alpha k$ processes on any
server for some parameter $\alpha > 1$. We say the online algorithm
uses resource augmentation $\alpha$.

\paragraph*{Restricting the request sequence.}
In the \emph{learning variant} of the problem\cite{sigmetrics19learn, soda21}
the request sequence is restricted so as to allow for a static solution that does not
require any communication cost. This means there exists a very good solution 
(in terms of communication cost) and the goal is to \emph{learn} this solution 
as the requests appear. 

Henzinger et al.~\cite{sigmetrics19learn} introduced this learning variant and
compared the cost of an online algorithm to the cost of a \emph{static} offline
algorithm that does not perform any communication. This means initially the
optimum algorithm (which knows the request sequence in advance) chooses a
static placement that does not require any communication. It then switches to
this placement paying some migration cost, and then serves the request 
sequence without any further migration and/or communication.

In this paper we introduce the \emph{generalized learning model} where we have
the same restriction on the request sequence but we lift some of the
restrictions on the optimal offline algorithm that we compare against. We
consider two scenarios. In the first scenario the offline algorithm is
\emph{static}, i.e., it must initially migrate to a placement and then serve
the request sequence without changing the placement of the processes. However,
in contrast to the standard learning model~\cite{soda21} this placement does not need
to be perfect in the sense that serving the request sequence does not incur
communication.

In the second scenario the offline algorithms is completely dynamic and only
has to obey capacity constraints on the servers.

\paragraph*{Comparing static and dynamic adversaries.}
Let $\OPTS$ and $\OPTD$ denote the optimum \emph{static} and
\emph{dynamic} offline algorithm, respectively. For a request sequence $\sigma$
we use $\OPTS(\sigma)$ and $\OPTD(\sigma)$ to denote the cost of these
algorithms on request sequence $\sigma$. We drop $\sigma$ if it is clear from the
context and, hence, use $\OPTS$ and $\OPTD$ to denote an algorithm and also its cost.
The following simple fact holds because the dynamic algorithm is more powerful
than its static counter-part.
\begin{fact}For any request sequence $\sigma$ we have
$\OPTS(\sigma) \ge \OPTD(\sigma)$.
\label{fact:static_dynamic}
\end{fact}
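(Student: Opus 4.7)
The plan is to observe that every valid schedule produced by a static algorithm is also a valid schedule in the dynamic model, so the dynamic optimum can only be cheaper. In other words, the set of feasible strategies for $\OPTD$ contains the set of feasible strategies for $\OPTS$ as a special case, and both algorithms pay cost according to the same objective (migration plus communication). Hence the minimum over the larger class is at most the minimum over the smaller class.

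Concretely, I would take an optimal static offline solution on the sequence $\sigma$: it consists of an initial migration from $\mathcal I$ to some fixed placement $\pi$, followed by serving all requests in $\sigma$ while keeping the placement $\pi$. I then exhibit a dynamic offline strategy that mimics this behavior exactly: at time $0$ it performs the same migrations from $\mathcal I$ to $\pi$, and at every subsequent time step it chooses not to migrate and simply serves the request under placement $\pi$. Since the dynamic model only adds the option of further migrations but never forces them, this mimicking strategy is feasible for $\OPTD$ and respects the capacity constraint $k$ at every step (as the original static schedule already did). The migration cost and communication cost incurred by the mimicking strategy are identical to those of $\OPTS(\sigma)$.

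Therefore $\OPTD(\sigma)$, being the minimum cost over all feasible dynamic strategies, is at most the cost of this particular strategy, which equals $\OPTS(\sigma)$. No obstacle is anticipated; the statement is essentially a containment of strategy spaces, and the only thing to check is that capacity constraints are preserved under the mimicking, which is immediate because the placements visited by the dynamic mimic are exactly those visited by the static solution.
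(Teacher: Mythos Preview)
Your argument is correct and is exactly the paper's reasoning: the paper simply remarks that the fact holds ``because the dynamic algorithm is more powerful than its static counter-part,'' which is precisely the containment-of-strategy-spaces observation you spell out. Your write-up just makes explicit what the paper leaves as a one-line justification.
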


Next, we show that there exist request sequences for which the optimal static
algorithm performs significantly worse than the optimal dynamic algorithm.
\begin{lemma}
For any $k$ there exists a request sequence $\sigma$
where $\OPTS \ge \Omega(k)\cdot\OPTD$. Moreover, this inequality holds even if
$\OPTS$ has unlimited resource augmentation whereas $\OPTD$ has no resource
augmentation at all.
\end{lemma}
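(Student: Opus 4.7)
The plan is to construct an explicit instance where a dynamic algorithm can cheaply ``shuttle'' a single hub process between the servers holding its partners, while any static placement is forced to pay a migration or a communication for roughly half of those partners. I take $\ell=3$ servers of capacity $k$ (and assume $k$ is odd for a clean split; parity is harmless). Initially server~1 holds a distinguished process $p$ together with $k-1$ filler processes; server~2 holds partners $q_1,\ldots,q_{(k-1)/2}$ plus fillers; server~3 holds the remaining partners $q_{(k+1)/2},\ldots,q_{k-1}$ plus fillers. The request sequence is the star $\sigma=(p,q_1),(p,q_2),\ldots,(p,q_{k-1})$, ordered so that all server-2 partners are queried before any server-3 partner. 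The demand graph is a single connected component of $k$ vertices that fits on one server, so $\sigma$ is a valid learning-model sequence.

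Bounding $\OPTD$ is easy: before the first request the dynamic algorithm swaps $p$ with a filler on server~2 at cost $2$ (capacity-preserving even though $\OPTD$ has no augmentation), which makes the first half of the requests free; it then swaps $p$ with a filler on server~3 at cost $2$, making the second half free. Hence $\OPTD \le 4$.

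The substantive step is the lower bound on $\OPTS$. For any static placement $P$ chosen by $\OPTS$, write $a_p:=P(p)$ and $a_i:=P(q_i)$, and consider the per-partner charge
\[
c_i \;:=\; \mathbf{1}[a_i \ne h(q_i)] \;+\; \mathbf{1}[a_i \ne a_p],
\]
which captures the migration of $q_i$ together with the communication cost of the request $(p,q_i)$. Whenever $h(q_i)\ne a_p$, no choice of $a_i$ can match both right-hand servers, so $c_i \ge 1$. Summing over $i$ and adding the $\mathbf{1}[a_p\ne 1]$ cost of migrating $p$ itself (migrations of the other filler processes only strengthen the bound) yields
\[
\OPTS \;\ge\; \mathbf{1}[a_p\ne 1] \;+\; |\{\,i : h(q_i)\ne a_p\,\}|.
\]
Because the partners lie evenly on servers~2 and~3 and none sit on server~1, this quantity is at least $(k-1)/2$ for every $a_p\in\{1,2,3\}$, giving $\OPTS \ge (k-1)/2 = \Omega(k)$ and therefore $\OPTS/\OPTD = \Omega(k)$.

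The main obstacle is precisely this $\OPTS$ lower bound: with unlimited augmentation the obvious worry is that $\OPTS$ can simply collapse $\{p,q_1,\ldots,q_{k-1}\}$ onto a single server at cheap cost. The per-partner charging scheme is designed exactly to rule this out: whatever server $a_p$ is chosen for $p$, at least $(k-1)/2$ partners have their home server different from $a_p$ and hence contribute at least one unit each, no matter how freely $\OPTS$ exceeds capacities elsewhere.
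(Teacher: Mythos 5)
Your proposal is correct and takes essentially the same approach as the paper: both issue a star of requests centered at a single hub process whose partners are split evenly across servers, let the dynamic algorithm shuttle the hub (cost $O(1)$ via swaps with fillers), and argue that any static placement must pay at least one unit (migration or communication) for each of the roughly $k/2$ partners whose home server differs from wherever the hub is placed. Your version differs only cosmetically (a third server hosting the hub alone, and an explicit per-partner charging argument where the paper merely asserts the move-or-communicate dichotomy), and both constructions respect the learning restriction since the star component has at most $k$ vertices.
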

\begin{proof}
We fix two servers $s_p$ and $s_q$ with $p_1, \dots, p_k$ initially residing on
$s_p$, and $q_{1}, \dots, q_{k}$ initially residing on $s_q$. We generate the
following request sequence of length $2\lfloor \frac{k}{2} \rfloor$:
\[\sigma = \{p_1, q_{1}\}, \{p_1, q_{2}\}, \dots, \{p_1, q_{\lfloor \frac{k}{2} \rfloor}\}, \{p_1, p_{2}\}, \{p_1, p_{3}\}, \dots, \{p_1, p_{\lfloor \frac{k}{2} \rfloor + 1}\}\] 

Let $C$ be the connected component in the demand graph $G_\sigma$. $C$ contains
$\lfloor \frac{k}{2} \rfloor$ processes of server $s_p$ and
$\lfloor \frac{k}{2} \rfloor$ processes of server $s_q$. Every optimal static
algorithm (even with unlimited resource augmentation) will either move $\Omega(k)$
processes or pay $\Omega(k)$ communication cost. However, a dynamic algorithm
that first swaps $p_1$ and $q_k$, serves the first $k-1$ requests, swaps $p_1$
and $q_k$ again, and serves the remaining requests, pays only a cost of 2
for the swaps.
\end{proof}
The above lemma means that there may be a very large gap between the static and
dynamic model. Nevertheless, we will show matching upper
and lower bounds, where the lower bounds are against a static adversary and the
upper bounds are against a dynamic adversary.

\section{Lower Bounds}

In \cite{soda21}, it was shown that in the standard learning model
deterministic algorithms with  resource augmentation $(1+\epsilon)$ have a
competitive ratio of at least $\Omega(\ell \log k)$, and deterministic
algorithms with  resource augmentation $(2+\epsilon)$ have a competitive ratio
of at least $\Omega(\log k)$. 
Note
that the $\mcO$-notation is hiding dependencies on the parameter $\epsilon$.

These lower bounds directly carry over to the generalized model since we only
make the adversary more powerful. We now show that this change in the model 
makes an important difference because the lower bounds increase substantially.


\subsection{Lower Bounds for Algorithms with Unlimited Augmentation}
In this section we show a lower bound of  $\Omega(\sqrt{k})$ even for the case
with unlimited augmentation. This is in stark contrast to the 
the standard learning model in the literature (which we generalize in this paper) where already
an augmentation of $2+\epsilon$ allowed an upper bound of
$\mcO(\log k)$.

\LowerBoundOneEps*
\begin{proof}
Fix a deterministic algorithm $\ALG$ and two arbitrary servers. Assume Server~1
contains processes $S_1 = \{p_1, p_2, \dots, p_{k/2}\}$, and Server~2 contains
processes $S_2 = \{p_{k/2+1}, p_{k+2}, \dots, p_{k}\}$ (wlog.\ assume that $k$
is even). As long as not all
processes $p_1,\dots,p_{k}$ are on the same server we issue a request of the
form $\{p_i,p_{i+1}\}$, $i\in\{1,\dots,k-1\}$ for which $p_i$ and $p_{i+1}$ are
on different servers. $\ALG$ incurs cost for each of these  requests. Hence, its cost is
at least $\max\{|\sigma|,k/2\}$ because it has to move $k/2$ processes in order to
stop the sequence.

The optimum algorithm $\OPT$ identifies a pair $(p_i,p_{i+1})$ in the 
range $i\in\{k/2-s,k/2+s\}$ ($s<k/2$) that experiences the least number of requests. 
The range contains $2s+1$ pairs, and, hence, one of them has at most
$|\sigma|/(2s+1)\le |\sigma|/(2s)$ requests. Regardless, of which pair
attains the minimum, $\OPT$ can move at most $2s$ processes between Server~1
and Server~2 so that both servers have $k$ processes and only requests 
for this minimum pair requires communication (this uses the fact that servers
have enough \emph{dummy} processes that are not involved in any requests and
can be moved between servers so as to balance the load). Consequently, $\OPT$
can serve the sequence with cost $2s+|\sigma|/(2s)$. Choosing
$s=\Theta(\sqrt{k})$ gives the theorem.
\end{proof}

\mathversion{bold}
\subsection{Lower Bounds for Algorithms with Augmentation $1+\epsilon$}
\mathversion{normal}
For very small resource augmentation, 
in the standard learning model, there is a lower bound of
$\Omega(\ell\log k)$ for any online algorithm, i.e., there is a strong dependency on
the number of servers. In this section, we show that in the generalized learning model,
a significantly higher lower bound of $\Omega(\max(\sqrt{k\ell \log k}, \ell \log k))$ can be shown. 

The proof follows a structure similar to that presented in \cite{soda21}. However, we enhance it by integrating key insights from both their approach and Theorem~\ref{theorem:lower_bound_unlimited_aug}. This combination of ideas allows us to establish stronger lower bounds in the generalized learning model. See 

\LowerBoundsOneEps*

In the following, we assume that $\epsilon\le 1/10$ and 
$\epsilon k=2^m$ for a positive integer $m$.

\paragraph{Notation}
Fix any deterministic algorithm $\ALG$ with augmentation $1+\epsilon$. We
construct a request sequence $\sigma$ and analyze the costs $\ALG(\sigma)$ and
$\OPTS(\sigma)$. 
In the following we refer to requests as edges.

We associate each server
with a color and each process with the color of its home server. We say,
the \emph{main server} of a color $c$ is the server that, out of all servers,
currently contains the largest number of processes with color $c$ (ties broken
arbitrarily).

Each server reserves $\epsilon k$ of its initial processes as \emph{padding
  processes}. These processes will not be included in any request, so migrating
them will not result in any communication cost for $\OPTS$. These processes are
only necessary so that $\OPTS$ can balance the load
between servers and create a static offline solution without
resource augmentation.

We construct the request sequence $\sigma$ such that the connected components
of the demand graph $G_\sigma$ mostly consist of processes of the same color. In
fact, there will only be one component, the \emph{special component}, 
that contains different colors. 
The other non-padding processes will be part of \emph{colored components}. At 
each point in time a color $c$ has a rank $r_c$. All but (at most) one of its
components will have the same size $2^{r_c}$. We call a $c$-colored components
that has size $2^{r_c}$ a \emph{regular component of color $c$}. A $c$-colored component 
that does not have size $2^{r_c}$ is called \emph{the extra component for color $c$}.
We call processes in extra components \emph{extra processes} and  processes in regular
components \emph{regular processes}.

During the request sequence, we merge components by issuing a request between
processes from different components. To start, we create the special component.
Let $S_1 = \{p_1,\dots,p_{3\epsilon k}\}$ be arbitrary $3\epsilon k$ processes
initially stored on Server~1 and
$S_2=\{p_{3\epsilon k+1},\dots,p_{6\epsilon k}\}$ arbitrary $3\epsilon k$
processes initially stored on Server~2. We say processes $p_i$ and $p_{i+1}$
are consecutive processes. We merge $S_1$ and $S_2$ into the special component
by issuing requests $(p_i,p_{i+1})$ for $i\in 1,\dots,6\epsilon k-1$. 
All other non-padding processes initially form regular components of size 1. 

We will ensure that $\ALG$ stores each component on a single server by
repeatedly requesting processes of that component that are stored on different
servers. On the other hand, $\OPTS$ is not restricted to storing each component
on a single server. It will first perform some migrations and then stay in this
configuration for the entire request sequence, possibly incurring some
communication cost.

Most components of a color $c$ will be scheduled on the main server for color
$c$. We call a color $c$ \emph{spread-out} if the number of processes in
$c$-colored components that are not located on the main server for $c$ is at
least $\epsilon k$ (note that regardless of their color, special and padding processes are not considered to
be in a $c$-colored component).

\paragraph*{The request sequence.}
We generate the request sequence in iterations. 
In the first phase of each iteration, as long as there is a component $C$ that
is split across servers, we request a process pair from $C$ that resides on
different servers. If $C$ is a regular component or an extra component, 
we choose this pair arbitrarily.
However, if $C$ is the special component, we choose an arbitrary pair
$(p_i,p_{i+1})$ that is located on different servers.

The first phase of an iteration ends when $\ALG$ schedules all processes of any
component entirely on the same server.

In the second phase, we proceed as follows: According to
Lemma~\ref{lemma:lower_bounds_def}, there must exist a \emph{spread-out} color
$c$. Let $r_c$ be its rank. We find a matching $M_c$ between the regular
components of color $c$ that maximizes the number of edges that cross between
servers. Then we merge the matched components. By construction, the size of the
newly created components is $2^{r_c+1}$. If the number of regular components
is odd, there will be an unmatched component. We merge this component with the
extra component. Before the
merge, the size of the leftover component was less than $2^{r_c}$; thus, after
the merge, its size will be less than $2^{r_c+1}$. The rank of
color $c$ increases by one. 

This finishes the iteration for constructing the request sequence. 
The construction finishes once the rank of each color is at least $m$.

\subsubsection*{Cost Analysis}
\begin{lemma}
At the start of the second phase of an iteration, there exists a spread-out color.
\label{lemma:lower_bounds_def}
\end{lemma}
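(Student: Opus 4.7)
The plan is to argue by contradiction: assume that at the start of the second phase no color is spread-out, and derive that some server carries more than the augmented capacity $(1+\epsilon)k$. First, quantify the content of the main servers. By construction, the number of $c$-colored processes belonging to $c$-colored components equals $(1-\epsilon)k-3\epsilon k=(1-4\epsilon)k$ when $c\in\{1,2\}$ (because one must discount the $\epsilon k$ padding processes per server and the $3\epsilon k$ processes that Servers~$1$ and $2$ donate to the special component) and $(1-\epsilon)k$ for every other color. The non-spread-out hypothesis forces strictly fewer than $\epsilon k$ of these processes to lie off the main server of $c$, so that main server holds at least $(1-5\epsilon)k+1$ processes of color $c$ when $c\in\{1,2\}$ and at least $(1-2\epsilon)k+1$ processes of color $c$ otherwise.

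Because the first phase of the iteration ends precisely when every component occupies a single server, the entire special component of size $6\epsilon k$ now sits on a single server $S^*$. I would then split into two cases. If $S^*$ is the main server of some color $c^*$, the $6\epsilon k$ special processes stack on top of the lower bound from the previous paragraph, giving $|S^*|\ge(1-5\epsilon)k+1+6\epsilon k=(1+\epsilon)k+1$ already in the weakest case $c^*\in\{1,2\}$, which exceeds capacity $(1+\epsilon)k$. If instead $S^*$ is not the main server of any color, then all $\ell$ main servers must live in the remaining $\ell-1$ servers, so by the pigeonhole principle two distinct colors $c_1,c_2$ share a main server $s'$; summing the two lower bounds on $s'$ contradicts the capacity bound as soon as at least one of $c_1,c_2$ lies outside $\{1,2\}$, using $\epsilon\le 1/10$.

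The delicate remaining sub-case is when the two coinciding colors are precisely $1$ and $2$, which yield the weakest combined bound $|s'|\ge(2-10\epsilon)k+2$. This is the step I expect to be the main technical obstacle, since its slack over $(1+\epsilon)k$ is only a constant and both parameter constraints $\epsilon\le 1/10$ and $k\ge 10$ of the theorem have to be used carefully. If the pairwise bound on $s'$ alone does not close the gap, I would complement it by summing the per-server lower bounds on $s'$, on $S^*$ (which contributes at least $6\epsilon k$), and on the main servers of colors $3,\dots,\ell$ to overflow the global capacity $(1+\epsilon)k\ell$, thereby producing the contradiction and establishing the lemma.
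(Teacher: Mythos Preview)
Your Case~1 is correct and is essentially the whole of the paper's argument. The difference is that the paper never enters your Case~2 at all: instead of asking whether $S^*$ is the main server of some color, the paper first proves that the map ``color $\mapsto$ main server'' is \emph{injective}. Indeed, every color $c$ (not just $c\in\{1,2\}$) has at least $(1-4\epsilon)k$ processes in $c$-colored components, so under the no-spread-out assumption the main server of $c$ carries strictly more than $(1-5\epsilon)k\ge k/2$ such processes; two colors sharing a main server would already overload that server. Since there are $\ell$ colors and $\ell$ servers, injectivity makes the map a bijection, so the server $S^*$ holding the special component \emph{must} be a main server, and your Case~1 computation finishes. Your separate, sharper bounds for colors $c\ge 3$ are unnecessary and are precisely what generates the awkward case split.

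Your proposed fallback for the delicate sub-case is not salvageable. The ``global capacity'' $(1+\epsilon)k\ell$ can never be overflowed, since the total number of processes is exactly $\ell k<(1+\epsilon)k\ell$; any contradiction must be local, i.e.\ some single server exceeding $(1+\epsilon)k$. If you instead try to contradict the process count $\ell k$ by summing your per-server lower bounds, the calculation gives only
\[
(2-10\epsilon)k+6\epsilon k+(\ell-2)(1-2\epsilon)k=(1-2\epsilon)\ell k<\ell k,
\]
so there is no contradiction that way either. The missing $2\epsilon k\ell$ is exactly accounted for by the padding processes ($\epsilon k\ell$) and the off-main colored processes (at most $\epsilon k$ per color), so the global count is tight and uninformative. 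The correct fix is to abandon Case~2 and establish the bijection first, as above.
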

\begin{proof}
Recall that a color $c$ is spread-out if the number of processes in $c$-colored 
components that are not located on the main server for $c$ is at least $\epsilon k$.
A color can have at most $3\epsilon k$ special processes and has exactly
$\epsilon k$ padding processes. This means that it has at least $k-4\epsilon k$
processes in $c$-colored components. 

Assume for contradiction that all colors have strictly more than $(1-5\epsilon) k$ non-special
non-padding processes at their main server (these are the processes in
\emph{colored components}). Since $\epsilon \le {1}/{10}$, we
have $(1-5\epsilon) k \ge k/2$. Thus, a server can be the main server for only
one color, which means that there is a one-to-one mapping of colors to main
servers.

Now, consider the color $c$, whose main server currently holds the special
component. The size of the special component is $6\epsilon k$. Hence, there are
at most $(1+\epsilon)k - 6\epsilon k = (1-5\epsilon)k$ non-special
non-padding processes at this server. This gives a contradiction.
\end{proof}

The next two lemmas derive bounds on the costs of $\ALG$ and $\OPTS$. Let $R$
denote the number of requests generated due to the special component being
split across servers.
\begin{lemma}
The cost of $\OPTS$ on the request sequence $\sigma$ is at most $12\min(\sqrt{R}, \epsilon k)$.
\label{lemma:lower_bounds_opt}
\end{lemma}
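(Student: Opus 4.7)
The plan is to exhibit two distinct static placements for $\OPTS$ and show that the better of the two always satisfies the claimed bound. In both placements every final connected component of the demand graph is scheduled entirely on a single server, which immediately makes the within-component communication cost zero; the entire analysis then reduces to controlling the cost on the special component.

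\textbf{Placement 1 (cut the special component at a cheap edge).} For an integer parameter $s\in\{0,\dots,\epsilon k\}$, consider the $2s+1$ special-component edges $e_i$ with $i\in[3\epsilon k - s,\,3\epsilon k + s]$. Since all $R$ special-component requests are of the form $(p_i,p_{i+1})$, averaging shows that some $e_{i^*}$ in this range is requested at most $R/(2s+1)$ times by $\sigma$. Starting from the initial configuration, migrate the $|i^*-3\epsilon k|\le s$ special processes lying on the wrong side of the cut across to the other server, and restore the server capacities by moving an equal number of padding processes in the opposite direction; this is possible because Server~$1$ and Server~$2$ each reserve $\epsilon k \ge s$ padding processes. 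After these migrations the only special edge crossing a server boundary is $e_{i^*}$ and no colored component is split, so the total cost is at most $2s + R/(2s+1)$. Choosing $s=\min(\lfloor\sqrt{R}/2\rfloor,\,\epsilon k)$ yields cost at most $2\sqrt{R}+O(1)$ as long as $\sqrt{R}\le 2\epsilon k$.

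\textbf{Placement 2 (collect the special component on Server~$1$).} Since $\OPTS$ is offline, it knows the full request sequence and hence every final colored component. For $\epsilon\le 1/10$, color~$1$ has at least two regular final components, each of size $\epsilon k$. Pick any two of them and swap their $2\epsilon k$ processes together with the $\epsilon k$ padding processes of Server~$1$ for the $3\epsilon k$ special processes currently residing on Server~$2$. A direct count confirms that each server still holds exactly $k$ processes, that the entire special component lives on Server~$1$, and that every final colored component is entirely contained in a single server. Consequently $\OPTS$ pays exactly $6\epsilon k$ migration and $0$ communication.

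Combining the two placements gives $\OPTS\le \min(2\sqrt{R}+O(1),\,6\epsilon k)\le 12\min(\sqrt{R},\epsilon k)$, which is the claim of the lemma. The main point requiring care is Placement~$2$: the two color-$1$ components must be moved as \emph{whole} components so that no request between their members crosses a server boundary. This essentially uses that $\OPTS$ is offline and can foresee which processes will end up in the same final component; an online algorithm could not make this choice.
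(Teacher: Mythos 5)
Your overall structure (a ``cheap cut'' placement for small $R$ and a ``consolidate the special component'' placement for large $R$, taking the minimum) is exactly the paper's, and your Placement~1 is essentially identical to the paper's first case and is correct. The problem is in Placement~2, specifically the assertion that ``color~$1$ has at least two regular final components, each of size $\epsilon k$.'' A color $c$'s regular components have size $2^{r_c}$ where $r_c$ is its \emph{final} rank, and the construction only guarantees $r_c\ge m$ at termination, not $r_c=m$: in each iteration the adversary merges components of \emph{some} spread-out color, and Lemma~\ref{lemma:lower_bounds_def} does not guarantee that a spread-out color with rank below $m$ exists, so a color's rank can be pushed past $m$ before every color has reached $m$. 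If color~$1$ ends with rank $m+j$ for $j\ge 1$, its regular components have size $2^j\epsilon k\ge 2\epsilon k$, and you cannot select \emph{whole} color-$1$ components totalling exactly $2\epsilon k$ to swap out of Server~$1$; the $\epsilon k$ padding processes are not enough to absorb the resulting imbalance, and $\OPTS$ has no augmentation. This is precisely why the paper's proof singles out $c^*$, the \emph{last} color to reach rank $m$: at the moment the construction stops, $c^*$ has rank exactly $m$, so its regular components have size exactly $\epsilon k$ and six of them can be redistributed to restore balance.

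The fix is small: run your Placement~2 with the $c^*$-colored server in place of Server~$1$ (move the special component there and evacuate whole $c^*$-colored components of size exactly $\epsilon k$, as the paper does), or otherwise justify why the color you pick has final rank exactly $m$. With that repair, your accounting (a balanced swap costing $6\epsilon k$, zero communication) is fine and in fact slightly sharper than the paper's $12\epsilon k$; the final combination $\min(2\sqrt{R}+O(1),\,6\epsilon k)\le 12\min(\sqrt{R},\epsilon k)$ then goes through.
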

\begin{proof}
We analyze two cases: 
\begin{enumerate}
\item $\sqrt{R} \le \epsilon k$:\\
$\OPTS$ can proceed as follows. Within the special component it identifies an
edge that is requested the fewest number of times and lies within
the sub-sequence $p_{3\epsilon k-\sqrt{R}},\dots,p_{3\epsilon k+\sqrt{R}}$ of 
processes. As the total number of
(special) requests is $R$ and the range contains at least $\sqrt{R}$ edges there must
exist an edge that is requested at most $\sqrt{R}$ times. Let this be the
edge $(p_i,p_{i+1})$. $\OPTS$ can now shift at most $\sqrt{R}$ processes between
server $S_1$ and $S_2$ so that processes $p_1,\dots,p_i$ are on $S_1$ and
processes $p_{i+1},\dots,p_{6\epsilon k}$ are on server $S_2$. It then uses the
padding processes to rebalance the load of $S_1$ and $S_2$ (another movement of
at most $\sqrt{R}$ processes). Hence, it incurs a movement cost of $2\sqrt{R}$
and a communication cost of $\sqrt{R}$ for the requests between $p_i$ and $p_{i+1}$.

The remaining requests (non-special requests) are all between processes that are
on the same server. Hence, $\OPTS$ does not pay anything for these requests. 
This gives $\OPTS(\sigma)=3\sqrt{R}$.

\item $\sqrt{R} > \epsilon k$:\\
Let $c^*$ be the last color that reaches the rank of $m$ in the sequence
$\sigma$. $\OPTS$ initially moves the special
component to the $c^*$-colored server $s$.
The number of non-special non-padding processes
on $s$ is at least $k - 3\epsilon k - \epsilon k \ge 6\epsilon k$, where the 
inequality holds because $\epsilon \le 1/10$. These are all processes in 
$c^*$-colored components.

 Since the rank of $c$ is $m$, the size of all $c^*$-colored
components except the extra component is exactly $2^m=\epsilon k$. Since the
size of the extra component is less than $\epsilon k$, there must exist at
least six $c^*$-colored components of size exactly $\epsilon k$ on $s$. We move three
of these components to Server 1 and the other three to Server 2, so that
every server contains exactly $k$ processes. Notice that all requests are
created between processes of the same component and no component is split
between different servers. Thus, $\OPT$ pays only for the
migration cost, and therefore its cost is at most $12\epsilon k$.
\end{enumerate}
In summary, $\OPTS$ pays at most $12\min(\epsilon k, \sqrt{R})$.
\end{proof}

\begin{lemma}
The online algorithm $\ALG$ has cost $\ALG(\sigma)\ge R + \epsilon k \ell \log(\epsilon k)$.
\label{lemma:lower_bounds_alg}
\end{lemma}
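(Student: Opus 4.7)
The plan is to decompose $\ALG$'s cost into the Phase~1 cost attributable to the special component and the Phase~1 cost attributable to the colored (regular and extra) components, and show that the two contributions sum to at least $R+\epsilon k\ell\log(\epsilon k)$.

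For the special component I would invoke the defining property of Phase~1: every request $(p_i,p_{i+1})$ issued while the special component is split is generated precisely when $p_i$ and $p_{i+1}$ lie on different servers. Hence $\ALG$ is charged at least $1$ per such request, either as communication cost on that request or as the cost of a migration performed just before serving it. Since there are $R$ such requests by definition, this already accounts for the $R$ term.

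For the colored components I would charge the cost to each \emph{rank increase}. Fix an iteration in which the spread-out color is $c$ with current rank $r_c<m$. By Lemma~\ref{lemma:lower_bounds_def} and the definition of spread-out, at least $\epsilon k$ processes of color $c$ lie off the main server; the single extra $c$-component contains fewer than $2^{r_c}\le \epsilon k/2$ of them, so at least $\epsilon k/2$ off-main processes live in regular components of color $c$, yielding a lower bound $b\ge \epsilon k/(2\cdot 2^{r_c})$ on the number of regular $c$-components off the main server. A short case analysis on whether the main server holds more than or at most half of the regular $c$-components then shows that the matching $M_c$, which maximizes cross-server edges, produces at least $b/2$ \emph{mixed} pairs, each a merger of two components stored on distinct servers. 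In the following iteration's Phase~1, every such merged component has exactly $2^{r_c}$ of its processes on each of the two involved servers; to terminate the stream of split-component requests $\ALG$ must ultimately migrate (or pay the corresponding communication cost for) at least $2^{r_c}$ processes, so this rank increase is charged at least $(b/2)\cdot 2^{r_c}\ge \epsilon k/4$.

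Summing over the $m=\log(\epsilon k)$ rank increases of each of the $\ell$ colors yields a colored-component contribution of $\Omega(\epsilon k\ell\log(\epsilon k))$, and together with the special-component contribution $R$ gives the claimed bound, with the multiplicative constant absorbed into the $\Omega$-notation used later in Theorem~\ref{theorem:lower_bounds_one_eps}. The main obstacle is the matching step in the middle paragraph: quantifying how many genuinely cross-server pairs a maximum cross-server matching must contain requires handling the possibility that the off-main $c$-components are spread across several non-main servers and carefully treating the single extra component, which is smaller than a regular one and may or may not sit on the main server.
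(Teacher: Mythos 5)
Your proposal follows essentially the same route as the paper: split $\ALG(\sigma)$ into the special-component cost (at least $R$, one unit per split request) and the colored-component cost, and lower-bound the latter by $\Omega(\epsilon k)$ per iteration via the cross-server pairs that the matching of the spread-out color must contain, summed over the $\ell\log(\epsilon k)$ rank increases. The matching step you flag as the main obstacle is resolved in the paper by a short greedy argument --- fix the (at least) $2^{m-r}$ off-main regular components of the spread-out color and greedily assign each a unique partner residing on a different server, which lower-bounds the number of cross edges of the maximizing matching $M_c$ and gives cost $2^{m-r}\cdot 2^{r}=\epsilon k$ per iteration, so the paper even avoids the factor-$4$ loss in your sketch.
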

\begin{proof}
\newcommand{\CS}{\ensuremath{\operatorname{cost}_{\operatorname{s}}}}
\newcommand{\CR}{\ensuremath{\operatorname{cost}_{\operatorname{r}}}} 
We can divide the cost of $\ALG(\sigma) = \CS + \CR$ into two parts: $\CS$ represents
the cost due to the requests within the special component, and $\CR$ represents
the cost due to the requests within regular components. Each request results in
a cost of $1$ for $\ALG$, either due to migration or communication. Therefore,
$\CS \ge R$.

Now, we derive a lower bound on $\CR$. Since we issue requests until $\ALG$
puts all processes of a component on a single server, $\ALG$ has to pay at least
the size of the smaller component each time two components on different servers
merge. Hence, if two such components of size $2^r$ merge, $\ALG$ pays at least
$2^r$.

According to Lemma \ref{lemma:lower_bounds_def} in each iteration we find a
spread-out color $c$. Thus, there are at least $\epsilon k = 2^m$ processes of
color $c$ that are not on their main server. Let $r$ be the rank of $c$. Since
the size of the extra component is less than $2^r$, there must exist
$2^{m-r}$ regular $c$-colored components of size exactly $2^r$ that are not on
their main server. We fix these $2^{m-r}$ components. Now, we greedily pick for
each such component a unique matching partner that is not on the same server.
Since for every component there are at least $2^{m-r}$ candidate partners, we
can easily find such a matching. Thus, the size of the
matching $\mathcal{M}_c^r$ must be at least $2^{m-r}$, and since each merge
results in a cost of $2^r$ for $\ALG$, its overall cost in every iteration is
at least $2^{m-r} 2^r = \epsilon k$.

There are $\ell$ colors, and during each iteration the rank of one color
increases by one. The algorithm stops when the rank of each color is at least
$m$. Therefore, there are at least $\ell m = \ell\log(\epsilon k)$ iterations.
Hence, $\CR \ge \epsilon k \ell \log(\epsilon k)$. Summarizing,
$\ALG \ge R + \epsilon k \ell \log \epsilon k$.
\end{proof}

Finally, we can state the theorem:
\LowerBoundsOneEps*
\begin{proof}
Let $C$ be the competitive ratio of $\ALG$. Due to Lemmas
\ref{lemma:lower_bounds_opt} and \ref{lemma:lower_bounds_alg} 
we know that 
$C \ge (R + \epsilon k \ell \log (\epsilon k))/(12\min(\epsilon k,
  \sqrt{R})$. We show in separate steps that
$C \ge \frac{1}{12}\ell \log (\epsilon k)$ and
$C \ge \frac{1}{12}\sqrt{\epsilon k\ell \log (\epsilon k)}$.

\begin{enumerate}[a)]
  \item Clearly, $C \ge (R + \epsilon k \ell \log \epsilon k)/(12\epsilon k) \ge \frac{1}{12} \ell \log \epsilon k$.
  \item Next, we analyze two subcases:
    \begin{enumerate}[i)]
        \item $R > \epsilon k \ell\log \epsilon k$. Then: 
        \[ C \ge \frac{R + \epsilon k \ell \log \epsilon k}{12\sqrt{R}} \ge \frac{1}{12}\sqrt{R} > \frac{1}{12}\sqrt{\epsilon k \ell\log \epsilon k} \]
        \item $R \le \epsilon k \ell\log \epsilon k$. Then: 
        \[ C \ge \frac{R + \epsilon k \ell \log \epsilon k}{12\sqrt{R}} \ge  \frac{\epsilon k \ell \log \epsilon k}{12\sqrt{R}} \ge \frac{1}{12} \sqrt{\epsilon k \ell\log \epsilon k}\]
    \end{enumerate}
    The above two subcases imply that $C \ge \frac{1}{12} \sqrt{\epsilon k\ell \log \epsilon k}$.
\end{enumerate}
The two cases directly imply the lemma.
\end{proof}
\section{Algorithmic Concepts}\label{sec:flow}
In this section, we develop basic algorithmic tools that will help us solve the
generalized learning problem. Our approach involves formulating the generalized
learning problem as a linear program. We maintain a dual solution to the LP
and bound the cost of our algorithm in terms of the dual profit. For this we
reformulate the generalized learning problem as a graph problem.

\def\TG{\bar{G}}
\subsection{Time-Expanded Graph}
For a given request sequence $\sigma$ we define the \emph{time-expanded graph}
$\TG_\sigma$ as the follows. We introduce one node for every
server $s\in S$, and $|\sigma|$ nodes $p^1,\dots,p^{|\sigma|}$ for every
process $p\in V$. The first set of nodes are called \emph{server nodes} and the
nodes $p^i$, $i\ge 1$ are called \emph{process time nodes}. For
simplicity we use $p^0$ for a process $p$ to denote the server node that in the
initial configuration holds process $p$ (thus, a server node $s$ in
$\TG_\sigma$ may have many different names; at least one for every process
initially located on $s$).

The edges of the undirected graph $\TG_\sigma$ are defined as follows.
For every process $p$ and time-step
$t\in\{1,\dots,|\sigma|\}$ we add an edge $(p^{i-1}, p^{i})$. In
addition, we add an edge $(p^t,q^t)$ if the $t$-th request
is between processes $p$ and $q$. We call the first set of edges
\emph{migration edges} and the second set \emph{communication edges}.
Figure \ref{fig:time_expanding_graph} illustrates the
time-expanded graph.

\tikzstyle{server}=[fill=black!10]
\tikzstyle{annotation}=[anchor=north,fill=none,shape=rectangle,inner
sep=2pt,line width=0pt,draw=none,minimum size=0pt, font=\footnotesize]
\begin{figure}[h]
    \centering
\begin{tikzpicture}[every node/.style={draw, fill=blue!10,line width=1pt, circle, inner sep = 0mm, minimum size=.8cm, font=\small}, scale=0.65]

  \def \dx {1.8}
  
  \foreach \x in {1,2,...,12} {
    \node (p1\x) at (\x*\dx,0) {$p_{\x}^1$};
  }

  \foreach \x in {1,2,...,12} {
    \node (p2\x) at (\x*\dx,2) {$p_{\x}^2$};
  }

  \foreach \x in {1,2,...,12} {
    \node (p3\x) at (\x*\dx,4) {$p_{\x}^3$};
  }

  \foreach \x in {1,2,...,12} {
    \node (p4\x) at (\x*\dx,6) {$p_{\x}^4$};
  }

  \coordinate (ds) at (0, -2.5);
  
  \node[server] (s1) at ($0.5*(p11)+0.5*(p14)+(ds)$) {$s_1$};
  \foreach \x in {1,2,...,4} {
    \draw[line width=1pt] (s1) -- (p1\x);
  }

  \node[server] (s2) at ($0.5*(p15)+0.5*(p18)+(ds)$) {$s_2$};
  \foreach \x in {5,6,...,8} {
    \draw[line width=1pt] (s2) -- (p1\x);
  }

  \node[server] (s3) at ($0.5*(p19)+0.5*(p112)+(ds)$) {$s_3$};
  \foreach \x in {9,10,...,12} {
    \draw[line width=1pt] (s3) -- (p1\x);
  }

h  
  \foreach \x in {1,2,...,12} {
    \draw[line width=1pt] (p1\x) -- (p2\x);
    \draw[line width=1pt] (p2\x) -- (p3\x);
    \draw[line width=1pt] (p3\x) -- (p4\x);
  }

  \draw[blue,line width=2pt] (p12.north east) to[bend left=45] (p13.north west);
  \draw[blue,line width=2pt] (p24.north east) to[bend left=45] (p26.north west);
  \draw[blue,line width=2pt] (p36.north east) to[bend left=25] (p39.north west);
  \draw[blue,line width=2pt] (p41.north east) to[bend left=20] (p49.north west);

  \node[annotation] at (s1.south) {$p_1^0=p_2^0=p_3^0=p_4^0$};
  \node[annotation] at (s2.south) {$p_5^0=p_6^0=p_7^0=p_8^0$};
  \node[annotation] at (s3.south) {$p_9^0=p_{10}^0=p_{11}^0=p_{12}^0$};
\end{tikzpicture}

\caption{An illustration of the time expanded graph $\TG_\sigma$ for $\ell=3$ servers, 
server capacity $k=4$, and $|\sigma|=4$ requests. The bottom row are the server
nodes, the remaining nodes are process time nodes.
The request sequence is $\sigma = (\{p_2, p_3\}, \{p_4, p_6\}, \{p_6, p_9\},
\{p_1,p_9\})$. The blue/bend/thick/horizontal edges are \emph{communication
  edges}. All other edges are \emph{migration edges}.
\label{fig:time_expanding_graph}}
\end{figure}
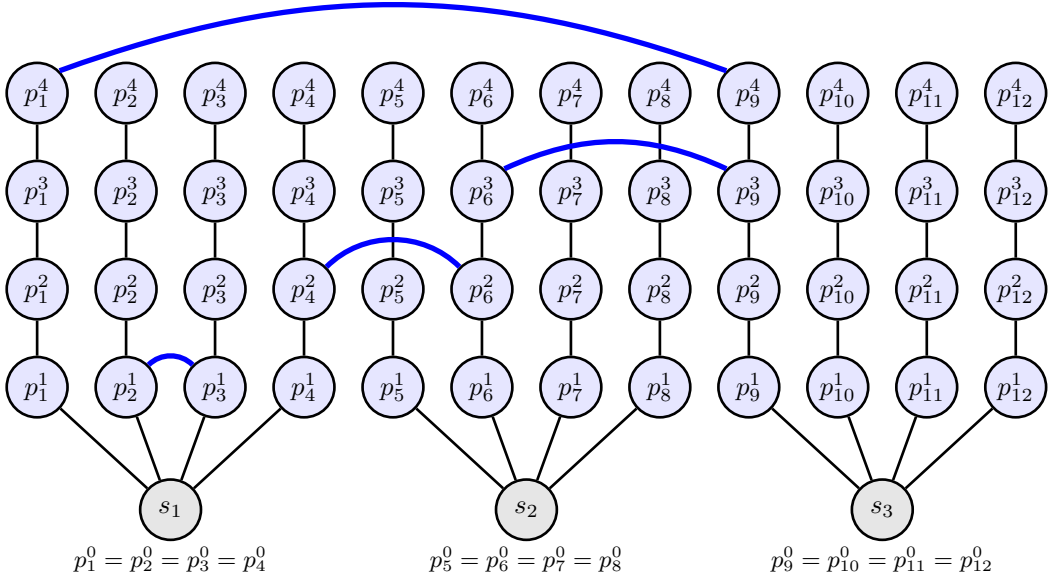

\subsection{Linear Program Formulation}
Suppose we assign edge-lengths $d_e \in \{0, 1\}$ to the edges of the graph
$\TG_\sigma$, with the following meaning: if a communication edge 
$(p^t,q^t)$ has length $1$, processes $p$ and $q$ \emph{may} be scheduled on
different servers at time $t$ (then the request is served via communication).
Similarly, if a migration edge $(p^{\tau-1},p^\tau)$ has length $1$ the 
process $p$ may switch its server at time $\tau$ (meaning servers for time $\tau-1$
and $\tau$ are different).

With this interpretation a distance of $0$ between a node $p^\tau$ and a server
node $s$ implies that $p$ is scheduled on $s$ at time $\tau$. Consequently, the
distance between any two server nodes $s$ and $s'$, must be at least 1, as
otherwise the process time nodes on a shortest path between $s$ and
$s'$ would have to be scheduled on both servers.

\def\calP{\mathcal{P}}
\def\lpprimal{LP~\texttt{Primal}}
\def\lpdual{LP~\texttt{Dual}}
Let $\calP_S$ denote the set of simple paths in $\TG_\sigma$ that connect two 
different server
nodes. The above discussion implies that \lpprimal\/ in Figure~\ref{lp:lp1}
gives a lower bound on $\OPT(\sigma)$.

\begin{figure}[h]
\hfill
\begin{minipage}{0.3\textwidth}
\begin{align*}
    \text{\underline{\lpprimal:}} \\[10pt]
	\min&&\quad\sum_{e \in E_{\TG_\sigma}} d_e\\
	\forall  P \in \mathcal{P}_S:&&  \quad\sum_{e \in P} d_e &\ge 1\\
    \forall  e \in E_{\TG_\sigma}:&&  \quad d_e &\ge 0\\
\end{align*}
\end{minipage}
\hfill\hfill
\begin{minipage}{0.3\textwidth}
\begin{align*}
    \text{\underline{\lpdual:}} \\[10pt]
	\max&&\quad\sum_{P \in \calP_S} f_{P}\\
	\forall  e \in E_{\TG_\sigma}:&&  \quad\sum_{P: e \in \calP_S} f_{P} &\le 1\\
    \forall  P \in \calP_S:&&  \quad f_{P} &\ge 0\\
\end{align*}
\end{minipage}
\hfill\hspace*{0pt}
\caption{Linear program relaxation of the generalized learning problem}
\label{lp:lp1}
\end{figure}

We can interpret the dual linear program $\texttt{Dual}$ in \ref{lp:lp1} as a
flow problem. We define $c_e := \sum_{P \in \calP_S: e \in P} f_{P}$ as the
\emph{congestion} of an edge $e$. 
Furthermore, we can interpret the variable $f_{P}$ as the flow we send from one
server to another on the path $P$. Therefore we call $\calP_S$ the set of
\emph{flow-augmenting paths} in the following.
The objective is to maximize the overall flow amount while ensuring that the
congestion
of an edge is at most one.

\begin{lemma}
    Let $\mathcal{P} \subseteq \mathcal{P}_S$ be a set of edge-disjoint simple
    paths. Then, $|\mathcal{P}| \le \OPT(\sigma)$.
\label{lemma:edge_disjoint}
\end{lemma}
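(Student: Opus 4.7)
The plan is to exhibit a feasible solution to \lpdual\/ whose objective value equals $|\mathcal{P}|$, and then chain two inequalities: weak LP duality and the already-established fact that \lpprimal\/ lower bounds $\OPT(\sigma)$.

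Concretely, given the edge-disjoint family $\mathcal{P}\subseteq\mathcal{P}_S$, I set $f_P := 1$ for every $P\in\mathcal{P}$ and $f_P := 0$ otherwise. Non-negativity is immediate. For every edge $e\in E_{\TG_\sigma}$ the congestion is
\[
    c_e = \sum_{P\in\mathcal{P}_S:\, e\in P} f_P = |\{P\in\mathcal{P}: e\in P\}| \le 1,
\]
where the inequality is exactly the edge-disjointness of $\mathcal{P}$. Hence this assignment is feasible for \lpdual, and its objective value is $\sum_{P\in\mathcal{P}_S} f_P = |\mathcal{P}|$.

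Next I invoke weak LP duality: the value of any feasible dual solution is at most the value of any feasible primal solution, in particular at most the optimum of \lpprimal. Combined with the observation already established in the text preceding Figure~\ref{lp:lp1}, namely that \lpprimal\/ gives a lower bound on $\OPT(\sigma)$ (since any actual schedule induces $\{0,1\}$ edge-lengths that separate every pair of distinct server nodes by at least one, hence a feasible primal solution of cost equal to the migration-plus-communication cost of that schedule), I obtain
\[
    |\mathcal{P}| \;\le\; \text{val}(\lpdual) \;\le\; \text{val}(\lpprimal) \;\le\; \OPT(\sigma),
\]
which is the claim.

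There is no real obstacle here; the only thing that requires a moment of care is the translation between the two interpretations of the dual, namely (i) as a sum of path-indicator variables $f_P$ and (ii) as an edge-congestion bound, and the observation that these two are exactly what edge-disjointness gives for free. No combinatorial argument beyond the LP-duality chain is needed.
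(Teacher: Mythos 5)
Your proof is correct and follows exactly the paper's own argument: set $f_P=1$ on each path in $\mathcal{P}$, note that edge-disjointness gives congestion at most $1$, and conclude via weak duality together with the fact that \lpprimal\/ lower bounds $\OPT(\sigma)$. No differences worth noting.
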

\begin{proof}
We set $f_{P}=1$ for $P \in \mathcal{P}$, i.e., we send a flow amount of one on
each such path. Since the paths are edge-disjoint, the congestion on an edge is
at most 1. 
 Thus,
there is a feasible dual solution with objective value $|\mathcal{P}|$.
Hence, the lemma follows from weak duality.
\end{proof}

\subsection{Clustering}

We separate our online algorithms into two parts, a \emph{clustering part} and
a \emph{scheduling part}. The clustering part maintains a partition of the
process set $V$ into disjoint pieces $C_1,C_2,\dots$. We call this partition a
\emph{clustering} and the individual pieces \emph{clusters}. The scheduling
part of an algorithm then decides which cluster is going to which server but it
does not split clusters among different servers. In the following we describe
the flow clustering procedure, which operates on a subset $S\subseteq V$ of
processes and decides how to split these subsets into clusters. This procedure
forms a basic building block in our algorithms and is used for making
clustering decisions when the optimum cost is rather low.

\mathversion{bold}
\paragraph{Flow Clustering Procedure}\label{flow_clustering_procedure}
\mathversion{normal}

\def\calS{\mathcal{S}} 
Let $S\subseteq V$ denote a subset of processes. The flow procedure $\FLOW$ receives 
requests among processes in $S$ and maintains a clustering $\calS$ of $S$. The
procedure stops when $\OPT$ must have had a large cost \emph{for
  requests in $S$}.\footnote{For now we assume that their are no requests
of the form $\{s,x\}$, with $s\in S$ and $x\notin S$. Later in 
Section~\ref{sec:upper_bound_two_eps}, we will have such requests. Then we will \emph{merge} 
clustering procedures.}

Recall that we associate each server with a unique color. In addition to
maintaining the clustering the flow procedure also assigns colors to clusters
with the understanding that (usually) a cluster $C\in\calS$ with color $c$ is
scheduled on the server with color $c$. However, the scheduling part may
deviate from this \enquote{suggestion} in order to enforce balance constraint.
Moreover, not every cluster has a color. There is one \emph{special cluster}
that is not assigned a color. All other clusters in $\calS$ have a color
assigned and are called \emph{colored clusters}.

Each request between processes in $S$ might trigger an update of $\calS$. 
We (artificially) define the cost of the flow procedure for an update operation
as the number of processes that change their color. Later we will
show that the actual cost for scheduling the clusters is not too much more then
this cost.

The procedure distinguishes between two types of processes in $S$: \emph{linked}
processes $L$ that always stay within their initial cluster, and \emph{free}
processes $F$, which can be migrated between clusters. 
Initially, all processes in $S$ are linked. 
During the procedure, linked processes may become free, but
once free, a process may not become linked again. Crucially, the number of free
processes (roughly) acts as a lower bound of an optimum solution \emph{for
  requests involving processes in $S$}.
The procedure is started with a cost parameter $Z$. Once the number of free 
processes reaches $Z$ the procedure stops. Further requests involving a process
from $S$ are handled by another building block.

Let $\TG[S]$ be the time-expanded graph induced by the process set $S$. To
compare the update cost of the flow procedure to $\OPT$ the procedure maintains
a set of disjoint paths $\mathcal{P}\subseteq\calP_S$ in $\TG[S]$. By setting
$f_{P} = 1$ for $P \in \mathcal{P}$, we get a feasible dual solution to the
linear program.
In order to construct these paths we will
prove the following invariant:
\begin{invariant}
Let $\tau$ be the current time-step. For each $c$-colored cluster $C$ and each 
process $p \in C$, there is a path $Q_\tau(p)$,
in the time-expanded graph $\TG[S]$, that
\begin{enumerate}[i)]
    \item connects the server node with color $c$ to $p^\tau$ (and does not
    contain other server nodes); \label{invA}
    \item only contains nodes of the form $v^{i}$, $i\le\tau$, $v\in C$; \label{invB}
    \item exactly one of the processes that corresponds to nodes on the
    path is linked; \label{invC}
    \item is edge disjoint from all paths in $\mathcal{P}$.\label{invD}
\end{enumerate}
\label{inv1}
\end{invariant}

\subsubsection{Request Processing}
The procedure operates in iterations. 
At the beginning of an iteration, the free processes are part of the special
cluster.
A request $\sigma_t=(p,q)$ between processes $p$ and $q$ with
$p,q\in S$ is handled as follows.
\begin{enumerate}
\item If $p$ and $q$ are in the same cluster, we ignore the request as it does
not require any communication/migration. 
\item Suppose, exactly one of $p$ and $q$ is in the special cluster.
W.l.o.g., let $q$ be the process in the special cluster. Then, $p$ must
belong to a colored cluster $C$. We migrate $q$ to $C$. 
\item Otherwise, both $p$ and $q$ belong to colored clusters.
According to Lemma~\ref{lemma:invariant} there are paths $Q_t(p)$ and $Q_t(q)$,
which fulfill the properties of Invariant~\ref{inv1}. In particular, there are
exactly two linked processes $p'$ and $q'$ that are contained in the paths
$Q_t(p)$ and $Q_t(q)$, respectively. 
\begin{itemize}
\item We turn $p'$ and $q'$ into free processes. 
\item We update the set $\mathcal{P}$ by adding the path
$P=Q_t(p)\circ (p^t, q^t) \circ Q_t^{-1}(q)$.
\item We move all free processes to the special cluster.
\item If now $|F| \ge Z$, the procedure terminates. Otherwise, we proceed to the
next iteration.
\end{itemize}
\end{enumerate}

\subsubsection{Correctness Analysis}
\begin{lemma}
    The Invariant~\ref{inv1} is maintained throughout the execution of the procedure.
    \label{lemma:invariant}
\end{lemma}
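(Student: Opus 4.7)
I would prove the invariant by induction on the number of processed requests. The base case $\tau = 0$ is immediate: every process $v$ resides in its home-colored cluster, is linked, and the degenerate path $(v^0)$ consisting of just the server node satisfies properties (i)--(iv).

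For the inductive step, it is cleanest to first perform a \emph{time-advance} update and then handle the request. Before processing $\sigma_\tau = (p,q)$, for every process $v$ in a colored cluster $C$ I extend $Q_{\tau-1}(v)$ by appending the migration edge $(v^{\tau-1}, v^\tau)$. Properties (i)--(iii) are trivially preserved, since the appended node $v^\tau$ corresponds to the same process $v$ that already appears on the path. Property (iv) holds because the edge $(v^{\tau-1}, v^\tau)$ lives at time $\tau$ and hence cannot appear in any path that was frozen into $\mathcal{P}$ at an earlier step $t < \tau$.

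I then distinguish the three cases of the procedure. Case~1 (same cluster) requires no change. In Case~2 I set $Q_\tau(q) := Q_\tau(p) \circ (p^\tau, q^\tau)$ and leave the other paths untouched; properties (i) and (ii) are immediate, (iii) is preserved because the only new process introduced to the path is $q$, which was in the special cluster and is therefore free, and (iv) holds because $Q_\tau(p)$ is edge-disjoint from $\mathcal{P}$ by the inductive hypothesis while the communication edge $(p^\tau, q^\tau)$ is fresh at time $\tau$.

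The main obstacle is Case~3: after the merge, clusters $C_p$ and $C_q$ lose their linked sacrificees $p', q'$ and every free process that had been placed there by earlier Case~2 migrations, and I need to re-establish a valid path for each remaining (linked) resident. The crux is to assign each such linked $v$ the \emph{trivial migration path} $(v^0, v^1, \dots, v^\tau)$; this path automatically satisfies (i)--(iii) since its only underlying process is $v$ itself. To verify (iv) against the \emph{old} $\mathcal{P}$, I would observe that a process that is currently linked has always been linked, and therefore was never the unique linked process singled out during any past Case~3 step; since past $\mathcal{P}$-paths only contain migration edges of processes appearing on them (and by the inductive (iii) each such path had exactly one linked process, namely the one turned free in that step), no migration edge of $v$ can lie in the old $\mathcal{P}$. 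To verify (iv) against the freshly added path $Q_\tau(p) \circ (p^\tau, q^\tau) \circ Q_\tau(q)^{-1}$, the inductive form of (iii) guarantees that $p'$ and $q'$ are the only linked processes on $Q_\tau(p)$ and $Q_\tau(q)$ respectively, so a linked $v \ne p', q'$ contributes none of its migration edges there. To finish, I would check that the newly added path is itself admissible as an element of $\calP_S$: the node sets of $Q_\tau(p)$ and $Q_\tau(q)$ lie over disjoint process sets because $C_p \ne C_q$, so the concatenation is simple, and the two endpoints are the distinct server nodes of colors $c_p \ne c_q$.
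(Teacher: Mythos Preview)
Your proof is correct and rests on the same key observation as the paper: every process whose nodes appear on a path in $\mathcal{P}$ has already been converted to free, so the trivial migration path of any still-linked process is automatically edge-disjoint from $\mathcal{P}$. The paper's only organizational difference is that it inducts over \emph{iterations} rather than requests---it establishes the invariant at the start of each iteration (where every colored cluster contains only linked processes and trivial paths work) and then handles time-advance and Case~2 within the iteration---so Case~3 is absorbed into the base case of the next iteration and never argued separately. One minor omission in your Case~3 handling: at the end of an iteration \emph{all} colored clusters, not just $C_p$ and $C_q$, send their free processes back to the special cluster; your trivial-path reassignment and your verification of~(iv) against the newly added path apply verbatim to those clusters as well, so the gap is purely expository.
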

\begin{proof}
  Fix a colored cluster $C$ and a process $p\in C$. 
  We first show that the invariant holds at the start of an iteration.

  At this
  time all processes in $C$ are linked, since the free processes are
  moved to the special cluster at the end of the previous iteration. Hence, $p$
  is linked. Whenever,
  a path is added to $\calP$ at the end of an iteration all linked processes that correspond to nodes
  on the path are converted into free processes (these are exactly two). Hence,
  the paths in $\calP$ cannot contain any node of the form $q^i$, $i\ge 1$
  where $q$ is linked (they can contain $q^0$ as this is a server node and
  exists under many different names).

  Therefore, the path $p^0,p^1,\dots,p^\tau$ is edge-disjoint from any path in
  $\calP$. All other constraints in Invariant~\ref{inv1} trivially hold for
  this path. This completes the proof for the start of an iteration.

  Now, suppose that the time-step $\tau$ increases and 
  a request $\sigma_\tau=\{x,y\}$ occurs (for the new value of $\tau$).
  Let $p$ denote a process in a colored
  cluster $C$ and let $s_c$ denote the server node with color $c$. By
  induction hypothesis there is a path $P$ from $s_c$ to $p^{\tau-1}$ that fulfills
  constraints~\ref{invB},\ref{invC}, and \ref{invD} of Invariant~\ref{inv1}. By
  extending $P$ by the edge $p^{\tau-1},p^{\tau}$ we get the desired path for
  $p^\tau$.

  It remains to construct a path for the case that a colored cluster $C$ changes
  by adding a free process to it. For this suppose that for the request
  $\sigma^\tau=\{x,y\}$, $y\in C$ and $x$ is a free process.
  Then, we already have constructed a path that connects $s_c$ to $y_\tau$. By extending this path
  via the edge $(x^\tau,y^\tau)$ we obtain the desired path to $x^\tau$.
  Clearly, the edge $(x^\tau,y^\tau)$ cannot be contained in any path from
  $\calP$ as these paths only contain nodes with time-stamp strictly less
  than $\tau$.
\end{proof}

\begin{lemma}
    The paths in $\mathcal{P}$ are pairwise edge disjoint.
\end{lemma}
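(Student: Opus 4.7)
The plan is to proceed by induction on the number of iterations of the flow procedure in which a new path is added to $\mathcal{P}$. The base case is trivial because initially $\mathcal{P}=\emptyset$. For the inductive step, suppose that before some request $\sigma_t=\{p,q\}$ between processes in two distinct colored clusters, the paths in $\mathcal{P}$ are already pairwise edge-disjoint. The procedure then adds
\[
P \;=\; Q_t(p)\circ (p^t,q^t)\circ Q_t^{-1}(q),
\]
and I need to show (a) that $P$ is edge-disjoint from every previously added path, and (b) that $Q_t(p)$ and $Q_t(q)$ are edge-disjoint, so that $P$ itself does not reuse an edge.

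For (a), the two sub-paths $Q_t(p)$ and $Q_t(q)$ are edge-disjoint from all paths in $\mathcal{P}$ by part~\ref{invD} of Invariant~\ref{inv1}, which is maintained by Lemma~\ref{lemma:invariant} just before the request is processed. The only remaining edge of $P$ is the communication edge $(p^t,q^t)$. Every path already in $\mathcal{P}$ was constructed at some earlier iteration, i.e., at some time $t'<t$, so by part~\ref{invB} of Invariant~\ref{inv1} it only visits nodes with time-stamp at most $t'<t$; hence none of these paths can contain the fresh communication edge $(p^t,q^t)$. This proves (a).

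For (b) I would use the fact that $p$ and $q$ lie in distinct colored clusters $C_p$ and $C_q$, with distinct colors $c_p\neq c_q$. By part~\ref{invB} of Invariant~\ref{inv1}, the internal process-time nodes of $Q_t(p)$ are all of the form $v^i$ with $v\in C_p$, and those of $Q_t(q)$ are all of the form $w^i$ with $w\in C_q$; since $C_p\cap C_q=\emptyset$, these node sets are disjoint. By part~\ref{invA}, the only server node on $Q_t(p)$ is the one with color $c_p$, and the only server node on $Q_t(q)$ is the one with color $c_q$, and these are distinct because $c_p\neq c_q$. Thus the two paths share no node at all, and in particular no edge.

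Combining (a) and (b), the new path $P$ is simple and edge-disjoint from all previously added paths, so the invariant is preserved. The main obstacle is really just (b): one has to be careful that the node-format condition in Invariant~\ref{inv1}.\ref{invB} together with the distinct-color observation forces node-disjointness of $Q_t(p)$ and $Q_t(q)$; everything else falls out of the freshness of the edge $(p^t,q^t)$ and the inductive hypothesis.
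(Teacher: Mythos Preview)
Your proof is correct and follows essentially the same inductive approach as the paper. The only cosmetic difference is that in part~(b) you establish full node-disjointness of $Q_t(p)$ and $Q_t(q)$ via the (true) fact that distinct colored clusters carry distinct colors, whereas the paper argues only edge-disjointness directly from Property~\ref{invB}; both routes are valid.
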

\begin{proof}
We prove the lemma by induction. Initially, the set $\calP$ is empty.
Now, assume that the paths in the current set $\calP$ are pairwise
disjoint and that the request $\sigma_\tau=\{p, q\}$ ends the current iteration,
i.e., a new path $P=Q_\tau(p)\circ(p^\tau, v^\tau)\circ Q_\tau^{-1}(q)$ is added to $\calP$.

By Lemma~\ref{lemma:invariant}, the paths $Q_\tau(u)$ and $Q_\tau(v)$ are edge
disjoint with any path in $\calP$. Clearly, no path in $\calP$ can contain the
edge $(p^\tau, q^\tau)$. Furthermore, the paths $Q_\tau(p)$ and $Q_\tau(q)$
cannot share edges, since $p$ and $q$ are in different colored clusters
(Property~\ref{invB} of Invariant~\ref{inv1}). Hence,
the path $P$ is a simple path from one server node to another, that is edge
disjoint with any path in $\mathcal{P}$.
\end{proof}

\subsubsection{Cost Analysis}
Let $F$ be the current set of free processes of the $\FLOW$-procedure, and let $X$ be its special cluster.

\begin{lemma}
     $|F| = 2|\mathcal{P}|$.
    \label{lemma:flow_dual}
\end{lemma}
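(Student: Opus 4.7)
The plan is to prove the identity by induction on the number of requests processed. Initially $|F| = |\mathcal{P}| = 0$, so the base case is trivial, and it remains only to check that each request preserves the equation by going through the three cases of the request-processing routine.

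The first two cases are immediate: in Case~1 the request is ignored, so nothing changes; in Case~2 the procedure only migrates an already-free process from the special cluster into a colored cluster, but does not alter any process's linked/free status and does not append a new path to $\mathcal{P}$, so both sides of the identity are unchanged. Hence the entire content of the induction step lies in Case~3.

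In Case~3 exactly one new path $P = Q_t(p)\circ (p^t,q^t)\circ Q_t^{-1}(q)$ is appended to $\mathcal{P}$, so $|\mathcal{P}|$ grows by one. Simultaneously the unique linked processes $p'$ on $Q_t(p)$ and $q'$ on $Q_t(q)$ — whose existence and uniqueness on each path is guaranteed by Invariant~\ref{inv1}(\ref{invC}) — are converted from linked to free. The subsequent bulk migration of all free processes into the special cluster does not alter any process's free/linked status, and the termination check does not change counts either, so $|F|$ grows by exactly the number of newly freed processes in this step.

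The one step that requires care is verifying $p'\neq q'$, so that $|F|$ increases by $2$ and not by $1$. This should follow from Invariant~\ref{inv1}(\ref{invB}): the path $Q_t(p)$ consists only of nodes $v^i$ with $v$ in the colored cluster of $p$, and analogously for $Q_t(q)$. Because the Case~3 hypothesis places $p$ and $q$ in distinct colored clusters, which form a partition of $S$ and are therefore process-disjoint, the two witnessing linked processes lie in disjoint subsets of $S$ and must be distinct. Consequently Case~3 increases $|F|$ by $2$ and $|\mathcal{P}|$ by $1$, closing the induction and establishing $|F|=2|\mathcal{P}|$.
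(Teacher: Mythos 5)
Your proof is correct and follows essentially the same inductive argument as the paper: both sides of the identity change only when a new flow-augmenting path is created, at which point $|\mathcal{P}|$ increases by one and $|F|$ by two. Your extra check that $p'\neq q'$ (via the disjointness of the colored clusters guaranteed by Invariant~\ref{inv1}(\ref{invB})) is a point the paper leaves implicit, but it does not change the structure of the argument.
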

\begin{proof}
    Initially, the sets $F$ and $\mathcal{P}$ are both empty. We change them only at the end of each iteration. We add two new processes to the set $F$, and one new path to $\mathcal{P}$. Hence, at all times $|F| = 2|\mathcal{P}|$ holds.
\end{proof}

\begin{lemma}
     At any time the update cost of the $\FLOW$ procedure is at most $|F|^2 - |X|$.
     \label{lemma:flow_cost_F2}
\end{lemma}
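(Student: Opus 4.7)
The plan is to prove the bound by a potential-function argument with $\Phi := |F|^2 - |X|$, showing that in every request-processing step the number of color changes (i.e., the incremental update cost) is at most the change $\Delta \Phi$. Since $|F|=|X|=0$ initially and no cost has been incurred then, induction on the number of processed requests immediately gives $\text{cost}\le\Phi=|F|^2-|X|$ at every intermediate point.

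The case analysis mirrors the three cases of the request-processing rules. Case~1 is trivial: both endpoints are in the same cluster, so no process changes color and no quantity moves. For Case~2 one free process $q$ is migrated from $X$ to a colored cluster $C$, so the cost is exactly $1$, $|X|$ drops by $1$ while $|F|$ is unchanged, and $\Delta\Phi=+1$, matching the cost. Case~3 is the interesting one. Writing $f=|F|$ and $x=|X|$ just before the step, observe that $X\subseteq F$ (linked processes never leave their initial colored cluster, so only free processes can be in $X$), hence $x\le f$. In this step two linked processes are turned free and \emph{every} free process is then moved to $X$, producing the new values $|F|=f+2$ and $|X|=f+2$. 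The number of processes that actually change color equals the number of free processes not already in $X$, namely the two newly-freed ones plus the $f-x$ previously-free processes that were sitting in colored clusters from earlier Case~2 steps of the same iteration, so the cost is $f+2-x$. A direct computation yields
\begin{equation*}
\Delta\Phi \;=\; (f+2)^2 - (f+2) - \bigl(f^2 - x\bigr) \;=\; 3f + 2 + x,
\end{equation*}
and therefore $\Delta\Phi - \text{cost} = (3f+2+x) - (f+2-x) = 2f + 2x \ge 0$, as required.

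The main delicate point is the accounting in Case~3: one must recognize that the free processes not in $X$ at the moment an iteration ends are precisely those that had been sent out of $X$ into colored clusters by previous Case~2 events of the same iteration, and that all of them (plus the two newly-freed linked processes) pay one unit each when they are collected back into $X$. Once this counting is done correctly, the potential inequality follows from elementary algebra and the bound $|X|\le|F|$, and the lemma is established by induction.
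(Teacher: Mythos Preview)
Your proof is correct and follows essentially the same approach as the paper: both argue by induction on requests using the potential $\Phi=|F|^2-|X|$ and the same case split. The only difference is that in Case~3 you count the cost exactly as $f+2-x$ (using the observation $X\subseteq F$), whereas the paper simply upper-bounds it by $|F'|=f+2$ and then checks $(f+2)^2-4(f+2)+4-|X|+(f+2)\le(f+2)^2-|X'|$ via $|F'|\ge 2$; both computations yield the desired inequality.
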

\begin{proof}
We prove the statement by using induction on the number of requests. Initially,
$F$ and $X$ are both empty. Hence, the base case is trivial. For the induction
step assume, that prior to the current request the cost is at most
$|F|^2 - |X|$.

There are two cases where the procedure incurs cost and/or the sets $F$ or $X$ change.

\begin{itemize}
\item A request $\{x,p\}$ occurs between a process $x\in X$ and a process $p$ from a
colored cluster $C$:\\
Subsequently, $x$ migrates to the colored cluster, which induces an update cost
of $1$. Since $x$ leaves the special cluster $X$, its size decreases by one.
Hence, the cost after this step is at most
    \[|F|^2 - |X| + 1 = |F|^2 - (|X| - 1) = |F|^2 -|X'|\enspace,\] 
    where $X'$ is the new special cluster. The set $F$ does not change.
    \item A request $\{p,q\}$ occurs between processes $p\in C_p$ and $q\in C_q$, where $C_p,C_q$ are colored clusters:\\
    In this case we create two new free processes and all free processes are
    moved to the special cluster. Let $F'$ be the new set of free processes and
    $X'$ the special cluster after this step. By construction, $X' = F'$. The
    cost of migration of the free processes is at most $|F'|$.
    Hence, the new update cost is at most
    \begin{align*}
    |F|^2 - |X| + |F'|  = (|F'|-2)^2 - |X| + |F'| 
                = |F'|^2 - 4 |F'| + 4 - |X| + |F'| 
                \le |F'|^2 - |X'|.
    \end{align*}
    The last inequality holds because $|F'| \ge 2$. Hence, the invariant is maintained.
\end{itemize}
\end{proof}

\mathversion{bold}
\section{A Deterministic $\mcO(\max(\sqrt{k\ell \log k}, \ell \log k))$-competitive Algorithm with Augmentation $1+\epsilon$}
\mathversion{normal} In this section, we introduce a deterministic algorithm
that solves the generalized learning problem online with a competitive ratio of
$\mcO(\max(\sqrt{k\ell \log k}, \ell \log k))$ and augmentation $1+\epsilon$.

A key ingredient of our approach is the deterministic online algorithm for the
standard learning model presented in \cite{soda21}, which we use as a
subroutine $\SODA$.


\subsection{Algorithm}
Let $\mathcal{I}$ denote the initial configuration, where each set
$C \in \mathcal{I}$ corresponds to the processes initially located at
some server.
We assign each $C \in \mathcal{I}$ the color of its respective server.
The algorithm for the general learning problem operates in two phases and
consists of two main components: the $\FLOW$ procedure (\ref{flow_clustering_procedure}) and the $\SODA$
subroutine.
\begin{enumerate}
\item Phase 1: \\We execute the flow procedure with cost parameter
$Z=\min(\sqrt{k\ell \log k}, \epsilon k)$ on the process set $V$, starting from the
initial configuration $\mathcal{I}$, until it terminates. After processing each
request, the flow procedure outputs a clustering $\mathcal{S}$, which
consists of $|\mathcal{I}|$ colored clusters and one special cluster. The
assignment of clusters to servers is as follows: the colored clusters are
scheduled on their corresponding servers, while the special cluster is placed
on Server~1.
\item Phase 2: \\Once the flow procedure terminates, we reset the configuration
to the initial state $\mathcal{I}$. We then reprocess the request sequence from
the beginning using the $\SODA$ algorithm.
\end{enumerate}
Note that $\SODA$ operates in the standard learning model, which means that it assigns
entire connected components of the demand graph to a single server. However, we
will demonstrate that the cost incured during the second phase is not
significantly higher than that of the first phase.

%

\subsection{Correctness Analysis}

\begin{lemma}
    The algorithm uses augmentation at most $1+\epsilon$.
    \label{lemma:feas_one_eps}
\end{lemma}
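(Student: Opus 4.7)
The plan is to verify the capacity bound of $(1+\epsilon)k$ separately for the two phases of the algorithm. For Phase~1, the key observation I would invoke is that the $\FLOW$ procedure terminates as soon as $|F| \ge Z$, so the set of free processes satisfies $|F| \le Z + O(1) \le \epsilon k$ throughout, since $Z = \min(\sqrt{k\ell \log k},\epsilon k) \le \epsilon k$. For Phase~2 I would simply inherit the capacity guarantee of the $\SODA$ subroutine from~\cite{soda21}.

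For a colored cluster $C$ of color $c \ne 1$, which is scheduled alone on Server~$c$, the plan is to split $|C|$ into linked and free contributions. The linked members of $C$ are, by construction, a subset of the $k$ processes initially placed on Server~$c$, since linked processes never change cluster and leave the instant they become free (moving to the special cluster at the end of the iteration). The free members of $C$ are a subset of $F$. Consequently $|C| \le k + |F| \le (1+\epsilon)k$, matching the augmented capacity on Server~$c$.

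The slightly more delicate case is Server~1, which simultaneously hosts the colored cluster $C_1$ of color~$1$ and the special cluster $X$. Here I would again decompose $C_1$ into its linked and free parts. The linked members of $C_1$ contribute at most $k$, while the free part of $C_1$ and the special cluster $X$ are disjoint subsets of $F$ and therefore jointly contribute at most $|F|$. Combining these gives $|C_1| + |X| \le k + |F| \le (1+\epsilon)k$, so Server~1 also stays within the augmented capacity.

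The main (minor) subtlety I anticipate is an off-by-one issue: the iteration of $\FLOW$ that triggers termination first introduces two new free processes and only then checks $|F| \ge Z$, so $|F|$ may briefly exceed $Z$ by at most $2$. This excess is absorbed into the $\epsilon k$ slack for all sufficiently large $k$ (or one may simply redefine $Z$ to be two smaller). After termination, Phase~2 resets to the initial configuration $\mathcal{I}$ and runs $\SODA$, whose $(1+\epsilon)$-augmentation guarantee is established in~\cite{soda21}, completing the verification.
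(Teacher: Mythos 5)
Your proposal is correct and follows essentially the same argument as the paper: only free processes ever leave their initial cluster, their number is capped at $Z\le\epsilon k$ by the termination rule, so no server's load exceeds $k+\epsilon k$, and Phase~2 inherits the $(1+\epsilon)$ guarantee of $\SODA$. Your explicit linked/free decomposition and the separate treatment of Server~1 (hosting both the color-1 cluster and the special cluster) are just a more careful spelling-out of the paper's terser ``at most $\epsilon k$ processes leave their home server'' step.
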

\begin{proof}
By construction of the flow procedure, only free processes are allowed to
leave their initial cluster. The procedure ends if the number of free
processes reaches $Z=\min(\sqrt{k\ell \log k}, \epsilon k)$. Therefore, at most
$\min(\sqrt{k\ell \log k}, \epsilon k) \le \epsilon k$ processes will leave
their initial cluster during the execution of the
flow procedure. Since each 
cluster uniquely corresponds to a server, at most
$\epsilon k$ processes will leave their home server. Consequently, no server can
be overloaded by more than $\epsilon k$ processes. In the second phase, we apply
the $\SODA$ algorithm from~\cite{soda21}, which by design, uses augmentation at most $1+\epsilon$.
\end{proof}

\subsection{Cost Analysis}
The following lemma is not explicitly stated in~\cite{soda21} but directly
follows from their analysis. See Appendix~\ref{appendix_a}.
\begin{restatable}{lemma}{LemmaSodaCost}[\cite{soda21}]
    For any request sequence $\sigma$ that obeys the learning restriction
    the cost $\SODA(\sigma)$ is at most $\mcO(k\ell\log k)$.
    \label{lemma:soda_cost}
\end{restatable}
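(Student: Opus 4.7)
The plan is to extract from the analysis of the $\SODA$ subroutine in~\cite{soda21} an absolute bound on the total number of process migrations, and then sum the bound over all processes. Since $\SODA$ operates in the standard learning model, it assigns every connected component of the demand graph $G_\sigma$ to a single server and keeps it there, so it pays no communication at all. Consequently, the entire cost of $\SODA$ is captured by its migration count, and it suffices to bound that count by $O(k\ell \log k)$.

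The central quantitative ingredient, already implicit in the analysis of~\cite{soda21}, is a per-process migration bound: every individual process $p$ is migrated at most $O(\log k)$ times over the course of the whole request sequence. The intuition is a standard doubling argument: each migration of $p$ can be charged against a geometric increase in the size of the connected component of $G_\sigma$ that contains $p$, and every connected component in the learning model has size at most $k$. The scheduling part of the subroutine additionally has to reassign the enlarged component to a server that still fits within the augmentation budget $1+\epsilon$, which is precisely what the hierarchical (level-based) structure of $\SODA$ is designed to handle.

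Summing this per-process bound over the $n = k\ell$ processes immediately yields
\begin{equation*}
    \SODA(\sigma) \;\le\; n \cdot O(\log k) \;=\; O(k\ell \log k),
\end{equation*}
which is the claimed bound. The main difficulty is reproducing the charging argument faithfully: in particular, when an enlarged component must be evicted from its current server to accommodate the augmentation constraint, one needs to argue that the extra migrations triggered by this eviction can still be charged to the same doubling step, without inflating the per-process count beyond $O(\log k)$. Since this analysis is not new and merely needs to be re-read out of~\cite{soda21} at the level of absolute rather than competitive cost, the full derivation is transcribed in Appendix~\ref{appendix_a}.
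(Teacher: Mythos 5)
Your overall strategy---reading the charging scheme of \cite{soda21} at the level of absolute rather than competitive cost and summing over processes---is the same as the paper's, and your final bound is the right one. But the specific step you lean on, namely that \emph{every individual process is migrated at most $O(\log k)$ times}, is not what the analysis of \cite{soda21} establishes, and the difficulty you flag at the end (charging the eviction/rebalancing migrations to the same doubling step ``without inflating the per-process count'') is precisely the point where this per-process formulation breaks down. A process can be evicted from a server to make room for a growing component that it does not belong to; such a migration cannot be charged to a doubling of that process's own component, so the doubling argument does not give a uniform $O(\log k)$ cap on each process's migration count.

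The paper's proof avoids this by using the cost decomposition exactly as \cite{soda21} provides it: the cost splits into a per-process charge, bounded by $O(\log k)$ per process (their Lemma~14), \emph{plus} a separate global ``Extra'' term that absorbs the rebalancing migrations and is bounded in absolute terms by $O(k\ell\log k)$ (their Lemma~13). Both pieces are $O(k\ell\log k)$, so the lemma follows. Your proposal would be repaired by replacing the claim ``each process migrates $O(\log k)$ times'' with this two-part accounting; as written, the step you defer to the appendix is a genuine gap rather than a routine transcription.
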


Next, we show that the competitive ratio of the online algorithm is $\mcO(\max(\sqrt{k\ell \log k}, \ell \log k))$.
\begin{lemma}
    For a request sequence $\sigma$ the 
    cost of the online algorithm is at most $\mcO(\max(\sqrt{k\ell \log k}, \ell \log k))\cdot\OPT(\sigma)$. 
    \label{lemma:cost_one_eps}
\end{lemma}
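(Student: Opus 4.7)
The plan is to decompose $\ALG(\sigma)$ into the cost of Phase~1 and Phase~2 and bound each using the LP-duality machinery developed in Section~\ref{sec:flow} together with Lemma~\ref{lemma:soda_cost}. The key lower bound on $\OPT(\sigma)$ comes from combining Lemmas~\ref{lemma:flow_dual} and~\ref{lemma:edge_disjoint}: setting $f_P=1$ on the edge-disjoint paths in $\mathcal{P}$ gives a feasible dual solution, so $|F|/2=|\mathcal{P}|\le \OPT(\sigma)$ at every point in time. This will convert absolute cost bounds involving $|F|$ into multiplicative bounds in terms of $\OPT(\sigma)$.

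For Phase~1, I will argue that the total cost (migration plus communication) of the flow procedure is $\mcO(|F|^2)$. By Lemma~\ref{lemma:flow_cost_F2} the migration cost is at most $|F|^2-|X|\le |F|^2$, while the communication cost is incurred only in iterations where the request straddles two colored clusters (exactly $|F|/2$ many), and so contributes at most $\mcO(|F|)$. Combining this with $|F|\le 2\OPT(\sigma)$ yields $\text{Phase 1 cost}\le \mcO(|F|\cdot \OPT(\sigma))$.

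Next I will split on whether Phase~1 ever triggers termination on $\sigma$. If it does not, Phase~2 is never executed and $|F|<Z=\min(\sqrt{k\ell\log k},\epsilon k)$ holds throughout, so $\ALG(\sigma)\le \mcO(Z\cdot \OPT(\sigma))\le \mcO(\sqrt{k\ell\log k})\cdot \OPT(\sigma)$; the corner case $\OPT(\sigma)=0$ is vacuous because it forces $|F|=0$, which in turn means no case~2 or case~3 step ever fires. If Phase~1 does terminate, then $|F|\ge Z$ and hence $\OPT(\sigma)\ge Z/2$. The cost of Phase~1, together with the reset to the initial configuration $\mathcal{I}$ (which moves back at most $|F|$ processes), is $\mcO(Z^2)=\mcO(Z\cdot \OPT(\sigma))$. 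For Phase~2, Lemma~\ref{lemma:soda_cost} gives $\SODA(\sigma)\le \mcO(k\ell\log k)\le \mcO(k\ell\log k/Z)\cdot \OPT(\sigma)$, where we used $\OPT(\sigma)\ge Z/2$.

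Adding the contributions in the second case gives a total bound of $\mcO(Z+k\ell\log k/Z)\cdot \OPT(\sigma)$. Substituting $Z=\min(\sqrt{k\ell\log k},\epsilon k)$ and doing an elementary two-subcase analysis (either $\sqrt{k\ell\log k}\le \epsilon k$, in which case both summands equal $\sqrt{k\ell\log k}$; or $\epsilon k<\sqrt{k\ell\log k}$, which implies $\epsilon k<\ell\log k/\epsilon$ and so both summands are $\mcO(\ell\log k)$ with the $\epsilon$-dependence absorbed into the constant) yields $\mcO(\max(\sqrt{k\ell\log k},\ell\log k))\cdot \OPT(\sigma)$ in both subcases. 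The main obstacle is the bookkeeping to show that the Phase~2 reset cost and the communication cost of Phase~1 are both absorbed into the $\mcO(|F|^2)$ bound, and to cleanly combine the two cases so that the weaker case (Phase~1 completes without terminating) does not dominate; both are handled by the uniform inequality $|F|\le 2\OPT(\sigma)$.
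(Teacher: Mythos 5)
Your proposal is correct and follows essentially the same route as the paper: bound the Phase~1 cost by $|F|^2-|X|\le 2|\mathcal{P}|Z\le 2Z\cdot\OPT(\sigma)$ via Lemmas~\ref{lemma:flow_cost_F2}, \ref{lemma:flow_dual} and~\ref{lemma:edge_disjoint}, charge the reset to Phase~1, bound Phase~2 by Lemma~\ref{lemma:soda_cost} together with $\OPT(\sigma)\ge Z/2$ at termination, and finish with the same two-case analysis on $Z=\min(\sqrt{k\ell\log k},\epsilon k)$. Your extra bookkeeping (the non-terminating case, the $\OPT=0$ corner case, and the Phase~1 communication cost) only makes explicit what the paper leaves implicit.
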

\begin{proof}
According to Lemma~\ref{lemma:flow_cost_F2}, at any point during the execution
of the flow procedure, its cost is at most
$|F|^2-|X|$. This gives
\begin{equation*}
\FLOW(\sigma)\le |F|^2-|X|\le 2|\calP|\cdot Z\le 2Z\cdot\OPT(\sigma)\enspace,
\end{equation*}
where the second inequality holds due to Lemma~\ref{lemma:flow_dual} and the fact
that $|F|\le Z$ as the procedure terminates when $|F|$ reaches
$Z$.
The final inequality
follows because the paths in $\calP$ form a feasible dual solution to the
linear program, and, hence, $\OPT(\sigma)\ge|\calP|$.

The cost for reverting the configuration to the initial state is at most the
cost incurred in the first phase, i.e., at most $\FLOW(\sigma)$.

The cost of the second phase is at most the cost of the $\SODA$-algorithm.
According to Lemma~\ref{lemma:soda_cost} this is at most $ck\ell\log k$ for some constant $c$.
When the flow procedure ends $\OPT(\sigma)\ge|\calP|\ge Z/2$.
We now consider two cases:
\begin{itemize}
    \item $\sqrt{k\ell \log k} \le \epsilon k$. Then
    \[
    \SODA(\sigma) = ck\ell\log k = c\sqrt{k\ell\log k} \sqrt{k\ell\log k}  \le 2c \sqrt{k\ell\log k}\cdot\OPT(\sigma)\enspace,
    \]
    because $\OPT(\sigma)\ge \sqrt{k\ell\log k}/2$.
    Therefore, the total cost for both phases is at most
    \[
    4Z\cdot\OPT(\sigma) + 2c\sqrt{k\ell\log k}\cdot\OPT(\sigma) = \mcO(\sqrt{k \ell  \log k})\cdot\OPT(\sigma)\enspace.\]
    \item $\epsilon k < \sqrt{k\ell \log k}$. In this case we have
    \[
    \SODA(\sigma) = ck\ell\log k= \frac{c}{\epsilon} \ell\log(k)\cdot\epsilon k   \le  \frac{2c}{\epsilon} \ell  \log(k)\cdot\OPT(\sigma)\enspace.
    \]
    Therefore the total cost for both phases is at most 
    \[
    4Z\cdot\OPT(\sigma) + \frac{2c}{\epsilon} \ell  \log(k)\cdot\OPT(\sigma) \le \mcO(\sqrt{k \ell  \log k} + \ell \log k)\cdot\OPT(\sigma).
    \]
\end{itemize}
Combining both cases gives the lemma.   
\end{proof}

\noindent
Lemmas \ref{lemma:cost_one_eps} and \ref{lemma:feas_one_eps} give the following
theorem.
\UpperBoundsOneEps*

\mathversion{bold}
\section{A Deterministic $\mcO(\sqrt{k})$-competitive Algorithm with Augmentation $2+\epsilon$}\label{sec:upper_bound_two_eps}
\mathversion{normal}

In this section, we introduce an online algorithm that solves the generalized learning problem with augmentation $2+\epsilon$ and achieves a competitive ratio of $\mathcal{O}(\sqrt{k})$.
We build on the concepts discussed in the previous sections, particularly the $\FLOW$ procedure. However, since our goal is to achieve a $\mcO(\sqrt{k})$ competitive ratio, we cannot simply apply the procedure as before. 

\subsection{Overall Algorithm}

The algorithm consists of two main components: The \emph{Clustering algorithm} and the \emph{Scheduling algorithm}. 
\begin{enumerate}
    \item The \emph{Clustering algorithm} takes the request sequence as input and maintains a clustering $\mathcal{S}$ of the processes. When a new request is received, the algorithm may adjust the clustering, causing the processes to change their clusters. 
    \item The \emph{Scheduling algorithm} takes a clustering $\mathcal{S}$ from the clustering algorithm and assigns the clusters to the servers. Changes in clustering may require some clusters to be rescheduled. Each cluster is always entirely scheduled on a single server.
\end{enumerate}

\subsection{Clustering Algorithm}
The basic idea of the clustering algorithm is to group processes into a set of clusters $\mathcal{S}$, by running multiple $\FLOW$ procedures (see Section~\ref{flow_clustering_procedure}), each on a connected component of the demand graph $\bar{G}_{\sigma}$. 

\subsubsection{Clustering Rules}
Let $\mathcal{C}$ be the set of connected components of the current demand graph $G_{\sigma}$. We partition $\mathcal{C}$ into two sets $\mathcal{C}_S$ and $\mathcal{C}_L$. 

\begin{itemize}
    \item For each $C \in \mathcal{C}_L$ we have that $C \in \mathcal{S}$, that means that $C$ forms itself a cluster. We call $C$ a $\emph{large cluster}$.
    \item For each $C \in \mathcal{C}_S$ we maintain a $\FLOW$ procedure $I(C)$ with  
    cost parameter $Z = \sqrt{k}$, that operates on the connected component $C$. We call $I(C)$ a \emph{procedure} or \emph{instance} and say that $I(C)$ \emph{manages} the component $C$. 
    Each procedure $I(C)$ partitions its connected component $C$ into smaller sub-clusters $\mathcal{S}_C$. To avoid ambiguities, we call them \emph{pieces}. A piece is either colored, or it is the special piece of its corresponding $\FLOW$ procedure. Furthermore, each procedure maintains a set of free processes $F_{C} \subseteq C$.

    \paragraph{Assignment of pieces to clusters}
    We say a piece $S$ is \emph{$\delta$-monochromatic}, if at least $\delta|S|$ ($\delta \ge \frac{1}{2}$) processes in $S$ share the same $c$-colored home server. The remaining processes in $S$ are free.
    For every color $c$ the clustering algorithm maintains a special cluster $S_c$,
    which we call the \emph{$c$-colored cluster}. A piece $S$ either forms a \emph{singleton cluster} that only contains piece $S$, or it is assigned to the $c$-colored cluster, if $c$ is the color of $S$. 
    The assignment of pieces to clusters is done as follows.
    
    All pieces initially consist of a single process, and thus belong to their corresponding $c$-colored cluster.
    Whenever a piece $S$ changes we examine the new piece $S'$. Let $c$ be the color of $S'$.
    \begin{itemize}
    \item
    If $S'$ is not $\frac{1}{2}$-monochromatic, it becomes a singleton cluster.
    \item
    If $S'$ is $\frac{3}{4}$-monochromatic, it is assigned to the $c$-colored cluster.
    \item Otherwise, $S'$ is assigned to the $c$-colored cluster iff $S$ was assigned to this cluster. 
    This means, that once a piece is assigned to a colored cluster, it will only be reassigned to a singleton cluster after a significant number of free processes have migrated to that piece.
    \end{itemize}
    
\end{itemize}

\paragraph{Cost}
The cost of the clustering algorithm is defined as the total number of cluster changes made by processes. Each time a process is reassigned to a different cluster, it incurs a cost of 1.

During the execution of the algorithm, sets of processes are often \emph{merged}. Let $A$ and $B$ be two disjoint sets of processes. When merging them into a single set $C = A \cup B$, the algorithm migrates all processes from the smaller set to the larger one. Afterward, the (now) empty set is deleted.

Notice, not all merges incur a cost. If both $A$ and $B$ are $c$-colored pieces that already reside within the same colored cluster, merging them does not involve any cost, as no process changes its cluster. On the other hand, if either $A$ or $B$ is a singleton cluster, the merge will incur a cost equivalent to $\min(|A|, |B|)$, reflecting the number of processes that need to change clusters.

\subsubsection{Instance Management}

\paragraph{Instance initialization}
The clustering algorithm starts by creating an instance $I(\{v\})$ for each process $v \in V$. Each set $\{v\}$ is assigned the color of $v$'s home server. 
As a result, each process is initially contained in the $c$-colored cluster, if $c$ is the color of its home server.
Thus, initially $\mathcal{C}_S = \{\{v\}~|~ v \in V\}$ and $\mathcal{C}_L = \emptyset$. 

\paragraph{Instance merge}
Throughout the execution of the algorithm, whenever a request involves two different components $A, B \in \mathcal{C}_S$, we merge their corresponding instances, $I(A)$ and $I(B)$, into a single instance $I(C)$, where $C = A \cup B$. 
Each instance maintains at most one $c$-colored piece and exactly one special piece. This invariant must be preserved during the merge.
The merging process is implemented as follows:
\begin{itemize}
    \item Colored pieces: For each color $c$, if there is a $c$-colored piece in both $A$ and $B$ (i.e., $C_A \in \mathcal{S}_A$ and $C_B \in \mathcal{S}_B$), these pieces are combined into a single piece $C_A \cup C_B$ in the new instance $I(C)$.
    \item Special pieces: Let $X_A$ and $X_B$ be the special pieces of instances $I(A)$ and $I(B)$, respectively.
These are combined into a single special piece $X_C = X_A \cup X_B$. 
    \item 
    Flow-augmenting paths and free processes: The set of flow-augmenting paths for the merged instance is set as $\mathcal{P}_C = \mathcal{P}_A \cup \mathcal{P}_B$, combining the paths from both original instances.
Similarly, the set of free processes is set as $F_C = F_A \cup F_B$, uniting the free processes from $A$ and $B$.
\end{itemize}

\paragraph{Instance stop}
A procedure continues to operate until the number of free processes in its component reaches $\sqrt{k}$, at which point the procedure stops.
If a procedure $I(C)$ stops, all its pieces (the whole connected component $C$) are merged into a singleton cluster. Consequently, $C$ moves from $\mathcal{C}_S$ to $\mathcal{C}_L$. 

\subsubsection{Request Processing}
The requests are processed as follows. 
Let $\sigma_t=\{u, v\}$ be the current request. 

If $u$ and $v$ belong to different connected components $A$ and $B$, respectively, we first execute the component merge operation \texttt{CompMerge($A, B$)}. 
After this operation, it is guaranteed that $u$ and $v$ are part of the same connected component $C$. 

If $C \in \mathcal{C}_L$, the request is served without any cost, since the algorithm ensures that $u$ and $v$ are in the same cluster. Otherwise, if $C \in \mathcal{C}_S$, the request is forwarded to the corresponding procedure $I(C)$ for further processing. 

\paragraph{\texttt{CompMerge} operation }
A component merge operation \texttt{CompMerge($A, B$)} is performed as follows. Let $\mathcal{C}'_S$ and $\mathcal{C}'_L$ be the connected component sets after the \texttt{CompMerge} operation.

\begin{enumerate}
    \item $A \in \mathcal{C}_L$ or $B \in \mathcal{C}_L$\\
    Without loss of generality, assume $A \in \mathcal{C}_L$. 
        If $B$ is also a large cluster, we merge the two clusters. Thus, the new large component set becomes
        $\mathcal{C}'_L = \mathcal{C}_L - \{B, A\} + \{A\cup B\}$.
        
        In case $B \in \mathcal{C}_S$, we stop the $I(B)$ instance that manages $B$, and merge all pieces of $I(B)$ (i.e., the entire component $B$) into cluster $A$. As a result, 
    $\mathcal{C}'_S = \mathcal{C}_S - \{B\}$ and $\mathcal{C}'_L = \mathcal{C}_L - \{A\} + \{A\cup B\}$.
 
    \item $A \in \mathcal{C}_S$ and $B \in \mathcal{C}_S$\\
    In this case, there are instances $I({A})$ and $I({B})$ that manage $A$ and $B$, respectively.
    We merge $I({A})$ and $I({B})$ into a single instance $I(A \cup B)$. 
    The component sets are updated as follows: $\mathcal{C}'_S = \mathcal{C}_S - \{B, A\} + \{A\cup B\}$.
\end{enumerate}

\subsection{Scheduling Algorithm}
The scheduling algorithm is taken directly from \cite{racke2023polylog}. For completeness, we repeat the algorithm here. 

The scheduling algorithm receives a set of clusters $\mathcal{S}$ from the clustering algorithm and maintains an assignment of the clusters to the servers in a way that keeps the load on each server within specified bounds.
The scheduling algorithm assigns each $c$-colored cluster to the server $s$ with the matching color $c$. All other clusters are scheduled on arbitrary servers.

Upon a new request the clustering algorithm could change existing clusters. Some
processes might change their cluster, and some clusters might merge. This changes the size of clusters and, hence, the load distribution among the servers can become imbalanced.
The scheduling procedure re-balances the distribution of clusters among servers such that there are at most $(2 + \epsilon)k$ processes on any server, for an $\epsilon > 0$.

Assume, that after the execution of the clustering algorithm server $s$ has load greater
than $(2 + \epsilon)k$. We perform the following re-balancing procedure. While server $s$ has load greater than $2k$, we take a non-colored cluster $C$ in $s$ and move it to a server $s'$ with load at most $k$. 
Such a cluster $C$ must exist, since the overall number of processes in colored clusters is at most $2k$.
Furthermore, such a server $s'$ must also exist because the average load is at most $k$. Since $|C| \le k$, $s'$ has a load at most $2k$ afterwards.

\subsection{Analysis}
For each $C \in  \mathcal{C}_S$, the procedure $I(C)$ maintains two sets: the set of flow-augmenting paths $\mathcal{P}_{C}$ and the set of free processes $F_{C}$. 
By construction of the $\FLOW$ procedure, each path $P \in \mathcal{P}_{C}$ contains only nodes of the form $v^i$ for $i\ge 0$ and $v \in C$. Additionally, $F_{C}$ is always a subset of $C$.

We extend these definitions to all components in $\mathcal{C}$. We define:
\begin{itemize}
    \item $\mathcal{P}_{C}$: The set of flow-augmenting paths of procedures that have at any point managed $C$ or any subset of $C$.
    \item $F$: The set of free processes across all procedures that have operated at any point in time.
    \item $F_C = F \cap C$, for a $C \subseteq V$.
\end{itemize}

Due to the design of the instance merge operation, for any $C \in \mathcal{C}_S$, these definitions naturally align with the sets of flow-augmenting paths and free processes maintained by the instance $I(C)$.

Additionally, for each $C \in \mathcal{C}_S$, let $X_C$ denote the special piece of the instance $I(C)$. 

Finally, let $\mathcal{P} = \bigcup_{C \in \mathcal{C}} \mathcal{P}_C$ represent the set of all flow-augmenting paths across all components. By Lemma~\ref{lemma:flow_dual} and by construction of the instance/component merge operation, $|\mathcal{P}| = \bigcup_{C \in \mathcal{C}} |\mathcal{P}_C| = \bigcup_{C \in \mathcal{C}} 2|F_C| = 2 |F|$

\subsubsection{Correctness Analysis}

First, we prove the augmentation guarantees of the algorithm presented in this section.

\begin{lemma}
    The algorithm uses at most augmentation $2+\epsilon$.
    \label{lemma:feas_two_eps}
\end{lemma}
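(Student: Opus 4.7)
The plan is to verify that, after every execution of the scheduling procedure, no server holds more than $(2+\epsilon)k$ processes. I would split the argument into a structural bound on cluster sizes and an analysis of the rebalancing loop.

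The first step is to establish that every $c$-colored cluster has total size at most $2k$. I would argue that every piece currently assigned to a $c$-colored cluster is at least $\frac{1}{2}$-monochromatic in color $c$: by the clustering rules, at its last re-evaluation the piece was either $\frac{3}{4}$-monochromatic in $c$, or was already $c$-colored and still $\frac{1}{2}$-monochromatic; otherwise it would have been demoted to a singleton cluster. Hence at least half of the processes of every such piece have home server $c$, and since there are only $k$ processes in the system with home server $c$, summing over all pieces of the $c$-colored cluster yields the bound $2k$.

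Using this, I would analyse the rebalancing loop. Each server is assigned at most one colored cluster, which contributes at most $2k$ to its load, so whenever a server $s$ has load greater than $2k$ at least one non-colored cluster $C$ sits on $s$. Since the total number of processes is $k\ell$, by averaging there is always a server $s'$ with load at most $k$; moving $C$ from $s$ to $s'$ strictly decreases the overload of $s$, and provided $|C| \le k$ the load of $s'$ stays at most $2k$. A potential such as $\Phi = \sum_s \max(0, \mathrm{load}(s) - 2k)$ strictly decreases in every iteration, so the loop terminates; upon termination every server has load at most $2k \le (2+\epsilon)k$.

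The main obstacle is justifying the bound $|C| \le k$ on the non-colored cluster being moved. Non-colored clusters arise in two ways: as singleton clusters containing a single non-$\frac{1}{2}$-monochromatic piece of an active $\FLOW$ instance, and as large clusters formed when a stopped instance dumps its entire managed connected component into one cluster. Controlling the sizes of both types requires invariants on piece sizes and on the sizes of components in $\mathcal{C}_S$, maintained through instance merges, component merges, and free-process migrations. Formulating and inductively verifying this invariant is the delicate part of the argument; once it is available the bookkeeping above finishes the proof.
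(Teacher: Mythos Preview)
Your overall structure matches the paper's: first bound each colored cluster by $2k$ (the paper argues this in one sentence --- at most half of the processes in any piece assigned to the $c$-colored cluster are free, and only $k$ processes have home server $c$), then observe that the rebalancing loop can always evict a non-colored cluster from an overloaded server and always find a server of load at most $k$ to receive it. Your termination argument via the potential $\Phi$ is more explicit than what the paper writes, but it is the same idea.

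However, the step you flag as the ``main obstacle'' is not delicate at all, and the paper does not set up any inductive invariant on piece or component sizes for it. The bound $|C|\le k$ on any non-colored cluster is a one-line consequence of the learning restriction on the input: since the request sequence admits a communication-free static assignment to servers of capacity $k$, every connected component of the demand graph $G_\sigma$ has size at most $k$. Now a large cluster \emph{is} a connected component (by the definition of $\mathcal{C}_L$), and a singleton cluster consists of a single piece, which is a subset of the connected component managed by its $\FLOW$ instance. Hence every non-colored cluster lies inside a single connected component and therefore has size at most $k$. No tracking of piece growth through instance merges or free-process migrations is required; once you invoke the model assumption, the bookkeeping you already wrote out finishes the proof.
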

\begin{proof}
    Let $C$ be a colored cluster. By construction, at most half of the processes in $C$ are free. Since there are $k$ processes that have a $c$-colored home server, the overall size of a colored cluster is bounded by $2k$. 

    Each time the clustering algorithm modifies existing clusters and creates a load on a server $s$ with more that $2+\epsilon$ processes, the scheduling algorithm re-balances this server by moving non-colored clusters, such that it has at most $2k$ processes. Such a re-balancing is always possible, since the size of a colored cluster is at most $2k$. Thus, there is always a non-colored cluster that we can move away from $s$. This ensures that no server contains more than $2+\epsilon$ processes at any given time.
\end{proof}

Next, we show that the paths in $\mathcal{P}$ are pairwise edge-disjoint, which implies that there is a feasible dual solution to the linear program~\ref{lp:lp1} with cost $|\mathcal{P}|$. 

\begin{lemma}
    For two connected components $A, B\in \mathcal{C}$, the paths in $\mathcal{P}_{A} \cup \mathcal{P}_{B}$ are pairwise edge disjoint.
    \label{fact:comp_edge_disjoint}
\end{lemma}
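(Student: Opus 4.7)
The plan is to show that every path $P \in \mathcal{P}_C$ is \emph{localized} to the processes of $C$, in the precise sense that every edge of $P$ is incident to a process time node $v^i$ for some $v \in C$. Since distinct current components $A, B \in \mathcal{C}$ satisfy $A \cap B = \emptyset$, this locality property turns the cross-component case into a triviality, and the within-component case reduces to the edge-disjointness lemma already proved for a single $\FLOW$ instance, inductively extended through instance merges.

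First I would establish the locality property by induction on the time at which a path is added. Whenever an instance $I(C')$ creates a new path at time $t$, it has the form $P = Q_t(p)\circ (p^t,q^t)\circ Q_t^{-1}(q)$ with $p,q\in C'$. By Property~\ref{invB} of Invariant~\ref{inv1}, both $Q_t(p)$ and $Q_t(q)$ contain only nodes of the form $v^i$ with $v \in C'$, and the middle communication edge $(p^t,q^t)$ has both endpoints in $C'$. Every edge of $P$ is therefore incident to a process time node of some process in $C'$. Because component sets only grow via merges, $C' \subseteq A$ where $A$ is the current component containing $C'$, so the locality property survives all subsequent instance merges (the merge operation only takes unions of existing path sets; it does not alter any path).

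Next I would handle the within-component case. Given $P,P' \in \mathcal{P}_A$, either both were created by the same active instance—in which case they are edge disjoint by the internal edge-disjointness lemma proved for the $\FLOW$ procedure—or they were created by two instances $I(C'_1)$ and $I(C'_2)$ that were alive at different times before a merge. At the respective moments of creation, $C'_1$ and $C'_2$ were distinct connected components of the demand graph, so $C'_1 \cap C'_2 = \emptyset$. The locality property then forces every edge of $P$ to be incident to a process time node in $C'_1$ and every edge of $P'$ to be incident to one in $C'_2$; since each migration edge $(p^{i-1},p^i)$ and each communication edge $(p^t,q^t)$ is uniquely identified by its process(es), the edge sets are disjoint. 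The cross-component case $A\neq B$ is then immediate: $A \cap B = \emptyset$ because $\mathcal{C}$ partitions the processes, and the locality property applied to $\mathcal{P}_A$ and $\mathcal{P}_B$ gives the disjoint edge sets.

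The main obstacle I anticipate is purely notational rather than mathematical: server nodes are shared aliases (a single node may be $p^0 = q^0 = \dots$ for many processes initially placed on the same server), so one must be careful not to conflate node disjointness with edge disjointness. The fix is to argue at the level of edges only, noting that every edge incident to a server node is a migration edge $(p^0,p^1)$ that is uniquely pinned down by the single process $p$, and $p$ belongs to exactly one component. With this care, the three steps above yield the lemma.
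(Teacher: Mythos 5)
Your proposal is correct and uses essentially the same argument as the paper: paths in $\mathcal{P}_C$ are localized to nodes $v^i$ with $v\in C$ (via Property~\ref{invB} of Invariant~\ref{inv1}), and since distinct components are disjoint no edge can be shared; your extra care with server-node aliasing (every edge at a server node is a migration edge $(p^0,p^1)$ pinned to a unique process) makes explicit a point the paper's one-line proof leaves implicit. The within-component disjointness you additionally argue is not needed for this lemma as the paper uses it---it is established separately in Lemma~\ref{lemma:edge_disjoint_CL} via Lemma~\ref{lemma:invariant_merge}, which also covers the pair your dichotomy omits (a pre-merge path versus a path created afterwards by the merged instance).
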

\begin{proof}
By design of the $\FLOW$ procedure, a path $P_A \in \mathcal{P}_{A}$ contains only nodes of the form $v^i$ for $i\ge 0$ and $v \in A$, and a path $P_B \in \mathcal{P}_{B}$ contains only nodes of the form $u^i$ for $i\ge 0$ and $u \in B$. Since $A \cap 
B = \emptyset$, $P_A$ and $P_B$ cannot share an edge.
\end{proof}

\begin{lemma}
    The Invariant~\ref{inv1} is maintained after a procedure merge.
    \label{lemma:invariant_merge}
\end{lemma}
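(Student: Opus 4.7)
The plan is to show that for any process $p$ in a colored cluster of the merged instance $I(C)$ (with $C = A \cup B$), the path $Q_\tau(p)$ that was maintained before the merge by the appropriate sub-instance is still a valid witness for the invariant with respect to $I(C)$. The merge operation does not change any time-stamp, does not alter the linked/free status of any process, does not add new flow-augmenting paths beyond $\calP_A \cup \calP_B$, and only combines the pieces of $I(A)$ and $I(B)$ that share the same color. So nothing needs to be rebuilt; the existing paths simply need to be checked against the new cluster definitions.

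First I would fix a time step $\tau$ at which the merge takes place and consider any colored piece $C_c$ of the merged instance $I(C)$ along with a process $p \in C_c$. By the merge rule for colored pieces, $C_c = C_c^A \cup C_c^B$, where $C_c^X$ denotes the $c$-colored piece of $I(X)$ before the merge (either of which may be empty). Without loss of generality assume $p \in C_c^A$. By the induction hypothesis (Invariant~\ref{inv1} applied to $I(A)$ before the merge) there exists a path $Q_\tau(p)$ in $\TG[A]$ satisfying properties i)--iv) with respect to $I(A)$. I take this same path as the witness for $p$ in $I(C)$. Property~i) is immediate since the server node of color $c$ and $p^\tau$ are unchanged. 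Property~ii) follows because every node $v^i$ on the path satisfies $v \in C_c^A \subseteq C_c$. Property~iii) holds because the linked/free status of processes is not modified by the merge, and in particular the unique linked process on the path remains the unique linked one.

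The only non-trivial check is property~iv), edge-disjointness from the new path set $\calP_C = \calP_A \cup \calP_B$. Disjointness from $\calP_A$ is inherited from the invariant for $I(A)$. For disjointness from $\calP_B$, I would argue as in Lemma~\ref{fact:comp_edge_disjoint}: every path in $\calP_B$ was created by the $\FLOW$ procedure operating on processes in $B$ (including possibly processes from sub-components that were previously merged into $I(B)$), so all of its non-server nodes have the form $u^i$ with $u \in B$, and every migration edge $(u^{i-1}, u^i)$ it traverses is owned by some $u \in B$ while every communication edge $(u^t, w^t)$ it uses corresponds to a past request with $u, w \in B$. The path $Q_\tau(p)$, on the other hand, only visits nodes $v^i$ with $v \in A$ and uses migration edges owned by processes in $A$ and communication edges between pairs of processes in $A$. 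Since $A \cap B = \emptyset$, no edge can appear in both. A symmetric argument handles any process $p \in C_c^B$.

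The main obstacle, or rather the subtle point to verify carefully, is that the definition of $\calP_B$ used here must cover not only the flow-augmenting paths created directly by $I(B)$ but also those inherited from earlier instance merges into $I(B)$. This is exactly why $\calP_C = \calP_A \cup \calP_B$ is defined as the union, and why the global notation $\calP_C$ (extended to all $C \in \mathcal C$ in the analysis subsection) refers to paths from \emph{every} procedure that has at any point managed a subset of $C$. With this reading of $\calP_B$, the argument above applies uniformly, and the invariant is preserved across the merge.
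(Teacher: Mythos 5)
Your proposal is correct and follows essentially the same argument as the paper: reuse the pre-merge witness path $Q_\tau(p)$, observe that properties i)--iii) are unaffected because the path, the linked/free statuses, and the piece containment ($J\subseteq J'$) are preserved, and establish edge-disjointness from $\mathcal{P}_C=\mathcal{P}_A\cup\mathcal{P}_B$ by inheritance for $\mathcal{P}_A$ and by the disjointness of the process sets $A$ and $B$ for $\mathcal{P}_B$. Your extra remark that $\mathcal{P}_B$ must be understood as including paths inherited from earlier merges is a worthwhile clarification but does not change the argument.
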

\begin{proof}
Assume, that Invariant~\ref{inv1} holds prior to a merge of two procedures $I(A)$ and $I(B)$ into procedure $I(C)$, $C = A \cup B$. We show that the invariant is preserved after the merge. 

W.l.o.g. let $v \in A$ be a process that belongs to a colored piece $J$ and let $Q_\tau(v)$ be the path that fulfills the properties of Invariant~\ref{inv1} prior to the merge. By construction, $v$ also belongs to some colored piece $J'$ after the merge. 
We show, that $Q_\tau(v)$ still fulfills the properties of Invariant~\ref{inv1}.
Properties i) and iii) still trivially hold, since the path remains unchanged.
By design, $J \subseteq J'$, thus, property ii) is also trivially maintained.

It remains to show that $Q_t(v)$ is edge disjoint with the flow-augmenting paths in $\mathcal{P}_{C} = \mathcal{P}_{A} \cup \mathcal{P}_{B}$. 
Since the invariant holds prior to the merge, $Q_t(v)$ is edge disjoint with paths in 
$\mathcal{P}_{A}$. Furthermore, by design, a path $P \in \mathcal{P}_{B}$ contains only nodes of the form $u^i$ for $i\ge 0$ and $u \in B$. The path $Q_t(v)$ contains only nodes of the form $w^i$ for $i\ge 0$ and $w \in A$. Hence, paths $Q_t(v)$ and $P$ cannot share an edge.
Therefore, the invariant remains valid.
\end{proof}






\begin{lemma}
    For each $C \in \mathcal{C}$, the flow-augmenting paths in $\mathcal{P}_C$ are pairwise edge disjoint.
\label{lemma:edge_disjoint_CL}
\end{lemma}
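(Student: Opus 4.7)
The plan is to prove the claim by induction on the sequence of operations that can modify any $\mathcal{P}_C$. Specifically, three types of operations affect $\mathcal{P}_C$: (a) a request inside an active instance $I(C)$ that appends a new flow-augmenting path to $\mathcal{P}_C$; (b) a merge of two instances $I(A)$ and $I(B)$ into a single instance $I(C)$, after which $\mathcal{P}_C = \mathcal{P}_A \cup \mathcal{P}_B$; (c) an instance stop, which only moves $C$ from $\mathcal{C}_S$ to $\mathcal{C}_L$ without touching $\mathcal{P}_C$. In the base case, at initialization each $\mathcal{P}_{\{v\}}$ is empty, so the claim trivially holds.

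For the inductive step of type (a), suppose the request $\sigma_\tau=\{p,q\}$ inside $I(C)$ creates the new path $P=Q_\tau(p)\circ(p^\tau,q^\tau)\circ Q_\tau^{-1}(q)$. By induction the existing paths of $\mathcal{P}_C$ are pairwise edge-disjoint, so it suffices to verify that $P$ is edge-disjoint from all of them. Invariant~\ref{inv1} (properties~\ref{invB} and~\ref{invD}), which holds at time $\tau$ because it is preserved across requests by Lemma~\ref{lemma:invariant} and across merges by Lemma~\ref{lemma:invariant_merge}, guarantees that both $Q_\tau(p)$ and $Q_\tau(q)$ are edge-disjoint from every path currently in $\mathcal{P}_C$; the communication edge $(p^\tau,q^\tau)$ has time-stamp $\tau$ and therefore cannot appear in any earlier path; and $Q_\tau(p)$, $Q_\tau(q)$ are edge-disjoint from each other because $p$ and $q$ lie in different colored pieces and Invariant~\ref{inv1}(ii) restricts each of them to the vertex set of its own piece. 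So $P$ can be safely added.

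For the inductive step of type (b), assume the claim holds for $\mathcal{P}_A$ and for $\mathcal{P}_B$ separately. After the merge we have $\mathcal{P}_C=\mathcal{P}_A\cup\mathcal{P}_B$; internal edge-disjointness inside each half is immediate from the induction hypothesis, and cross edge-disjointness between a path of $\mathcal{P}_A$ and a path of $\mathcal{P}_B$ follows from Lemma~\ref{fact:comp_edge_disjoint}, since $A\cap B=\emptyset$ and by construction every path in $\mathcal{P}_A$ uses only nodes $v^i$ with $v\in A$ while every path in $\mathcal{P}_B$ uses only nodes $u^i$ with $u\in B$. Operations of type (c) leave $\mathcal{P}_C$ unchanged and hence preserve the property.

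The main obstacle is the interaction between merges and subsequent additions in type (a): after a merge, the procedure $I(C)$ may create new paths that span across what used to be $A$ and what used to be $B$, so these new paths could a priori collide with old paths from $\mathcal{P}_A$ or $\mathcal{P}_B$. This is exactly what is ruled out by Lemma~\ref{lemma:invariant_merge}, which ensures Invariant~\ref{inv1} still holds relative to the merged set $\mathcal{P}_C$; once this is in place, the type-(a) argument above goes through unchanged, completing the induction.
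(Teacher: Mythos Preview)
Your proof is correct and follows essentially the same approach as the paper: induction over the operations that modify $\mathcal{P}_C$, handling the merge case via Lemma~\ref{fact:comp_edge_disjoint} and the new-path case via Invariant~\ref{inv1} (maintained through Lemmas~\ref{lemma:invariant} and~\ref{lemma:invariant_merge}). Your treatment is in fact more explicit than the paper's, which compresses your type-(a) analysis into a single sentence appealing to ``the inductive structure of the proof in Lemma~\ref{lemma:invariant}''; your added discussion of the merge/addition interaction and of the trivial instance-stop case are sound but not present in the paper.
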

\begin{proof} 
We prove the statement by induction.
Initially, $\mathcal{P}_C =\emptyset$ for each $C \in \mathcal{C}$, thus the base case is trivially fulfilled. 
For the induction step, assume, that prior to the current
request the statement holds for all $C \in \mathcal{C}$.

There are two cases, where $\mathcal{C}$ or $\mathcal{P}_C$ changes: the \texttt{CompMerge} operation, and if a procedure creates a new path.

\begin{itemize}
    \item Assume that the \texttt{CompMerge} operation merges two connected components $A$ and $B$. By induction hypothesis, the paths in $\mathcal{P}_{A}$ are pairwise edge disjoint. Same holds for the paths in $\mathcal{P}_{B}$. 
    By Fact~\ref{fact:comp_edge_disjoint}, the paths in $\mathcal{P}_{A} \cup \mathcal{P}_{B}$ are edge disjoint.
    
    \item Assume a new path $P$ is added to $\mathcal{P}_{A}$. Clearly, there is a procedure $I(A)$ that manages $A$.
    According to Lemma~\ref{lemma:invariant} and Lemma~\ref{lemma:invariant_merge}, the Invariant~\ref{inv1} remains valid throughout the execution of a procedure, as well as a merge of two procedures. Hence, the invariant is preserved at any time.
    Due to the inductive structure of the proof in Lemma~\ref{lemma:invariant}, we derive that the new path $P$ must be edge disjoint with paths in $\mathcal{P}_{A}$.
\end{itemize}
Hence, for each $C \in \mathcal{C}$, the flow-augmenting paths in $\mathcal{P}_C$ are pairwise edge disjoint.
\end{proof}

The previous lemmas lead to the following conclusion.

\begin{lemma}
    The flow-augmenting paths in $\mathcal{P}$ are pairwise edge disjoint.
    \label{lemma:edge_disjoint_all}
\end{lemma}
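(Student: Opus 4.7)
The plan is to combine the two preceding lemmas in a straightforward case distinction. The set $\mathcal{P}$ is defined as $\bigcup_{C \in \mathcal{C}} \mathcal{P}_C$, so to show that all paths in $\mathcal{P}$ are pairwise edge disjoint, I would take two arbitrary paths $P_1, P_2 \in \mathcal{P}$ and consider where they live. Either both belong to the same $\mathcal{P}_C$ for some component $C \in \mathcal{C}$, or they belong to $\mathcal{P}_A$ and $\mathcal{P}_B$ for distinct components $A \neq B$ in $\mathcal{C}$.

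In the first case, edge-disjointness follows immediately from Lemma~\ref{lemma:edge_disjoint_CL}, which establishes this property within a single $\mathcal{P}_C$. In the second case, Lemma~\ref{fact:comp_edge_disjoint} (more precisely stated as a \emph{fact}-style lemma earlier) gives edge-disjointness between $\mathcal{P}_A$ and $\mathcal{P}_B$ when $A$ and $B$ are distinct connected components of $\mathcal{C}$, since the paths are supported only on process-time nodes of their respective (disjoint) component. Together these cover all pairs, yielding the claim.

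I expect no real obstacle here; the only thing to be careful about is that $\mathcal{P}_C$ was extended to all components in $\mathcal{C}$ (not just those in $\mathcal{C}_S$) as the union over procedures that have ever managed $C$ or a subset thereof, so I should make sure the two previous lemmas indeed apply to this extended definition. Since Lemma~\ref{lemma:edge_disjoint_CL} is explicitly stated for every $C \in \mathcal{C}$ and Lemma~\ref{fact:comp_edge_disjoint} argues only by disjointness of the underlying process sets $A \cap B = \emptyset$ (which holds for any two distinct components in $\mathcal{C}$, large or small), the case split carries through without complication. Hence the proof can be stated in essentially one short paragraph invoking both lemmas.
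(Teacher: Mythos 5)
Your proposal is correct and matches the paper's own proof: the paper likewise combines Lemma~\ref{lemma:edge_disjoint_CL} (disjointness within each $\mathcal{P}_C$) with Lemma~\ref{fact:comp_edge_disjoint} (disjointness across distinct components) to cover all pairs in $\mathcal{P} = \bigcup_{C \in \mathcal{C}} \mathcal{P}_C$. Your extra check that both lemmas apply to the extended definition of $\mathcal{P}_C$ over all of $\mathcal{C}$ is a sensible precaution that the paper leaves implicit.
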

\begin{proof} 
According to Lemma~\ref{lemma:edge_disjoint_CL} for each $C \in \mathcal{C}$, the flow-augmenting paths in $\mathcal{P}_{C}$ are pairwise edge disjoint. 
By Lemma~\ref{fact:comp_edge_disjoint}, for two components $A \neq B$, the flow-augmenting paths in $\mathcal{P}_{A} \cup \mathcal{P}_{B}$ are also edge disjoint.
Hence, the paths in $\mathcal{P} = \bigcup_{C \in \mathcal{C}} \mathcal{P}_{C}$ must be pairwise edge disjoint.
\end{proof}

\newcommand{\CFlow}{\ensuremath{\operatorname{cost}_{\operatorname{flow}}}}
\newcommand{\CMerge}{\ensuremath{\operatorname{cost}_{\operatorname{merge}}}}
\newcommand{\CSmall}{\ensuremath{\operatorname{cost}_{\operatorname{merge}}}}
\newcommand{\CLarge}{\ensuremath{\operatorname{cost}_{\operatorname{large}}}}
\newcommand{\CSpecial}{\ensuremath{\operatorname{cost}_{\operatorname{special}}}}
\newcommand{\CColor}{\ensuremath{\operatorname{cost}_{\operatorname{color}}}}
\newcommand{\CMono}{\ensuremath{\operatorname{cost}_{\operatorname{mono}}}}
\newcommand{\chf}{\ensuremath{\operatorname{charge}_{\operatorname{f}}}}
\newcommand{\chm}{\ensuremath{\operatorname{charge}_{\operatorname{m}}}}

\newcommand{\CostCluster}{\ensuremath{\operatorname{cost}_{\operatorname{cluster}}}}
\newcommand{\CostSchedule}{\ensuremath{\operatorname{cost}_{\operatorname{schedule}}}}
\newcommand{\CostCom}{\ensuremath{\operatorname{cost}_{\operatorname{com}}}}
\newcommand{\CostMig}{\ensuremath{\operatorname{cost}_{\operatorname{mig}}}}

\subsection{Cost Analysis}
In this section, we analyze the costs incurred by the algorithm. By design, the algorithm avoids any communication cost.
Therefore, the only costs we need to consider are related to migration. 

A migration \emph{may} occur if a process changes its cluster or if a cluster is rescheduled to another server.
We denote the costs associated with these actions as $\CostCluster$ and $\CostSchedule$, respectively. Consequently, the overall cost incurred by the algorithm is at most $\CostCluster + \CostSchedule$.

We further decompose $\CostCluster$ into following subcategories:
\begin{itemize}
    \item $\CFlow$: The cost incurred by $\FLOW$ procedures.
    \item $\CLarge$: The cost of forming a large cluster, and merging a large cluster with other clusters.
    \item $\CSmall$: The cost of instance merge operations, which occur when two procedures merge. The cost involve merging special pieces, as well as pieces with the same color.
    \item $\CMono$: The cost of either moving a piece to a colored cluster, or splitting a piece out of its colored cluster.
\end{itemize}
Thus, $\CostCluster = \CFlow + \CLarge + \CSmall + \CMono$.

\subsection*{Clustering Cost}
Now, we will analyze the different clustering costs in detail. 
As a first step, we derive a bound on $\CFlow$. 

\subsubsection*{Upper bound on $\CFlow$}
For a connected component $C \in \mathcal{C}$, let $\cost(C)$ denote the costs incurred by all procedures for requests $\sigma_t=\{u, v\}$, where both $u$ and $v$ are in $C$. 

\begin{lemma}
For all components $C \in \mathcal{C}_S$, $\cost(C) \le |F_C|^2 - |X_C|$ holds.
\label{lemma:cost_instance_S}
\end{lemma}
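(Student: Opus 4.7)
The plan is induction on the sequence of events that can change $\cost(C)$, $|F_C|$, or $|X_C|$ for a given $C \in \mathcal{C}_S$. The base case is when $C$ first becomes an element of $\mathcal{C}_S$; an initial singleton $\{v\}$ satisfies $\cost(C) = 0 = |F_C|^2 - |X_C|$ trivially, and merger-created components will be handled in the inductive step below.

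For the inductive step, two types of events affect the relevant quantities for a component in $\mathcal{C}_S$. The first type is a request $\sigma_t = \{u,v\}$ with $u,v\in C$ that is handled by the existing instance $I(C)$. Here the case analysis in the proof of Lemma~\ref{lemma:flow_cost_F2} applies verbatim: $I(C)$ is, after all, a $\FLOW$ instance, so the two subcases (migrating one process out of the special piece, or creating two new free processes and moving all free processes into the special piece) preserve the bound. The second type is a request that triggers an \emph{instance merge}: $\sigma_t=\{u,v\}$ with $u\in A$, $v \in B$, $A \neq B$, and $A, B\in \mathcal{C}_S$. The two instances $I(A)$ and $I(B)$ are first combined into a single instance $I(C = A\cup B)$, after which the request is processed by $I(C)$.

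For the merge step, I would apply the induction hypothesis to $A$ and $B$ to obtain $\cost(A) \le |F_A|^2 - |X_A|$ and $\cost(B) \le |F_B|^2 - |X_B|$. Two observations then complete the merge case. First, the instance-merge operation itself incurs no $\FLOW$ cost: combining two $c$-colored pieces or the two special pieces changes no process's piece-color, and the $\FLOW$ update cost is by definition the number of color-changes. Second, since $A$ and $B$ are disjoint, the instance-merge definition gives $|F_C| = |F_A| + |F_B|$ and $|X_C| = |X_A| + |X_B|$. Adding the two hypothesis bounds and using $(|F_A|+|F_B|)^2 \ge |F_A|^2 + |F_B|^2$ (since both terms are non-negative) yields
\[
\cost(C) \;=\; \cost(A) + \cost(B) \;\le\; |F_A|^2 + |F_B|^2 - |X_A| - |X_B| \;\le\; |F_C|^2 - |X_C|,
\]
so the bound holds immediately after the merge. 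Processing the triggering request in $I(C)$ then preserves the bound by the single-instance analysis of Lemma~\ref{lemma:flow_cost_F2}.

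The step that requires the most care is the bookkeeping around the merge. I would explicitly verify that, under the instance-merge definition, combining two same-color colored pieces (and combining the two special pieces) indeed involves no process migrating between piece \emph{types}, so that no $\FLOW$-level cost is charged; and that the merged instance's free-set and special-piece are exactly $F_A \cup F_B$ and $X_A\cup X_B$, so the additive computations above are justified. Once these bookkeeping facts are nailed down, the inductive combination reduces to the elementary inequality $(|F_A|+|F_B|)^2 \ge |F_A|^2 + |F_B|^2$ together with a direct appeal to Lemma~\ref{lemma:flow_cost_F2}.
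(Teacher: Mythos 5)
Your proposal is correct and follows essentially the same route as the paper: induction, with the single-instance request case delegated to the case analysis of Lemma~\ref{lemma:flow_cost_F2}, and the instance-merge case handled by combining $F_C = F_A \cup F_B$, $X_C = X_A \cup X_B$ with the inequality $|F_A|^2 + |F_B|^2 \le (|F_A|+|F_B|)^2$. Your extra bookkeeping remarks (that the merge itself contributes nothing to the $\FLOW$ cost, since that cost is accounted separately under $\CSmall$) are consistent with the paper's cost decomposition.
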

\begin{proof}
We prove the statement by induction. Initially, the sets $F_C$ and $X_C$ are empty. Hence, the base case is trivial. For the induction step, assume, that prior to the current request $\cost(C) \le |F_C|^2 - |X_C|$ for all $C \in \mathcal{C}_S$  holds.

Due to the inductive nature of the proof of Lemma~\ref{lemma:flow_cost_F2}, it suffices to show that after a procedure merge, the invariant holds for the new procedure.

Assume that components $A, B \in  \mathcal{C}_S$ merge into $C = A \cup B$. We  show that the invariant holds for the new component $C$.
By construction, the set of free processes in $C$ is  $F_C= F_A \cup F_B$, and the special piece is $X_C = X_A \cup X_B$.
Clearly, $\cost(C) = \cost(A) + \cost(B)$. Therefore, 
\begin{align*}
        \cost(C)&= \cost(A) + \cost(B)\\ 
                &\le |F_A|^2 - |X_A| + |F_B|^2 - |X_B| \\
                &\le (|F_A| + |F_B|)^2 - (|X_A| + |X_B|) \\
                &= |F_C|^2 -|X_C|.
\end{align*}
The second inequality holds due to the fact that $x^2 + y^2 \le (x+y)^2$, for $x, y \ge 0$. Hence, the invariant is preserved for the new component $C$.
\end{proof}


\begin{lemma}
For all components $C \in \mathcal{C}_L$, $\cost(C) \le \sqrt{k}|F_C|$ holds.
\label{lemma:cost_instance_L}
\end{lemma}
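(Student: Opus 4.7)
The plan is to prove this by induction on the number of merge operations (instance merges and \texttt{CompMerge} calls) that created $C$, using Lemma~\ref{lemma:cost_instance_S} as the base estimate and the fact that a procedure is forced to stop (and thus transitions its component from $\mathcal{C}_S$ to $\mathcal{C}_L$) only when either $|F_C|$ first reaches $\sqrt{k}$ or when the component is absorbed into an existing large cluster via \texttt{CompMerge}.

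\textbf{Base case.} A component $C$ first enters $\mathcal{C}_L$ in one of two ways. Either (i) the procedure $I(C)$ managing it in $\mathcal{C}_S$ terminates because $|F_C|$ reaches the threshold $\sqrt{k}$, or (ii) $C$ is produced by a \texttt{CompMerge} of a previously small component $B$ with a previously large component $A$, which forces $I(B)$ to stop before $|F_B|$ has reached $\sqrt{k}$. In case (i), Lemma~\ref{lemma:cost_instance_S} gives $\cost(C)\le |F_C|^2 - |X_C|\le |F_C|^2$, and since $|F_C|\le\sqrt{k}$ (up to the unavoidable rounding caused by free processes being created two at a time, which we absorb) we conclude $|F_C|^2\le \sqrt{k}|F_C|$ as desired. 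In case (ii), the small side $B$ has $|F_B|\le\sqrt{k}$ by assumption on when it was stopped, so Lemma~\ref{lemma:cost_instance_S} applied to $B$ still yields $\cost(B)\le|F_B|^2\le\sqrt{k}|F_B|$; I then combine this with the inductive bound on $A$ (see below).

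\textbf{Inductive step.} For a \texttt{CompMerge} that produces $C$ from $A$ and $B$, note that no new flow cost is incurred by the merge itself for requests strictly inside $A$ or inside $B$ (the cost it does incur is $\CSmall$/$\CLarge$, not $\CFlow$), and $\cost(C)=\cost(A)+\cost(B)$ and $F_C=F_A\cup F_B$ with $F_A\cap F_B=\emptyset$. I then split on cases. If both $A,B\in\mathcal{C}_L$, the inductive hypothesis gives $\cost(C)\le\sqrt{k}|F_A|+\sqrt{k}|F_B|=\sqrt{k}|F_C|$. If $A\in\mathcal{C}_L$ and $B\in\mathcal{C}_S$, I combine the inductive hypothesis on $A$ with the bound from Lemma~\ref{lemma:cost_instance_S} applied to $B$, using $|F_B|\le\sqrt{k}$ to get $|F_B|^2\le\sqrt{k}|F_B|$, which again gives $\cost(C)\le\sqrt{k}|F_C|$. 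Finally, after a component $C$ is in $\mathcal{C}_L$, no further flow procedure operates on its interior, so no subsequent request inside $C$ contributes to $\cost(C)$; later merges only add the already-bounded costs of other components.

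\textbf{Main obstacle.} The delicate step is the claim that $|F_C|\le\sqrt{k}$ whenever we invoke Lemma~\ref{lemma:cost_instance_S} to bound the flow cost of a component that has just transitioned to $\mathcal{C}_L$. For a component stopped by \texttt{CompMerge}, this is immediate from the stopping rule. For a component stopped by reaching the threshold, strictly speaking $|F_C|$ could overshoot $\sqrt{k}$ by one because free processes are created in pairs; this is the only place where one might need to absorb an $O(1)$ factor, or rely on the convention that $\sqrt{k}$ is chosen so that the threshold is hit exactly. Aside from this, the proof is a clean induction that rests on (a) the subadditivity of $\cost$ across components, (b) the quadratic bound $|F|^2 - |X|$ of Lemma~\ref{lemma:cost_instance_S}, and (c) the fact that the threshold $\sqrt{k}$ is precisely what turns this quadratic bound into the desired linear bound $\sqrt{k}|F_C|$.
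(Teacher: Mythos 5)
Your proposal is correct and follows essentially the same route as the paper: an induction whose two interesting cases are (i) an instance stopping at the threshold, handled via Lemma~\ref{lemma:cost_instance_S} together with $|F_C|\le\sqrt{k}$ so that $|F_C|^2\le\sqrt{k}|F_C|$, and (ii) a \texttt{CompMerge} with a large component, handled by summing the bounds for the two parts using $F_C=F_A\cup F_B$ and $\cost(C)=\cost(A)+\cost(B)$. Your side remark about $|F_C|$ possibly overshooting the threshold by one (since free processes are created in pairs) is a fair observation that the paper also silently absorbs.
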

\begin{proof}
We prove the statement by induction on the number of requests. Initially, the set $\mathcal{C}_L$ is empty, thus the base case trivially holds. 
For the induction step, assume that prior to the current request, $\cost(C) \le \sqrt{k}|F_C|$ holds for all $C \in \mathcal{C}_L$.

There are no instances that manage components $C \in \mathcal{C}_L$, and therefore no instance can generate new cost. However, there are two cases where the set $\mathcal{C}_L$ changes. We show that in each case, the invariant is preserved.

\begin{itemize}
    \item Component merge:\\
    Assume, the components $A$ and $B$ are merged into $C = A \cup B$, where w.l.o.g. $A \in  \mathcal{C}_L$. 
    Thus the set $ \mathcal{C}_L$ changes. We show that the invariant holds for the new component $C$.
    By construction, $F_C= F_A \cup F_B$ and $\cost(C) = \cost(A) + \cost(B)$. By induction hypothesis, $\cost(A) \le \sqrt{k}|F_A|$. Same holds for $B$, if $B \in \mathcal{C}_L$. Otherwise, by Lemma~\ref{lemma:cost_instance_S}, $\cost(B) \le |F_B|^2 -|S_B| \le \sqrt{k} |F_B|$, since by design, the number of free processes of a procedure is at most $\sqrt{k}$. Hence, 
    \begin{align*}
            \cost(C) &\le \sqrt{k}|F_A| +\sqrt{k} |F_B| 
                    = \sqrt{k}|F_C|.
    \end{align*}
    Therefore, the invariant is preserved for the component $C$.
    \item Instance stop:\\
    By Lemma~\ref{lemma:cost_instance_S}, at the time an instance $I(C)$ stops, $\cost(C) \le |F_C|^2-|S_C| \le \sqrt{k} |F_C|$ holds.
    Subsequently, the component $C$ is moved to $\mathcal{C}_L$.
    Thus, the invariant is preserved.
    
\end{itemize}
\end{proof}

Applying the previous two lemmas gives:

\begin{lemma}
$\CFlow \le \sqrt{k} |F|$.
\label{lemma:costInst}
\end{lemma}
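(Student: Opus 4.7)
The plan is to decompose $\CFlow$ as a sum of per-component costs and then apply the two preceding lemmas separately to components in $\mathcal{C}_S$ and $\mathcal{C}_L$.

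First, I would observe that $\CFlow = \sum_{C \in \mathcal{C}} \cost(C)$, since the cost incurred by every $\FLOW$ procedure is charged to the (unique) connected component it manages or has managed, and the sets $\{C\}_{C \in \mathcal{C}}$ form a partition of (a subset of) $V$. Since the components in $\mathcal{C}$ are vertex-disjoint and $F_C = F \cap C$, the free-process sets also partition $F$; in particular $|F| = \sum_{C \in \mathcal{C}} |F_C|$.

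Next, I would bound each summand by $\sqrt{k}|F_C|$. For $C \in \mathcal{C}_L$ this is immediate from Lemma~\ref{lemma:cost_instance_L}. For $C \in \mathcal{C}_S$, Lemma~\ref{lemma:cost_instance_S} gives $\cost(C) \le |F_C|^2 - |X_C| \le |F_C|^2$, and since an active instance stops as soon as its free-process count reaches $\sqrt{k}$, we have $|F_C| \le \sqrt{k}$, so $|F_C|^2 \le \sqrt{k}|F_C|$.

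Combining,
\[
\CFlow = \sum_{C \in \mathcal{C}_S} \cost(C) + \sum_{C \in \mathcal{C}_L} \cost(C) \le \sqrt{k} \sum_{C \in \mathcal{C}} |F_C| = \sqrt{k}|F|\enspace.
\]
There is no real obstacle here; the work has already been done in Lemmas~\ref{lemma:cost_instance_S} and~\ref{lemma:cost_instance_L}. The only subtle point worth verifying is that the bound $|F_C|\le\sqrt{k}$ for $C\in\mathcal{C}_S$ is correctly preserved through instance merges, but this is guaranteed by the clustering algorithm's design, which stops (and moves into $\mathcal{C}_L$) any procedure whose free-process count reaches $\sqrt{k}$ before a further merge can inflate it.
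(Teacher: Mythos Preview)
Your proposal is correct and follows essentially the same approach as the paper: decompose $\CFlow$ over the components in $\mathcal{C}_S$ and $\mathcal{C}_L$, apply Lemmas~\ref{lemma:cost_instance_S} and~\ref{lemma:cost_instance_L} respectively, and use $|F_C|\le\sqrt{k}$ for $C\in\mathcal{C}_S$ to bound $|F_C|^2$ by $\sqrt{k}\,|F_C|$. The paper's proof is the same chain of inequalities, just written slightly more tersely.
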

\begin{proof}
According to Lemmas~\ref{lemma:cost_instance_S} and ~\ref{lemma:cost_instance_L},
    \begin{align*}
        \CFlow  &= \sum_{C \in \mathcal{C}_S} \cost(C) + \sum_{C \in \mathcal{C}_L} \cost(C)\\ 
                &\le  \sum_{C \in\mathcal{C}_S} (|F_C|^2 - |S_C|) +  \sum_{C\in \mathcal{C}_L} \sqrt{k} |F_C| \\
                &\le \sqrt{k} \sum_{C \in \mathcal{C}} |F_C|\\
                &= \sqrt{k} |F|.
    \end{align*}
The second inequality holds because for each instance $I(C)$, $|F_C| \le \sqrt{k}$.
\end{proof}

Next, we analyze $\CLarge$.
\subsubsection*{Upper bound on $\CLarge$}
\begin{lemma}
$\CLarge \le \sqrt{k}|F|$.
\label{lemma:costLarge}
\end{lemma}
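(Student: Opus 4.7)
The plan is to bound $\CLarge$ by partitioning the total cost across the final large clusters. For each final large cluster $L$, I will show that the cost $\CLarge(L)$ of events in $L$'s formation history satisfies $\CLarge(L) \le \sqrt{k}\,|F_L|$; summing over final large clusters then yields $\CLarge = \sum_L \CLarge(L) \le \sqrt{k}\,|F|$.

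The cost $\CLarge(L)$ decomposes into contributions from three event types in $L$'s lineage: (a) natural stops of procedures $I(C_j)$ with cost at most $|C_j|$ each; (b) absorptions of small components $B_i$ into existing large clusters via \texttt{CompMerge} with cost at most $|B_i|$ each; and (c) mergers of two large clusters via \texttt{CompMerge} with cost at most $\min(|L_{1j}|,|L_{2j}|)$ each.

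Two observations drive the analysis. First, the total (a) plus (b) cost for $L$'s lineage telescopes to at most $|L|$: every process in $L$ enters the large regime exactly once, contributing at most one unit of cost via an (a) or (b) event that assigns it to a new singleton cluster. Second, a classical union-find-style argument bounds the total (c) cost by $|L|\log_2 m$, where $m$ is the number of natural stops in $L$'s history, since each process is on the smaller side of a (c) merge at most $\log_2 m$ times over the (c)-merge tree of $L$.

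Combined with the learning restriction $|L| \le k$ and the fact that each natural stop contributes $\sqrt{k}$ fresh free processes to $F_L$ (so $|F_L| \ge m\sqrt{k}$), these bounds yield $\CLarge(L) \le |L|(1+\log_2 m) \le k(1+\log_2 m) \le k \cdot m \le \sqrt{k}\,|F_L|$, where the penultimate step uses the elementary inequality $1+\log_2 m \le m$ for integer $m \ge 1$, and the final step uses $|F_L| \ge m\sqrt{k}$. I expect the main technical obstacle to be rigorously formalizing the classical (c)-merge bound within the specific merge-tree structure of our algorithm, particularly the interplay between (c) events and the small-small merges that precede natural stops (which do not contribute to $\CLarge$ but do reshape the tree), and to verify that the decomposition of cost into the categories $\CFlow$, $\CLarge$, $\CSmall$, and $\CMono$ accounts for each process-cluster-change exactly once, so that the telescoping identity for (a) plus (b) is valid.
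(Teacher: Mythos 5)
Your proof reaches the right conclusion but by a genuinely different route from the paper. The paper uses a marking/charging scheme: when an instance stops naturally it marks the at least $\sqrt{k}$ free processes of the newly formed large cluster, charges every subsequent migration into that cluster to its marked processes (unmarking the absorbed side), and observes that since at most $k$ processes can ever migrate in, each marked free process absorbs at most $k/\sqrt{k}=\sqrt{k}$ cost. You instead partition the cost by final large cluster $L$ and account directly: the telescoping bound $\sum_j|C_j|+\sum_i|B_i|\le|L|$ for formations and absorptions is correct, and the observation $|F_L|\ge m\sqrt{k}$ (free-process sets of distinct natural stops are disjoint) is also correct. Your approach buys a transparent per-cluster ledger at the price of having to reason about the large-large merge tree; the paper's charging scheme sidesteps the merge tree entirely because the surviving cluster's marked set persists through large-large merges.

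One intermediate claim of yours is false as stated: it is not true that each process is on the smaller side of a large-large merge at most $\log_2 m$ times, where $m$ is the number of leaves of the merge tree. With three leaves of sizes $\sqrt{k},\sqrt{k},k-2\sqrt{k}$, the elements of the first leaf are on the smaller side twice while $\log_2 3<2$; the classical doubling argument controls the number of such events by the ratio of final to initial \emph{sizes} (here $\log_2(|L|/\sqrt{k})$), not by the leaf count. Fortunately your final chain has enough slack that this does not matter: there are exactly $m-1$ large-large merges, each costing at most $\min(|L_{1j}|,|L_{2j}|)\le|L|/2$, so the type-(c) cost is at most $(m-1)|L|/2$ and
\begin{equation*}
\CLarge(L)\;\le\;|L|+\tfrac{m-1}{2}|L|\;=\;\tfrac{m+1}{2}|L|\;\le\;k\cdot m\;\le\;\sqrt{k}\,|F_L|\enspace,
\end{equation*}
which is exactly the inequality you needed. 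With that repair the argument is sound.
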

\begin{proof}
We prove the statement using a charging argument. We distribute the cost among the free processes, and show that each free process is charged at most $\sqrt{k}$. We distinguish between two cases in which $\CLarge$ is incurred:

\begin{itemize}
    \item Formation of a large cluster\\
    When an instance terminates and a large cluster $C$ is formed, we mark all free processes in $C$. 
    The cost of merging the pieces of component $C$ into a single cluster, which is at most 
    $|C|$, is then distributed equally among the marked free processes in $C$.
    \item Merging with a large cluster\\
    Suppose, the \texttt{CompMerge($A, B$)} operation is executed, where one of the components is a large cluster. Let $A$ be this component. The cost of this operation is at most $|B|$. 
    We unmark all marked processes in $B$ and distribute the cost of the merge equally among the marked free processes in $A$.
\end{itemize}

Now, we calculate the charge of free processes $v \in C$, where $C$ is a large cluster. Notice that only marked processes are charged. 
By construction, there are at least $\sqrt{k}$ marked free processes in $C$. Since $|C| \le k$, at most $k$ processes could have migrated into $C$ due to the above two cases. Therefore, each marked free process is charged at most $\frac{k}{\sqrt{k}} = \sqrt{k}$ until it is unmarked. Then, $\CLarge \le \sum_{v \in F} \sqrt{k} = \sqrt{k} |F|$.
\end{proof}

Next, we establish bounds on $\CSmall$ and $\CMono$, using the charging scheme presented in Appendix~\ref{appendix_c}.

\subsubsection*{Upper bound on $\CSmall$}
Assume, we merge two $\FLOW$ instances $I(C_A)$ and $I(C_B)$ into a combined instance $I(C_A \cup C_B)$. 
The merging process involves following operations: 
\begin{itemize}
    \item Merging special pieces\\ 
    The two special pieces (which by design are singleton clusters) merge into one special piece.
    \item Merging colored pieces\\
    Pieces with the same color $c$ merge into a single $c$-colored piece.
\end{itemize}
The cost of the instance merge operation is defined as the number of processes that change clusters. A merge of two pieces is implemented by migrating the processes of the smaller piece to the larger piece, subsequently deleting the smaller piece. Importantly, not every merge of pieces incurs a cost. 

Let $A \subseteq C_A$ and $B\subseteq C_B$ be two pieces (either special or colored) that merge. We distinguish between two types of merges between pieces:
\begin{itemize}
    \item Monochromatic merge\\
    $A$ and $B$ belong to the same colored cluster. In this case, no cost is incurred, since no process changes its cluster.
    \item Non-monochromatic merge\\
    Either $A$ or $B$ is a singleton cluster. The processes of the smaller piece migrate to the larger piece. The cost of this merge is $\min(|A|, |B|)$. 
\end{itemize}

Now, we analyze the costs of non-monochromatic merges. We develop a charging scheme to distribute these costs to the free processes in $C_A \cup C_B$. 
Without loss of generality, assume that $A$ is a singleton cluster. By construction, we have that $\ffv(A) \ge \frac{1}{4}$.
For a free process $v$ we distinguish between two types of charges: $\chf(v)$ and $\chm(v)$.
\begin{itemize}
    \item Case 1: $\ffv(B) \ge \frac{1}{16}$:\\
    W.l.o.g., let $|C_A| \le |C_B|$. The cost of this merge operation is $\min(|A|,|B|) \le 16 \min(|F_A|,|F_B|) \le 16 |F_A|$. We distribute this cost by increasing $\chf(v)$ by $16$ for each $v \in F_A$. 
    \item Case 2: $\ffv(B) < \frac{1}{16}$:\\
    We charge each $v \in F_A \cup F_B$ the cost $\frac{\min(|A|,|B|)}{|F_A|+|F_B|}$.
\end{itemize}

Note, that each free process is charged at most once during an instance merge. The above two cases imply that the total cost $\CSmall$ is bounded by:
\[
\CSmall \le \sum_{v \in F} (\chf(v) + \chm(v))\enspace.
\]

Next, we will analyze the detailed contributions of $\chf(v)$ and $\chm(v)$ to the overall cost.
\begin{lemma}
For each free process $v$, $\chf(v) \le 16\log k$.
\label{lemma:chf}
\end{lemma}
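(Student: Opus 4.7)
The plan is to bound $\chf(v)$ via a simple doubling argument on the size of the connected component of $G_\sigma$ that currently contains $v$.

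By construction, $\chf(v)$ is only incremented in Case~1 of the instance-merge analysis: when two instances $I(C_A)$ and $I(C_B)$ are combined and we merge a singleton-cluster piece $A \subseteq C_A$ with a piece $B \subseteq C_B$ satisfying $\ffv(B) \ge \tfrac{1}{16}$. Under the W.l.o.g.\ convention $|C_A| \le |C_B|$ adopted in Case~1, every $v \in F_A$ is charged $16$. The key observation is that immediately after this merge, $v$ sits inside the combined component $C_A \cup C_B$, whose size is $|C_A| + |C_B| \ge 2|C_A|$. Hence, \emph{every} $\chf$-charge to $v$ at least doubles the size of the connected component of $G_\sigma$ containing~$v$.

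Next, I would invoke the learning restriction on $\sigma$: by assumption, the connected components of the demand graph always admit a perfect partition across servers of capacity $k$, so every connected component of $G_\sigma$ has size at most $k$ throughout the execution. Since $v$'s component starts at size at least $1$ and at most doubles with each $\chf$-charge, while being capped by $k$, the number of such charges is at most $\log_2 k$. Multiplying by the per-charge contribution of $16$ yields $\chf(v) \le 16 \log k$.

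The main thing to verify carefully is that no $\chf$ mass can be accumulated outside Case~1 of the non-monochromatic merge inside an instance merge: monochromatic merges cost $0$, singleton/large-cluster merges and formations are bookkept under $\CLarge$, and Case~2 contributes only to $\chm$. Once this is confirmed, the doubling argument above applies verbatim and gives the claimed bound.
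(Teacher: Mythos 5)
Your proposal is correct and matches the paper's argument essentially verbatim: both rely on the observation that a $\chf$-charge to $v$ only occurs when $v$ lies in the smaller of the two merging components, so the component containing $v$ at least doubles with each charge, and the learning restriction caps component sizes at $k$, giving at most $\log k$ increments of $16$ each. No gaps.
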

\begin{proof}
Fix a process $v$, and assume that $\chf(v)$ increases during an instance merge. 
This increase happens only when $v$ is part of the smaller connected component, i.e., $v \in C_A$ with $|C_A| \le |C_B|$.

After such a merge, $v$ belongs to the new connected component $C_A \cup C_B$ with $|C_A \cup C_B| \ge 2 |C_A|$. This means that the size of the connected component containing 
$v$ at least doubles each time $\chf(v)$ increases. 

Since the size of any component is bounded by $k$, it can only double at most $\log k$ times before it reaches $k$. Therefore, $\chf(v)$ can increase at most $\log k$ times.
Given that each increase is by $16$, we conclude that $\chf(v) \le 16\log k$ for each free process $v$.
\end{proof}

Now, we derive the contribution of $\chm(v)$.
\begin{lemma}
$\sum_{v \in V} \chm(v) \le 64\sqrt{k} |F|$
\label{lemma:chm}
\end{lemma}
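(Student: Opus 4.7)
The plan is to convert the per-event charge distribution into the global bound via an amortized argument. Because Case~2 distributes the entire merge cost $\min(|A|,|B|)$ equally among the $|F_A|+|F_B|$ free processes in $A\cup B$, we have the identity
\[
\sum_{v\in V}\chm(v) \;=\; \sum_{\text{Case~2 events}} \min(|A|,|B|)\enspace.
\]

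First I would establish the per-event inequality $\min(|A|,|B|) \le 4|F_A|$. Since $A$ is a singleton cluster, $\ffv(A) \ge 1/4$ and hence $|A|\le 4|F_A|$; in both subcases $|A|\le|B|$ and $|B|<|A|$, this gives $\min(|A|,|B|)\le |A|\le 4|F_A|$. Consequently each $v\in F_A\cup F_B$ is charged at most $4|F_A|/(|F_A|+|F_B|)\le 4$ in any single Case~2 event, and
\[
\sum_{v\in V}\chm(v) \;\le\; 4\sum_{\text{Case~2 events}}|F_A|\enspace.
\]

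The main obstacle, and the second step, is to bound $\sum_{\text{Case~2 events}}|F_A|\le 16\sqrt{k}\,|F|$; equivalently, each fixed free process $v$ should belong to a singleton piece $A$ in at most $16\sqrt{k}$ Case~2 events. The leverage is the $\FLOW$ termination cap: while $v$'s instance is active it holds at most $\sqrt{k}$ free processes, so $|F_A|\le\sqrt{k}$ and $|A|\le 4\sqrt{k}$ throughout. I would introduce a potential combining the size of $v$'s current piece $|P_v|$ with the free-count of $v$'s instance $|F_{I_v}|$. When $v\in F_A$ with $|A|\le|B|$, the piece containing $v$ at least doubles, yielding an $O(\log k)$ bound from the classical doubling argument used in Lemma~\ref{lemma:chf}. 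When $|A|>|B|$, the piece does not double, but the merged piece $A\cup B$ has $\ffv$ strictly smaller than $\ffv(A)$ (since $\ffv(B)<1/16\le\ffv(A)$), driving the piece toward $3/4$-monochromaticity and eventual absorption into a colored cluster; after absorption, $v$ cannot re-enter a singleton-side event without a fresh influx of free processes in $v$'s instance, which is capped by $\sqrt{k}$.

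The delicate point is the inherited-status rule for pieces with $\ffv\in(1/4,1/2]$: the classification of such a piece depends on its predecessor, so $v$ may re-enter a singleton role without obvious size progress. I would resolve this by charging each such re-entry to an increase of $v$'s instance free-count, which is capped at $\sqrt{k}$ by the termination criterion. Combining the per-event charge bound of $4$ with a per-process event bound of $16\sqrt{k}$ then gives $\sum_v \chm(v)\le 4\cdot 16\sqrt{k}\cdot|F| = 64\sqrt{k}\,|F|$, completing the proof.
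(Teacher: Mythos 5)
There is a genuine gap in the second step. Your first step is fine: since $A$ is a singleton cluster, $\ffv(A)\ge 1/4$ gives $\min(|A|,|B|)\le|A|\le 4|F_A|$, so each Case~2 event charges each affected free process at most $4$. But the bound you then need, that each fixed free process $v$ sits on the singleton side of at most $16\sqrt{k}$ Case~2 events, is not established and is in fact false. The termination cap bounds $|F_A|\le\sqrt{k}$, not the number of merge events: an adversary can repeatedly merge a singleton piece $A$ with tiny pieces $B$ having $\ffv(B)=0$ (e.g.\ $|B|=1$, $F_B=\emptyset$), each of which is a legitimate Case~2 event. Each such event charges only $|B|/(|F_A|+|F_B|)=1/|F_A|$ to $v$, so the total charge stays small, but the \emph{number} of events can grow like the component size, i.e.\ up to $\Theta(k)$ per doubling phase of $|A|$. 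Your proposed fix for the inherited-status rule --- charging re-entries to increases of the instance's free count --- does not help here either, since merging with pieces that contribute no free processes keeps $A$ a singleton cluster (via inheritance) without increasing $|F_{I_v}|$ at all. The decomposition (max per-event charge) $\times$ (number of events) is therefore too lossy: cheap events are numerous and expensive events are rare, and only a bound on the \emph{accumulated} charge closes the argument.

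The paper's proof handles exactly this by amortizing over accumulated charge rather than event counts: it maintains, within each iteration of a $\FLOW$ procedure, the invariant $\avg(A)\le 64(1-\ffv(A))$ for every piece $A$, using Lemma~\ref{lemma:charge_merge} for cost-free merges and Lemma~\ref{lemma:charge_merge_cost} for charging merges (the latter exploits the gap between $\ffv(A)\ge 1/4$ and $\ffv(B)<1/16$ to absorb the charge $\min(|A|,|B|)/(|F_A|+|F_B|)$ into the decrease of $\ffv$). This yields an average charge of at most $64$ per free process \emph{per iteration}, and multiplying by the at most $\sqrt{k}$ iterations of each procedure gives $64\sqrt{k}|F|$. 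To repair your argument you would need to replace the event-count bound with such a potential-function bound on the cumulative charge.
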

\begin{proof}
Consider an iteration of a $\FLOW$ procedure. 
It ends only when a new flow-augmenting path is created by the procedure. 
Importantly, each colored cluster can only grow during an iteration.

Assume, that at the beginning of an iteration $\chm(v) = 0$ for each $v \in C$ holds. Let $A$ be a piece maintained by the procedure. 
We utilize the charging scheme presented in Appendix~\ref{appendix_c} and define
\[
\avg(A) = \frac{\sum_{v \in F_A}\chm(v)}{F_A}\enspace,
\]
as the average charge of the free processes in $A$. 

We maintain the following invariant: For each piece $A$ in the flow procedure, during the current iteration, the inequality $\avg(A) \le 64(1-\ffv(A))$ holds. 

At the beginning of an iteration, $\avg(A) = 0$ for each colored cluster $A$, so the invariant is trivially satisfied. 
Additionally, each free process $v$ in the special cluster is interpreted as a singleton piece
$\{v\}$ with $\avg(\{v\})=0$, satisfying the base case. 
For the induction step, assume that the invariant holds prior to the current request.

There are several ways in which a piece $A$ can change:
\begin{itemize}
    \item 
    A free process is added to the piece $A$ by the $\FLOW$ procedure.\\
    Since $\chm(v)$ remains unchanged for all processes $v \in A \cup \{v\}$, we apply Lemma~\ref{lemma:charge_merge} to sets $A$ and $\{v\}$. Consequently, $\avg(A \cup \{v\}) \le 64(1-\ffv(A \cup \{v\}))$.
    \item An instance merge is executed, potentially merging $A$ with another piece $B$. We distinguish between two cases:
    \begin{itemize}
        \item $\chm(v)$ remains unchanged for all $v \in A \cup B$.\\
        We apply Lemma~\ref{lemma:charge_merge} to sets $A$ and $B$, ensuring $\avg(A \cup B) \le 64(1-\ffv(A \cup B))$.
        \item A charge is applied to $\chm(v)$ for some $v \in A \cup B$.\\
        This happens only when $\ffv(A) \ge \frac{1}{4}$ and $\ffv(B) \le \frac{1}{16}$ (or vice versa). In this case, we apply Lemma~\ref{lemma:charge_merge_cost} to sets $A$ and $B$. Hence, $\avg(A \cup B) \le 64(1-\ffv(A \cup B))$.
    \end{itemize}
\end{itemize}

From the above cases, we conclude that  $\avg(A) \le 64(1-\ffv(A)) \le 64$ throughout the entire iteration. 
Thus, each free process is charged at most $64$ during a single $\FLOW$ iteration, \emph{on average}. Since there are at most $\sqrt{k}$ iterations (then the procedure terminates), the total cost incurred across all iterations is:
\[
\sum_{v \in F} \chm(v) = \sum_{A} \sum_{v \in F_A} \chm(v) =  \sum_{A} \avg(A)|F_A| \le \sum_{A} 64 \sqrt{k} |F_A| = 64\sqrt{k}|F|\enspace.
\]
\end{proof}

Finally, we can state the following lemma:
\begin{lemma}
$\CSmall \le \mcO(\sqrt{k})|F|\enspace.$
\label{lemma:costMerge}
\end{lemma}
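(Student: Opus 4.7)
The plan is to combine the two bounds established in Lemmas~\ref{lemma:chf} and~\ref{lemma:chm} with the decomposition $\CSmall \le \sum_{v \in F}(\chf(v) + \chm(v))$ that was argued immediately after the case analysis of non-monochromatic merges. This decomposition is the key conceptual step, and it already holds by construction of the charging scheme: every unit of $\CSmall$ incurred during an instance merge is attributed either to $\chf$ (in Case~1, when both merged pieces have a large free-fraction) or to $\chm$ (in Case~2), and each free process is charged at most once per merge on either of the two accounts.

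With the decomposition in hand, the bound follows from summing the two ingredients. For the $\chf$ part, Lemma~\ref{lemma:chf} yields
\[
\sum_{v \in F} \chf(v) \le 16 (\log k)\, |F|.
\]
For the $\chm$ part, Lemma~\ref{lemma:chm} directly gives
\[
\sum_{v \in F} \chm(v) \le 64\sqrt{k}\, |F|.
\]
Adding these two inequalities yields
\[
\CSmall \le 16(\log k)|F| + 64\sqrt{k}\,|F| = \mcO(\sqrt{k})\,|F|,
\]
where in the last step we use $\log k = \mcO(\sqrt{k})$.

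There is really no main obstacle remaining at this point: the hard work was already carried out in Lemmas~\ref{lemma:chf} and~\ref{lemma:chm}, the former of which exploits that a process can only be on the smaller side of a component merge $\mcO(\log k)$ times (doubling argument), and the latter of which uses the free-fraction invariant $\avg(A) \le 64(1-\ffv(A))$ together with the bound of $\sqrt{k}$ iterations per $\FLOW$ instance. The present lemma is the clean packaging step that consolidates both contributions into the target bound $\mcO(\sqrt{k})|F|$, which in turn will be combined with Lemmas~\ref{lemma:costInst} and~\ref{lemma:costLarge} and the $\CMono$ bound to yield the overall $\mcO(\sqrt{k})$-competitive guarantee.
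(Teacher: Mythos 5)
Your proof is correct and follows exactly the paper's argument: decompose $\CSmall \le \sum_{v \in F}(\chf(v) + \chm(v))$, apply Lemma~\ref{lemma:chf} and Lemma~\ref{lemma:chm}, and absorb the $16\log k$ term into the $\mcO(\sqrt{k})$ bound.
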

\begin{proof}
Applying Lemma~\ref{lemma:chf} and Lemma~\ref{lemma:chm} gives,
\begin{align*}
\CSmall 
&\le \sum_{v \in F} (\chf(v) + \chm(v))  \\
&= \sum_{v \in F} \chf(v) + \sum_{v \in F} \chm(v)\\
&\le 16\log k |F| + 64\sqrt{k} |F| \\
&\le 80\sqrt{k}|F|\enspace.
\end{align*}
\end{proof}

\subsubsection*{Upper bound on $\CMono$}
The following scenarios can increase $\CMono$: 
\begin{itemize}
    \item Case 1: At the end of an iteration of a $\FLOW$ procedure, all free processes migrate to the special piece, making all colored pieces 1-monochromatic. As a result, these colored pieces are 
    then migrated back to their corresponding colored cluster. 
    \item Case 2: A piece $A$ that was previously assigned to a colored cluster becomes $\gamma$-monochromatic with $\gamma < \frac{1}{2}$. This means, the piece $A$ now contains more than $\frac{1}{2}|A|$ free processes. Consequently, $A$ is removed from the colored cluster and becomes a singleton cluster. The cost of this operation is $|A|$.
    \item Case 3: A piece $A$ that was previously a singleton cluster becomes $\gamma$-monochromatic with $\gamma > \frac{3}{4}$. This implies that $A$ now contains fewer than $\frac{1}{4}$ free processes, leading to its migration back to its corresponding colored cluster. The cost of this operation is $|A|$.
\end{itemize}

We now demonstrate that $\CMono$ is bounded in terms of $\CSmall$ and $\CFlow$.
\begin{lemma}
$\CMono \le \mcO(\CSmall + \CFlow)\enspace.$
\label{lemma:costMono}
\end{lemma}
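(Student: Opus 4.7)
The plan is to decompose $\CMono$ into the three cases from its definition and to handle Case~1 by a direct comparison to $\CFlow$, while Cases~2 and~3 are bounded by an amortized charging scheme analogous to the one already used for $\CSmall$ in Lemma~\ref{lemma:chm}. For Case~1, I would observe that if a piece $A$ is still a singleton cluster at the end of an iteration, then it never became $\ge 3/4$-monochromatic during the iteration (otherwise Case~3 would have fired and reassigned it). Hence just before the free processes leave, $f_A \ge |A|/4$, so the number of linked processes migrating back to the colored cluster satisfies $|A|-f_A \le 3 f_A$. The concurrent migration of the $f_A$ free processes to the special piece is part of the $\FLOW$ procedure's update cost and is therefore already charged to $\CFlow$ (cf.\ Lemma~\ref{lemma:flow_cost_F2}). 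Summing over all pieces and iterations yields a Case~1 contribution of at most $3\,\CFlow$.

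For Cases~2 and~3, each event at piece $A$ costs $|A|$, with $|A| < 2 f_A$ in Case~2 (because $\ffv(A) > 1/2$ at the transition) and $|A|-f_A > 3|A|/4$ in Case~3 (because $\ffv(A) < 1/4$). In Case~2 I would distribute the cost uniformly over the free processes currently in $A$, charging $O(1)$ per free process. For Case~3 the structural observation is that within a single iteration $\ffv(A)$ is non-decreasing under $\FLOW$ additions (and starts at $0$ when the iteration begins), so a drop below $1/4$ requires an instance merge that imports a mostly-linked partner; such a merge is necessarily non-monochromatic and therefore paid for in $\CSmall$. To aggregate these charges globally, I would maintain the invariant $\avg(A) \le c\,(1-\ffv(A))$ from Appendix~\ref{appendix_c} and verify its preservation under the relevant atomic operations on a piece: FLOW additions, monochromatic merges (using Lemma~\ref{lemma:charge_merge} as in the proof of Lemma~\ref{lemma:chm}), non-monochromatic merges (using Lemma~\ref{lemma:charge_merge_cost}), and end-of-iteration purges. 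Since each procedure runs for at most $\sqrt{k}$ iterations and $\avg(A)=O(1)$ throughout, summing yields a total charge of $O(\sqrt{k}\,|F|)$, which is $O(\CFlow+\CSmall)$ by Lemmas~\ref{lemma:costInst} and~\ref{lemma:costMerge}.

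The main obstacle is Case~3: when a singleton piece $A$ merges with a large, mostly-linked piece $B$ and the combined piece subsequently transitions to the colored cluster, the Case~3 cost $|A|+|B|$ can substantially exceed the instance-merge cost $\min(|A|,|B|)$ charged to $\CSmall$ for that single step, so one cannot bound Case~3 by a per-event comparison against the triggering merge. The resolution is precisely the potential-based amortization outlined above: the linked mass of $B$ can only have accumulated through a history of earlier $\FLOW$ additions and merges whose costs are already reflected in $\CFlow$ and $\CSmall$, and the $\avg$-invariant distributes Case~3's cost over that entire lineage rather than over the single merge that triggers the transition.
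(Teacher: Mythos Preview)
Your Case~1 argument is essentially the paper's, and is fine.

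For Cases~2 and~3, however, the paper takes a much more direct route than you propose, and your route has a genuine gap at the end. The paper does \emph{not} reuse the $\avg$-invariant machinery at all. Instead it exploits the hysteresis built into the assignment rule (thresholds $1/2$ versus $3/4$): between two consecutive transitions of a piece, the free-process fraction must change by at least a constant. In Case~2, the piece entered its colored cluster with $\ffv\le 1/4$ and leaves with $\ffv>1/2$; the extra $\Omega(|A'|)$ free processes can only have arrived via $\FLOW$ additions or via non-monochromatic merges (monochromatic partners are themselves $\ge 1/2$-monochromatic), so the interval carries $\Omega(|A'|)$ of $\CFlow+\CSmall$. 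In Case~3, the piece became a singleton with $\ffv>1/2$ and transitions back with $\ffv<1/4$; the extra linked processes can only have arrived via merges, and since the piece is a singleton those merges are all non-monochromatic, contributing $\Omega(|A'|)$ to $\CSmall$. This per-interval charging immediately gives $\CMono=\mcO(\CFlow+\CSmall)$ with no amortized invariant to maintain.

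Your proposal, by contrast, aims at a bound of $\mcO(\sqrt{k}\,|F|)$ via the $\avg$-invariant and then asserts that this is $\mcO(\CFlow+\CSmall)$ ``by Lemmas~\ref{lemma:costInst} and~\ref{lemma:costMerge}''. That last step is invalid: those lemmas give \emph{upper} bounds $\CFlow\le \sqrt{k}\,|F|$ and $\CSmall\le \mcO(\sqrt{k})|F|$, not lower bounds, so $\mcO(\sqrt{k}\,|F|)\le \mcO(\CFlow+\CSmall)$ does not follow. What you would actually establish (if the invariant argument goes through) is the weaker statement $\CMono\le \mcO(\sqrt{k}\,|F|)$, which suffices for Lemma~\ref{lemma:costCluster} but is not the lemma as stated. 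Moreover, you have not actually verified that the $\avg$-invariant survives the new Case~2 and Case~3 charges; Lemmas~\ref{lemma:charge_merge} and~\ref{lemma:charge_merge_cost} were calibrated for the $\CSmall$ charges only, and absorbing an additional $|A|$ at a threshold crossing needs its own argument. The paper's hysteresis-based charging sidesteps all of this.
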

\begin{proof}
We analyze each case separately.
\begin{itemize}
    \item Case 1:\\By construction, only pieces $A$ with at least $\frac{1}{4}|A|$ free processes are not in their corresponding colored cluster at the end of an iteration. This also means, that the $\FLOW$ procedure moved at least $\frac{1}{4}|A|$ processes during the current iteration to piece $A$. Each such piece is moved to its corresponding colored cluster, with cost $|A|$. Hence, by a charging argument, the overall cost of Case 1 is at most $4\CFlow$. 
    \item Case 2 and 3
    \begin{itemize}
        \item First, we analyze Case 2. Consider a piece $A$ that is either initially within its colored server, or was moved there recently. 
        By construction, $\frac{|F_A|}{|A|} \le \frac{1}{4}$. Suppose, that after some instance merge operations, and/or free process migrations triggered by the $\FLOW$ procedure, the instance grows in size. 
        
        Let $A'$ be the expanded piece, and assume $|F_A'| > \frac{1}{2} |A'|$. 
        By construction, only up to $\frac{1}{4}|A'|$ of the free processes in $A'$ could have migrated as a result of monochromatic merges. Therefore, at least $\frac{1}{4}|A'|$ of the free processes in $A'$ must have migrated either due to non-monochromatic merges or through the actions of the $\FLOW$ procedure. Each of these migrations contributes either to $\CSmall$ or $\CFlow$. 
        
        Hence, the piece $A$ observed at least $\frac{1}{4}|A'|$ cost related to $\CSmall$ and/or $\CFlow$, since it was last placed in its colored cluster. On the other hand, the cost to convert $A'$ into a singleton cluster is $|A'|$. Thus, by applying a charging argument, the total cost for Case 2 is bounded by $4 (\CSmall + \CFlow)$.
        
        \item The proof for Case 3 follows a similar approach to that of Case 2, but focuses on non-free processes instead of free processes. 
        Consider a piece $A$ that recently became a singleton cluster. By construction, $|F_A|\ge \frac{1}{2}|A|$. Suppose, that after some instance merge operations, the instance grows, and let $A'$ denote the expanded piece. 
        
         Assume $|F_A'| < \frac{1}{4} |A'|$, indicating that $A'$ migrates to its corresponding colored cluster. 
         Since $A'$ became a singleton cluster, at least $\frac{1}{2}|A'|$ non-free processes must have migrated to $A'$ due to non-monochromatic merges. 
         
         Therefore, the piece $A$ incurred at least $\frac{1}{2}|A'|$ cost related to $\CSmall$, since it became a singleton cluster. Meanwhile, the cost to migrate $A'$ to the colored cluster is $|A'|$. Thus, by a charging argument, the overall cost for Case 3 is bounded by $2\CSmall$.
    \end{itemize}
\end{itemize}
From the above cases, we conclude, that $\CMono$ is at most $\mcO(\CSmall + \CFlow)$.
\end{proof}

Finally, we are able to derive a bound on $\CostCluster$:
\begin{lemma}
$\CostCluster \le \mcO(\sqrt{k}) |\mathcal{P}|\enspace.$
\label{lemma:costCluster}
\end{lemma}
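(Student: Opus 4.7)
The plan is to combine the four bounds already established for the subcategories of clustering cost. By construction we have the decomposition
\[
\CostCluster = \CFlow + \CLarge + \CSmall + \CMono\enspace.
\]
First I would apply Lemma~\ref{lemma:costInst}, Lemma~\ref{lemma:costLarge}, and Lemma~\ref{lemma:costMerge} to conclude that each of the first three terms is bounded by $\mcO(\sqrt{k})\,|F|$. Then I would invoke Lemma~\ref{lemma:costMono}, which states $\CMono \le \mcO(\CSmall + \CFlow)$; combined with the previous two bounds this gives $\CMono \le \mcO(\sqrt{k})\,|F|$ as well. Summing the four contributions yields $\CostCluster \le \mcO(\sqrt{k})\,|F|$.

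To finish, I would convert the bound from $|F|$ to $|\mathcal{P}|$ using the identity between the total number of free processes and the total number of flow-augmenting paths. Lemma~\ref{lemma:flow_dual} establishes this proportionality inside a single active procedure, and the instance/component merge operation is defined so that both $F_C$ and $\mathcal{P}_C$ accumulate as disjoint unions of the corresponding quantities from the merged instances. Consequently the relation $|F| = \Theta(|\mathcal{P}|)$ propagates globally across all components in $\mathcal{C}_S \cup \mathcal{C}_L$. Substituting gives $\CostCluster \le \mcO(\sqrt{k})\,|\mathcal{P}|$, as required.

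The argument is essentially a bookkeeping step, since every structural insight has been absorbed into the preceding lemmas. The only subtlety worth emphasizing is that the proportionality between $|F|$ and $|\mathcal{P}|$ must be lifted from the per-procedure statement of Lemma~\ref{lemma:flow_dual} to the aggregate quantities maintained across all past and present procedures. This is immediate from the way instance merges and instance stops are specified, since neither operation creates or destroys any free process or any flow-augmenting path; they merely reassign ownership.
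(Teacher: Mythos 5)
Your proposal is correct and matches the paper's proof essentially verbatim: sum the four bounds from Lemmas~\ref{lemma:costInst}, \ref{lemma:costLarge}, \ref{lemma:costMerge}, and \ref{lemma:costMono} to get $\CostCluster \le \mcO(\sqrt{k})|F|$, then convert via the relation between $|F|$ and $|\mathcal{P}|$ from Lemma~\ref{lemma:flow_dual} together with the fact that instance merges preserve both quantities. The paper uses the exact identity $|F|=2|\mathcal{P}|$ where you write $|F|=\Theta(|\mathcal{P}|)$, which makes no difference inside the $\mcO$.
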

\begin{proof}
From Lemmas~\ref{lemma:costInst},\ref{lemma:costLarge},\ref{lemma:costMerge}, and \ref{lemma:costMono} we conclude that $\CostCluster \le \mcO(\sqrt{k}) |F|$. Since, by Lemma~\ref{lemma:flow_dual} and by construction of the instance/component merge operation, $|F|=2|\mathcal{P}|$, the result follows directly from these bounds.
\end{proof}

\subsection*{Scheduling Cost}
Next, we argue that the scheduling cost due to the scheduling algorithm is at most $\mcO(\CostCluster)$. The proof is almost entirely transferred from \cite{racke2023polylog}. For completeness, we present it again. 

\begin{lemma}[\cite{racke2023polylog}]
$\CostSchedule \le \mcO(\frac{1}{\epsilon}) \CostCluster$.
\label{lemma:costSchedule}
\end{lemma}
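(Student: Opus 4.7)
The plan is an amortized accounting that charges each rebalancing event at a server $s$ to the clustering moves which forced $s$ to become overloaded. The essential observation is a one-way property of the scheduler: it only relocates clusters onto servers whose current load is at most $k$, so once $\ell_s$ exceeds $k$ the scheduler can no longer raise $\ell_s$ and only clustering operations can. Hence for each rebalancing event at $s$ I would trace back to the most recent moment at which $\ell_s\le 2k$; throughout the intervening window $\ell_s$ stays above $2k>k$, the scheduler contributes nothing to $\ell_s$, and the load increase from $\le 2k$ to the triggering value $>(2+\epsilon)k$ is entirely due to clustering additions into $s$, i.e., at least $\epsilon k$ such moves. Since these windows are pairwise disjoint across events at the same server, and clustering moves into distinct servers are disjoint as well (each migration has a unique destination server), this yields $\#\text{events}\le \CostCluster/(\epsilon k)$.

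Next, I would bound the cost of a single rebalancing event at $s$ by $(L - 2k) + k$, where $L$ is the load at the moment the event is triggered: the evicted mass equals $L$ minus the final load (at most $2k$), and the final cluster can overshoot the remaining overload by at most $k$, because every cluster has size at most $k$. Since the load right before the current clustering step was at most $(2+\epsilon)k$, the quantity $L - 2k$ is at most $\epsilon k$ plus the number of clustering additions to $s$ during the triggering request.

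Summing over all events then yields
\[
\CostSchedule \;\le\; \sum_{\text{events}}\bigl((L-2k) + k\bigr) \;\le\; 2\,\CostCluster + k\cdot\#\text{events} \;\le\; 2\,\CostCluster + \tfrac{1}{\epsilon}\,\CostCluster \;=\; \mcO\!\bigl(\tfrac{1}{\epsilon}\bigr)\,\CostCluster,
\]
where the middle inequality uses the per-event overload bound together with the disjointness of windows to bound $\sum_{\text{events}}(L-2k)\le 2\,\CostCluster$, and the final inequality uses $\#\text{events}\le \CostCluster/(\epsilon k)$.

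The point I expect to verify most carefully is the one-way property underpinning the disjointness of the charging windows, together with the bookkeeping that every clustering migration has a unique destination server so that the per-server charges sum without double counting; once these are nailed down, the per-event cost bound and the amortized sum above combine mechanically to give the claimed $\mcO(1/\epsilon)$ ratio.
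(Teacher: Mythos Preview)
Your proposal is correct and follows essentially the same amortized charging argument as the paper: each rebalancing event at a server $s$ is charged to the at-least-$\epsilon k$ clustering migrations into $s$ since $s$ last had load at most $2k$, yielding $\#\text{events}\le \CostCluster/(\epsilon k)$ and hence $\CostSchedule=\mcO(\tfrac{1}{\epsilon})\CostCluster$. Your treatment is in fact more careful than the paper's brief proof in that you explicitly handle the possibility that a single request's clustering moves push the load $L$ well above $(2+\epsilon)k$, splitting the per-event cost into the $(L-2k)$ overshoot (absorbed directly by $\CostCluster$) and the $+k$ overshoot from the last evicted cluster; the paper simply writes ``at most $\mcO(k)$'' for the per-event cost.
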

\begin{proof}
Initially, all servers are balanced, i.e., each server holds exactly $k$ processes. After the scheduling algorithm ends, all servers have at most $2k$ processes assigned. Now, if at the beginning of the scheduling algorithm, there is some server $s$ with at least $2+\epsilon$ processes, this means that since the last execution of the scheduling algorithm on this server, at least $\epsilon k$ processes migrated there due to the clustering algorithm. Executing the scheduling algorithm on $s$ costs at most $\mcO(k)$. Thus, by a charging argument, the overall cost of the scheduling algorithm is at most $\mcO(\frac{1}{\epsilon}) \CostCluster$. 
\end{proof}

\subsection*{Overall Cost}
\begin{lemma}
The cost of the algorithm is at most $\mcO(\sqrt{k}) \OPT$. 
\label{lemma:cost_two_eps}
\end{lemma}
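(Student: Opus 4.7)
The plan is to assemble the final competitive ratio directly from the machinery built up in the preceding subsections, since most of the work has already been done. By design the algorithm never incurs any communication cost: whenever a request $\{u,v\}$ is issued, the $\texttt{CompMerge}$ operation ensures that $u$ and $v$ lie in the same connected component $C$, all pieces of $C$ are assembled into a single cluster (either a large cluster or one of the colored/singleton clusters managed by a $\FLOW$ instance on $C$), and the scheduling algorithm keeps each cluster on a single server. Consequently, the total cost is bounded by $\CostCluster + \CostSchedule$.

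The first step is to collapse the scheduling cost into the clustering cost via Lemma~\ref{lemma:costSchedule}, giving $\CostSchedule \le \mcO(1/\epsilon)\,\CostCluster$, so that the total cost is $\mcO(1/\epsilon)\,\CostCluster$, where the dependence on $\epsilon$ is absorbed into the $\mcO$-notation (as is done throughout the paper). The second step is to apply Lemma~\ref{lemma:costCluster}, which packages together the four bounds on $\CFlow$, $\CLarge$, $\CSmall$, and $\CMono$, yielding $\CostCluster \le \mcO(\sqrt{k})\,|\mathcal{P}|$, where $\mathcal{P}$ is the pool of flow-augmenting paths produced across all $\FLOW$ procedures.

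The third step is to relate $|\mathcal{P}|$ to $\OPT(\sigma)$ via LP duality. By Lemma~\ref{lemma:edge_disjoint_all} the paths in $\mathcal{P}$ are pairwise edge-disjoint in the time-expanded graph $\TG_\sigma$, so setting $f_P = 1$ for $P \in \mathcal{P}$ and $f_P=0$ otherwise yields a feasible solution to the dual LP of Figure~\ref{lp:lp1} of value $|\mathcal{P}|$. Weak duality (as stated in Lemma~\ref{lemma:edge_disjoint}) then gives $|\mathcal{P}| \le \OPT(\sigma)$. Chaining the three bounds produces $\CostCluster + \CostSchedule \le \mcO(\sqrt{k})\,\OPT(\sigma)$, which is exactly the claim.

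In terms of difficulty, there is no real obstacle left at this stage: the delicate arguments (the charging schemes for $\CSmall$ and $\CMono$, the invariant-preservation proofs for the $\FLOW$ procedure across instance merges, and the edge-disjointness of the global path collection $\mathcal{P}$) have all been discharged by the earlier lemmas. The only thing to watch is that when invoking Lemma~\ref{lemma:edge_disjoint} the path set $\mathcal{P}$ must indeed be a subset of $\calP_S$ — this is guaranteed because each path is created by a $\FLOW$ procedure as a simple path between two server nodes, and subsequent merges only take unions of such collections without altering individual paths.
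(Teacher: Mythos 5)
Your proposal is correct and follows exactly the same route as the paper's proof: bound the total cost by $\CostCluster + \CostSchedule$, apply Lemmas~\ref{lemma:costCluster} and \ref{lemma:costSchedule} to get $\mcO(\sqrt{k})\,|\mathcal{P}|$, and then use Lemma~\ref{lemma:edge_disjoint_all} together with Lemma~\ref{lemma:edge_disjoint} (weak duality) to conclude $|\mathcal{P}| \le \OPT(\sigma)$. Your extra check that the paths lie in $\calP_S$ is a sensible, if implicit in the paper, detail.
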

\begin{proof}
According to Lemmas~\ref{lemma:costCluster} and \ref{lemma:costSchedule}, the overall cost of the algorithm is at most 
\[
\CostCluster + \CostSchedule \le \mcO(\tfrac{1}{\epsilon}\sqrt{k}) |\mathcal{P}|. 
\]
According to Lemma~\ref{lemma:edge_disjoint_all}, the flow-augmenting paths in $\mathcal{P}$ are edge disjoint. 
Applying Lemma~\ref{lemma:edge_disjoint} yields the result.
\end{proof}

Applying Lemmas \ref{lemma:cost_two_eps} and \ref{lemma:feas_two_eps} we can finally state the theorem:
\UpperBoundsTwoEps*

\clearpage
\appendix

\section{Omitted Proofs}\label{appendix}

\subsection{Proof of Lemma~\ref{lemma:soda_cost}}\label{appendix_a}
\LemmaSodaCost*
\begin{proof}
Although not explicitly stated in \cite{soda21}, the lemma can be easily derived using the charging scheme outlined there.
We explain briefly, how the cost of the $\SODA$ algorithm in \cite{soda21} is allocated. 
Specifically, the cost is either assigned to processes, denoted by $\operatorname{charge}(v)$ for each process $v$, or as an extra charge $\operatorname{Extra}$.

We adapt the lemmas from \cite{soda21} to align with our notation:
\begin{itemize}
    \item Lemma 13: The total extra charge $\operatorname{Extra}$ is at most $\mcO(k \ell \log k)$.
    \item Lemma 14: The maximum process charge that a process $v$ can receive is at most $\mcO(\log k)$.
\end{itemize}
From these two lemmas, we conclude that the total cost of the $\SODA$ algorithm is bounded by 
\[
\SODA = \sum_{v} \operatorname{charge}(v) + \operatorname{Extra} \le \sum_{v} c \log k + c' k \ell \log k \le (c+c')k \ell \log k,
\]
for some constants $c$ and $c'$.
\end{proof}

\subsection{Charging Scheme for Lemma~\ref{lemma:costMerge}}\label{appendix_c}

Consider a set of processes $A$, with a subset $F_A \subseteq A$ representing the free processes in $A$. We define a charging scheme to distribute a charge among these free processes.

For each free process $v \in A$, let 
$\charge(v)$ denote the charge assigned to $v$.
The total charge for the set $A$ is the sum of the charges of all free processes in $A$:
\[
\charge(A) = \sum_{v \in F_A} \charge(v)\enspace.
\]

We then define the average charge distributed among the free processes in the set $A$ as
\[
\avg(A) = \frac{\charge(A)}{|F_A|} = \frac{ \sum_{v \in F_A} \charge(v)}{|F_A|}\enspace.
\]

Next, we introduce the free process fraction of a set $A$, denoted by $\ffv(A)$: 
\[
\ffv(A) = \frac{|F_A|}{|A|}\enspace.
\]

\noindent
Furthermore, for two disjoint sets $A$ and $B$, we define the following weighted averages:
\begin{equation*}
\savg_{A,B}(X,Y) = \tfrac{|A|}{|A|+|B|}X+\tfrac{|B|}{|A|+|B|}Y\text{~~~and~~~}
\favg_{A,B}(X,Y) = \tfrac{|F_A|}{|F_A|+|F_B|}X+\tfrac{|F_B|}{|F_A|+|F_B|}Y\enspace.
\end{equation*}
For simplicity, we may omit the subscripts $A,B$ when the sets are clear from the context.
Clearly, we have $\min\{X,Y\}\le\savg(X,Y)\le\max\{X,Y\}$ and
$\min\{X,Y\}\le\favg(X,Y)\le\max\{X,Y\}$ as this holds for any weighted average.
The following claim essentially states that for $\ffv(A)\le\ffv(B)$ the
function $\favg$ tends more towards the second parameter (when compared to $\savg$).
\begin{fact}
Let $X\ge Y$ and $\ffv(A)\le\ffv(B)$. Then
\begin{equation*}
\favg(X,Y) \le \savg(X,Y)\enspace.
\end{equation*}
\end{fact}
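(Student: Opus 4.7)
The plan is to reduce the two-variable inequality $\favg(X,Y)\le\savg(X,Y)$ to a single scalar sign condition. Since both $\favg$ and $\savg$ are convex combinations of $X$ and $Y$ (the coefficients in each sum to $1$), the difference $\favg(X,Y)-\savg(X,Y)$ must take the form $\alpha\cdot(X-Y)$ for some scalar $\alpha$ depending only on the cardinalities $|A|,|B|,|F_A|,|F_B|$. Concretely, I would collect the coefficients of $X$ and $Y$ to obtain
\[
\favg(X,Y)-\savg(X,Y)=\Bigl(\tfrac{|F_A|}{|F_A|+|F_B|}-\tfrac{|A|}{|A|+|B|}\Bigr)(X-Y)\enspace,
\]
so the claim is equivalent to $\alpha(X-Y)\le 0$.

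Next, I would simplify $\alpha$ over a common denominator, which yields
\[
\alpha=\frac{|F_A|\,|B|-|A|\,|F_B|}{(|F_A|+|F_B|)(|A|+|B|)}\enspace.
\]
The denominator is strictly positive (one may assume $A,B$ are nonempty and together contain at least one free process; otherwise the statement is degenerate or trivial). Hence the sign of $\alpha$ equals the sign of $|F_A|\,|B|-|A|\,|F_B|$, which is non-positive precisely when $|F_A|/|A|\le|F_B|/|B|$, i.e., exactly when $\ffv(A)\le\ffv(B)$. Combined with the hypothesis $X\ge Y$, we obtain $\alpha(X-Y)\le 0$, i.e., $\favg(X,Y)\le\savg(X,Y)$.

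I do not expect any real obstacle in this argument: it is a routine algebraic identity packaged as a lemma, and the main conceptual step is simply observing that the difference of two convex combinations of $(X,Y)$ factors through $X-Y$, with a coefficient whose sign is captured by comparing the cross-product $|F_A|\,|B|$ against $|A|\,|F_B|$ — precisely the comparison encoded by $\ffv(A)\le\ffv(B)$.
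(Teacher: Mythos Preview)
Your argument is correct and follows essentially the same approach as the paper: both proofs reduce the claim to comparing the weight on $Y$ (equivalently, the sign of the difference of the weights on $X$), and both conclude by showing $\tfrac{|F_A|}{|F_A|+|F_B|}\le\tfrac{|A|}{|A|+|B|}$ from $\ffv(A)\le\ffv(B)$. The only cosmetic difference is that the paper substitutes $|F_A|=|A|\,\ffv(A)$ and $|F_B|=|B|\,\ffv(B)$ to compare the weights directly, whereas you pass to a common denominator and compare the cross-products $|F_A|\,|B|$ and $|A|\,|F_B|$.
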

\begin{proof}
Both functions are weighted averages, i.e., functions of the form
$(1-\lambda)X+\lambda Y$. Therefore, it is sufficient to show that the
multiplier for the second parameter is larger in the function $\favg$. Observe
that
\begin{equation*}
\lambda_{\operatorname{free}}
=\tfrac{|F_B|}{|F_A|+|F_B|}
=\tfrac{|B|\cdot\ffv(B)}{|A|\cdot\ffv(A)+|B|\cdot\ffv(B)}
\ge\tfrac{|B|\cdot\ffv(B)}{|A|\cdot\ffv(B)+|B|\cdot\ffv(B)}
=\tfrac{|B|}{|A|+|B|}=\lambda_{\operatorname{size}}\enspace,
\end{equation*}
where the inequality holds because $\ffv(A)\le\ffv(B)$.
\end{proof}

\begin{fact}
Suppose that we merge two sets $A$ and $B$ into $A\cup B$. Then
\begin{equation*}
\avg(A\cup B) = \favg(\avg(A),\avg(B)) \text{~~~and~~~} \ffv(A\cup B) = \savg(\ffv(A),\ffv(B))\enspace.
\end{equation*}
\end{fact}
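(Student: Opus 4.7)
The plan is to prove both equalities by direct unfolding of the definitions and exploiting disjointness of $A$ and $B$ (implicit from the merge operation). Since the set $A\cup B$ is a disjoint union, I have $|A\cup B|=|A|+|B|$, $F_{A\cup B}=F_A\cup F_B$ with $|F_{A\cup B}|=|F_A|+|F_B|$, and $\charge(A\cup B)=\charge(A)+\charge(B)$ (the charge of a free process does not depend on which set contains it). These three identities are the only ingredients needed.

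For the first equality, I would expand
\[
\avg(A\cup B)=\frac{\charge(A\cup B)}{|F_{A\cup B}|}=\frac{\charge(A)+\charge(B)}{|F_A|+|F_B|}=\frac{\avg(A)|F_A|+\avg(B)|F_B|}{|F_A|+|F_B|},
\]
which matches the definition of $\favg(\avg(A),\avg(B))=\tfrac{|F_A|}{|F_A|+|F_B|}\avg(A)+\tfrac{|F_B|}{|F_A|+|F_B|}\avg(B)$ verbatim. For the second equality, I would expand
\[
\ffv(A\cup B)=\frac{|F_A|+|F_B|}{|A|+|B|}=\frac{|A|}{|A|+|B|}\cdot\frac{|F_A|}{|A|}+\frac{|B|}{|A|+|B|}\cdot\frac{|F_B|}{|B|},
\]
and recognize the right-hand side as $\savg(\ffv(A),\ffv(B))$.

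Both derivations are completely mechanical; there is no real obstacle. The only subtle point worth flagging is that the fact implicitly assumes the merge is of \emph{disjoint} sets, which is consistent with how the $\FLOW$ procedure constructs pieces (each process belongs to exactly one piece) and matches the usage in the preceding claim and in the proof of Lemma~\ref{lemma:chm}. I would state this disjointness explicitly at the start of the proof, then carry out the two three-line calculations above in sequence.
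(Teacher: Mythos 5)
Your proposal is correct and matches the paper's own argument: both proofs simply expand the definitions, use disjointness to write the total charge (resp.\ total number of free processes) of $A\cup B$ as the sum over $A$ and $B$, and divide by $|F_A|+|F_B|$ (resp.\ $|A|+|B|$) to recognize the weighted averages $\favg$ and $\savg$. The explicit remark about disjointness is a reasonable addition but does not change the substance.
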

\begin{proof}
The total charge in $A\cup B$ is $\avg(A)\cdot |F_A|+\avg(B)\cdot |F_B|$.
Dividing this by the number $|F_A|+|F_B|$ of free processes in $A\cup B$ gives
the first equality. The total number of free processes in $A\cup B$ is
$\ffv(A)\cdot |A|+\ffv(B)\cdot |B|$. Dividing this by the total number $|A|+|B|$ of
processes in $A\cup B$ gives the second equality.
\end{proof}
\begin{lemma}\label{lemma:charge_merge}
Suppose we are given two sets $A, B$ that fulfill
\begin{equation*}
\avg(A) \le c(1-\ffv(A))\text{~~~and~~~}\avg(B) \le c(1-\ffv(B))
\end{equation*}
for a constant $c > 0$.
Then, $\avg(A\cup B)\le c(1-\ffv(A\cup B))$.
\end{lemma}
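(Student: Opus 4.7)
}
The plan is to use the two identities from the preceding fact, namely
$\avg(A\cup B) = \favg(\avg(A),\avg(B))$ and $\ffv(A\cup B) = \savg(\ffv(A),\ffv(B))$,
together with the preceding fact comparing $\favg$ and $\savg$. First I would
rewrite the target inequality $\avg(A\cup B)\le c(1-\ffv(A\cup B))$ in the form
\begin{equation*}
\favg(\avg(A),\avg(B)) \le c - c\cdot\savg(\ffv(A),\ffv(B))
= \savg\bigl(c(1-\ffv(A)),\,c(1-\ffv(B))\bigr),
\end{equation*}
where the last equality holds because $\savg$ is a weighted average and therefore
linear in its arguments (constants pass through).

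Next I would exploit the hypothesis. By monotonicity of $\favg$ in each argument
(it is a convex combination with non-negative weights),
\begin{equation*}
\favg(\avg(A),\avg(B)) \le \favg\bigl(c(1-\ffv(A)),\,c(1-\ffv(B))\bigr).
\end{equation*}
It remains to compare $\favg$ and $\savg$ on the same pair
$(X,Y)=(c(1-\ffv(A)),\,c(1-\ffv(B)))$. Here is where the earlier fact kicks in:
assume w.l.o.g. $\ffv(A)\le\ffv(B)$; then $X\ge Y$, so by that fact
$\favg(X,Y)\le\savg(X,Y)$. Chaining this with the previous inequality yields
\begin{equation*}
\favg(\avg(A),\avg(B))
\le \savg\bigl(c(1-\ffv(A)),\,c(1-\ffv(B))\bigr)
= c\bigl(1-\ffv(A\cup B)\bigr),
\end{equation*}
which is exactly what we want.

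The proof is essentially a two-step composition, so there is no real obstacle once the rewriting is done. The only subtle step is recognising that the hypothesis $\avg(X)\le c(1-\ffv(X))$ couples the two variables in the ``right'' direction, so that the $\favg\le\savg$ comparison (which requires $X\ge Y$ precisely on the side with smaller free-fraction) applies without any case distinction beyond a WLOG swap of $A$ and $B$.
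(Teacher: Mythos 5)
Your proposal is correct, and it uses exactly the same two ingredients as the paper (the merge identities $\avg(A\cup B)=\favg(\avg(A),\avg(B))$, $\ffv(A\cup B)=\savg(\ffv(A),\ffv(B))$, and the fact that $\favg(X,Y)\le\savg(X,Y)$ when $X\ge Y$ and $\ffv(A)\le\ffv(B)$), but it composes them in a slightly different and cleaner order. The paper first tries to compare $\favg(\avg(A),\avg(B))$ with $\savg(\avg(A),\avg(B))$, which forces a case distinction on whether $\avg(A)\le\avg(B)$ or $\avg(A)\ge\avg(B)$, because the comparison fact needs the first argument to be the larger one on the side with smaller free-fraction. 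You instead apply monotonicity of $\favg$ first, replacing $(\avg(A),\avg(B))$ by the upper bounds $(c(1-\ffv(A)),c(1-\ffv(B)))$, which under the WLOG assumption $\ffv(A)\le\ffv(B)$ are automatically ordered as $X\ge Y$; the comparison fact then applies unconditionally and the case split disappears. Both proofs are equally rigorous; yours buys a shorter, case-free argument, while the paper's version handles the original averages directly at the cost of two symmetric cases.
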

\begin{proof}
W.l.o.g.\ assume that $\ffv(A)\le\ffv(B)$. We differentiate between two cases:
\begin{itemize}
\item $\avg(A)\le\avg(B)$
\begin{align*}
  \avg(A\cup B)&=   \favg(\avg(A),\avg(B))\\
                  &\le \avg(B)\\
                  &\le c(1-\ffv(B))\\
                  &\le c(1-\savg(\ffv(A),\ffv(B)))\\
                  &=   c(1-\ffv(A\cup B))
\end{align*}
\item $\avg(A)\ge\avg(B)$
\begin{align*}
  \avg(A\cup B)&=   \favg(\avg(A),\avg(B))\\
                  &\le \savg(\avg(A),\avg(B))\\
                  &\le \savg(c(1-\ffv(A)),c(1-\ffv(B)))\\
                  &=c(1-\savg(\ffv(A),\ffv(B)))\\
                  &= c(1-\ffv(A\cup B))
\end{align*}
\end{itemize}
\end{proof}

\begin{lemma}\label{lemma:charge_merge_cost}
Suppose we are given two sets $A, B$ with  $\ffv(A) \ge \frac{1}{4}$ and $\ffv(B) \le \frac{1}{16}$ that fulfill
\begin{equation*}
\avg(A) \le 64(1-\ffv(A))\text{~~~and~~~}\avg(B) \le 64(1-\ffv(B))\enspace.
\end{equation*}
Assume a merge of $A$ and $B$ is executed, where the free processes $F_A \cup F_B$ are charged with cost $\min(|A|, |B|)$.
Then, after the merge, $\avg(A\cup B)\le 64(1-\ffv(A\cup B))$.
\end{lemma}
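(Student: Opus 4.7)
The plan is to write the new average charge explicitly after the merge and reduce the desired inequality to a clean algebraic identity. Set $a=|A|$, $b=|B|$, $\alpha=\ffv(A)$, $\beta=\ffv(B)$, so $|F_A|=\alpha a$ and $|F_B|=\beta b$. The merge adds a cost of $\min(a,b)$ spread over the $\alpha a+\beta b$ free processes without creating or destroying free processes, so
$$\avg(A\cup B) = \favg(\avg(A),\avg(B)) + \frac{\min(a,b)}{\alpha a+\beta b}.$$
Using the hypotheses $\avg(A)\le 64(1-\alpha)$ and $\avg(B)\le 64(1-\beta)$ together with monotonicity of $\favg$ in each coordinate, the first term is bounded by $64\,\favg(1-\alpha,1-\beta)$. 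Since $\ffv(A\cup B)=\savg(\alpha,\beta)$, proving $\avg(A\cup B)\le 64(1-\ffv(A\cup B))$ reduces to
$$\frac{\min(a,b)}{\alpha a+\beta b}\;\le\;64\bigl[\savg(1-\alpha,1-\beta)-\favg(1-\alpha,1-\beta)\bigr].$$

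The key step is a closed-form identity for the right-hand side. Expanding $\savg$ and $\favg$ over the common denominator $(a+b)(\alpha a+\beta b)$ and collecting terms, I expect to obtain
$$\savg(1-\alpha,1-\beta)-\favg(1-\alpha,1-\beta) \;=\; \frac{ab(\alpha-\beta)^2}{(a+b)(\alpha a+\beta b)}.$$
Substituting this, the inequality reduces to $\min(a,b)(a+b)\le 64\,ab(\alpha-\beta)^2$, or equivalently
$$\frac{\min(a,b)}{\max(a,b)}+1\;\le\;64(\alpha-\beta)^2.$$

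Finally I verify both sides numerically. The left side is at most $2$, regardless of $a$ and $b$. The assumptions $\alpha\ge 1/4$ and $\beta\le 1/16$ force $\alpha-\beta\ge 3/16$, so $64(\alpha-\beta)^2 \ge 64\cdot 9/256 = 9/4$, and $2\le 9/4$ closes the argument.

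The main obstacle is the algebraic derivation of the closed-form gap $\savg-\favg$; it takes some careful bookkeeping to combine the two fractions and see that the numerator collapses to $ab(\alpha-\beta)^2$. Once that identity is in hand, everything else is a one-line monotonicity argument plus a numerical check. The slack between $2$ and $9/4$ shows that the specific thresholds $1/4$ and $1/16$ in the hypothesis and the constant $64$ in the invariant are essentially tight for this charging scheme.
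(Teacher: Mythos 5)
Your proposal is correct and follows essentially the same route as the paper's proof: the same decomposition $\avg(A\cup B)=\favg(\avg(A),\avg(B))+\min(|A|,|B|)/(|F_A|+|F_B|)$, the same closed-form identity for the gap (your $\frac{ab(\alpha-\beta)^2}{(a+b)(\alpha a+\beta b)}$ is exactly the paper's $\frac{(|B||F_A|-|A||F_B|)^2}{|A||B|(|A|+|B|)(|F_A|+|F_B|)}$ rewritten in the normalized variables), and the same final numerical check using $\alpha-\beta\ge 3/16$. Your normalization makes the last step slightly cleaner ($2\le 9/4$ directly, versus the paper's detour through the function $g$ and the bound $|A|+|B|\le 2\max(|A|,|B|)$), but the argument is the same.
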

\begin{proof}
\begin{align*}
    \avg(A\cup B)     &=\favg(\avg(A),\avg(B)) + \frac{\min(|A|, |B|)}{|F_A| + |F_B|}\\
                         &\le \favg(64(1-\ffv(A)),64(1-\ffv(B))) + \frac{\min(|A|, |B|)}{|F_A| + |F_B|}\\
                         &= 64(1 - \favg(\ffv(A),\ffv(B))) + \frac{\min(|A|, |B|)}{|F_A| + |F_B|} \\
\end{align*}
Furthermore, 
\begin{align*}
   \favg(\ffv(A),\ffv(B)) - \ffv(A \cup B) &= \frac{|F_A|^2}{|A|(|F_A| + |F_B|)} + \frac{|F_B|^2}{|B|(|F_A| + |F_B|)} - \frac{|F_A| + |F_B|}{|A| + |B|}\\
                            &= \frac{|F_A|^2|B| + |F_B|^2|A|}{|A||B|(|F_A| + |F_B|)} - \frac{|F_A| + |F_B|}{|A| + |B|}\\
                            &=\frac{(|B||F_A| - |A||F_B|)^2}{|A||B|(|A| + |B|)(|F_A|+|F_B|)} \\
\end{align*}

Let $g(x) = |B||F_A| - |A|x$. 
For $0 \le x < \frac{|F_A|}{|A|} |B|$, the function is positive and monotonically decreasing. Hence, for $x \le \frac{1}{16}|B| < \frac{|F_A|} {|A|} |B| $ the function attains its minimum at $x= \frac{1}{16}|B|$. Therefore, $g(\frac{1}{16}|B|) = |B||F_A| - \frac{1}{16}|A||B| \ge \frac{1}{4}|A||B| - \frac{1}{16}|A||B| = \frac{3}{16}|A||B| > 0$. Thus, $(g(x))^2 \ge \frac{9}{16^2}|A|^2|B|^2 > \frac{1}{32}|A|^2|B|^2$, for $x \le \frac{1}{16}|B|$. Hence,

\begin{align*}
\frac{(|B||F_A| - |A||F_B|)^2}{|A||B|(|A| + |B|)(|F_A|+|F_B|)} 
&\ge \frac{1}{32}\frac{|A|^2|B|^2}{|A||B|(|A| + |B|)(|F_A|+|F_B|)} \\
&= \frac{1}{32}\frac{|A||B|}{(|A| + |B|)(|F_A|+|F_B|)} \\
&\ge \frac{1}{64}\frac{\min(|A|,|B|)\cdot \max(|A||B|)}{\max(|A||B|)\cdot(|F_A|+|F_B|)} \\
&= \frac{1}{64}\frac{\min(|A|,|B|)}{|F_A|+|F_B|} \enspace.
\end{align*}
The second inequality holds because $|A| + |B| \le 2 \max(|A|, |B|)$. 
Therefore,
\begin{align*}
\avg(A\cup B) 
&\le 64(1 - \favg(\ffv(A),\ffv(B))) + \frac{\min(|A|,|B|)}{|F_A| + |F_B|}\\
&\le 64(1-\favg(\ffv(A),\ffv(B))) + 64 ( \favg(\ffv(A),\ffv(B)) - \ffv(A \cup B)) \\  
&=  64(1-\ffv(A\cup B)).
\end{align*}
\end{proof}

\section{Overview for Deterministic Online Algorithms}
The following table provides a summary of the current results in the context of the online balanced partitioning problem. The results are categorized across three different models: the General Model (GM), the Learning Model (LM), and the Generalized Learning Model (GLM). 

\begin{table}[h]
\begin{tabular}{|c|c|c|c|}
\hline
\textbf{Model}                                        & \textbf{Augmentation} & \textbf{Upper Bounds} & \textbf{Lower Bounds} \\ \hline \hline
\multirow{3}{*}{\textbf{GM}}          & $2 + \epsilon$  & $\mcO(k \log k)$      & $\Omega(k)$          \\ \cline{2-4} 
                                                 & $1 + \epsilon$  & $\mcO(k \ell \log k)$ & $\Omega(\max(k, \ell \log k))$ \\ \cline{2-4} 
                                                 & $1$             & $\mcO((k\ell)^2)$           & $\Omega(k\ell)$            \\ \hline \hline
\multirow{3}{*}{\textbf{LM}}         & $2 + \epsilon$  & $\mcO(\log k)$        & $\Omega(\log k)$      \\ \cline{2-4} 
                                                 & $1 + \epsilon$  & $\mcO(\ell\log k)$    & $\Omega(\ell\log k)$  \\ \cline{2-4} 
                                                 & $1$             & $\mcO(k\ell)$             & $\Omega(k\ell)$            \\ \hline \hline
\multirow{3}{*}{\textbf{GLM}} & $2 + \epsilon$  & $\mcO(\sqrt{k})$      & $\Omega(\sqrt{k})$    \\ \cline{2-4} 
                                                 & $1 + \epsilon$  & $\mcO(\max(\sqrt{k\ell \log k}, \ell \log k)$                      & $\Omega(\max(\sqrt{k\ell \log k}, \ell \log k)$                      \\ \cline{2-4} 
                                                 & $1$             & $\mcO((k\ell)^2)$           & $\Omega((k\ell))$           \\ \hline
\end{tabular}
\end{table}

\bibliographystyle{plain}
\bibliography{bibliography}

\end{document}